\newcommand*{\mailto}[1]{\href{mailto:#1}{\nolinkurl{#1}}}
\newcommand{\arxiv}[1]{\href{http://arxiv.org/abs/#1}{arXiv:#1}}
\date{\today}
\newcommand{\R}{\mathbb{R}}
\newcommand{\N}{\mathbb{N}}
\newcommand{\C}{\mathbb{C}}
\newcommand{\Z}{\mathbb{Z}}
\newcommand{\A}{\mathcal{A}}
\newcommand{\be}{\begin{equation}}
\newcommand{\ee}{\end{equation}}
\newcommand{\bea}{\begin{eqnarray}}
\newcommand{\eea}{\end{eqnarray}}
\newcommand{\x}{{\rm x}}
\newcommand{\w}{{\rm w}}
\newcommand{\fr}{\frac}
\newcommand{\D}{\partial}
\newcommand{\al}{\alpha}
\newcommand{\ti}{\tilde}
\newcommand{\abs}[1]{\left\vert#1\right\vert}
\newcommand{\spr}[2]{\langle #1 , #2 \rangle}
\numberwithin{equation}{section}
\def\idty{{\mathchoice {\rm 1\mskip-4mu l} {\rm 1\mskip-4mu l} %
{\rm 1\mskip-4.5mu l} {\rm 1\mskip-5mu l}}}
\newtheorem{thm}{Theorem}[section]
\newtheorem{lem}[thm]{Lemma}
\newtheorem{Example}[thm]{Example}
\renewcommand{\epsilon}{\varepsilon}
\begin{document}

\title{Lieb--Robinson Bounds for the Toda Lattice}

\author[U. Islambekov]{Umar Islambekov}
\address{Department of Mathematics\\ University of Arizona\\
617 N. Santa Rita Avenue\\ Tucson, AZ 85721\\ USA}
\email{\mailto{uislambekov@math.arizona.edu}}
\urladdr{\url{http://math.arizona.edu/~uislambekov/}}

\author[R. Sims]{Robert Sims}
\address{Department of Mathematics\\ University of Arizona\\
617 N. Santa Rita Avenue\\ Tucson, AZ 85721\\ USA}
\email{\mailto{rsims@math.arizona.edu}}
\urladdr{\url{http://math.arizona.edu/~rsims/}}

\author[G.\ Teschl]{Gerald Teschl}
\address{Faculty of Mathematics\\ University of Vienna\\
Nordbergstrasse 15\\ 1090 Wien\\ Austria\\ and International Erwin Schr\"odinger
Institute for Mathematical Physics\\ Boltzmanngasse 9\\ 1090 Wien\\ Austria}
\email{\mailto{Gerald.Teschl@univie.ac.at}}
\urladdr{\url{http://www.mat.univie.ac.at/~gerald/}}

\thanks{J. Stat. Phys. {\bf 148}, 440--479 (2012)}
\thanks{R.\ S.\ and U.\ I.\ were supported, in part, by NSF grants DMS-0757424 and DMS-1101345.
G.\ T.\ was supported by the Austrian Science Fund (FWF) under Grant No.\ Y330.}

\keywords{Toda lattice, Lieb--Robinson bound}
\subjclass[2000]{Primary 37K10, 37K55; Secondary 37K60, 37K45}

\begin{abstract}
We establish locality estimates, known as Lieb--Robinson bounds, for the Toda lattice. 
In contrast to harmonic models, the Lieb--Robinson velocity for these systems
do depend on the initial condition. Our results also apply to the entire Toda as 
well as the Kac-van Moerbeke hierarchy. Under suitable assumptions, our 
methods also yield a finite velocity for certain perturbations of these systems.  
\end{abstract}

\maketitle

%
%

\section{Introduction}

Analyzing the dynamics of certain non-relativistic systems is crucial to
understanding a number of important problems in statistical mechanics. 
For example, there is much interest in rigorously justifying the
emergence of macroscopic non-equilibrium phenomena, like Fourier's Law or
other forms of heat conduction, directly from a many-body Hamiltonian dynamics \cite{bonetto2000}.
Future progress on fundamental questions such as these will require detailed information on the
structure of the underlying dynamics. One of the goals of the present work is to investigate 
an approximate form of locality, often described by a Lieb--Robinson bound, for the 
dynamics corresponding to a class of integrable systems. 

Recognizing approximate forms of locality, despite the lack of a relativistic framework, has been 
essential in solving many intriguing open problems. One of the first mathematical formulations
of a useful locality estimate  was given by Lieb and Robinson in 1972 \cite{lieb1972}. In this work, they 
demonstrated that the dynamics corresponding to quantum spin systems, with e.g.\ finite-range interactions,
remains effectively confined to a "light" cone, up to corrections which decay at least exponentially 
away from the light cone.  Recently there have been a number of improvements and generalizations of
the original result \cite{nach12006, hast2006, nach22006, eisert2008, amour2009, nachtergaele2009, PS09, Pou10, schuch2010, nach2011}, 
and these new techniques have led to some interesting applications \cite{hastings2004a, hastings2007, NS2, nachtergaele2010, HS, bravyi:2010a, bravyi:2010b, bachmann2011, borovyk, hamza}.
For a review of these results, we refer the interested reader to \cite{NS3, hastings2010}.
 
Shortly after the original result of Lieb and Robinson, it was shown in \cite{marchioro1978}, see also \cite{butta2007, raz2009} for more recent
developments, that this notion of quasi-locality also applies to the dynamics of classical oscillator systems. 
Since it is this work that more closely pertains to the topic of the present article, we will discuss it briefly as follows.

Consider a system of particles confined to a large but finite set $\Lambda \subset \mathbb{Z}^d$. To each site
$x \in \Lambda$ associate a particle, or oscillator, with position $q_x \in \mathbb{R}$ and momentum $p_x \in \mathbb{R}$. 
The state of the system in $\Lambda$ is described by a sequence $\x = \{ (q_x, p_x) \}_{x \in \Lambda}$, and 
the set of all such sequences, $\mathcal{X}_{\Lambda}$, is called phase space.  A Hamiltonian, $H$, is a real-valued
function on phase space. Given a Hamiltonian and a sequence $\x \in \mathcal{X}_{\Lambda}$, Hamilton's equations of
motion are: for each $x \in \Lambda$,
\begin{equation} \label{eq:Hamseqns}
\dot{q}_x(t) = \frac{\partial H}{\partial p_x} \quad \mbox{and} \quad \dot{p}_x(t) = - \frac{\partial H}{\partial q_x}
\end{equation}
solved with initial condition $\{ (q_x(0), p_x(0) ) \}_{x \in \Lambda} = \x$. For many Hamiltonians, this system of
coupled differential equations can be solved for all time (and any initial condition). In this case, we denote by
$\Phi_t$, the Hamiltonian flow, i.e., the mapping that associates to initial conditions, the solution at time $t$:
$\Phi_t(\x) = \{ (q_x(t), p_x(t) ) \}_{x \in \Lambda}$. 

Measurements of the system under consideration correspond to observables, where an
observable $A$ is a complex-valued function on phase space. For example, the position 
of the particle at site $x \in \Lambda$ corresponds to an observable $Q_x$ for which
$Q_x(\x) = q_x$. Let us denote by $\mathcal{A}_{\Lambda}$ the set of all
observables over $\mathcal{X}_{\Lambda}$. Given a Hamiltonian for which 
\eqref{eq:Hamseqns} can be solved, the Hamiltonian dynamics $\alpha_t : \mathcal{A}_{\Lambda} \to \mathcal{A}_{\Lambda}$
is defined by $\alpha_t(A) = A \circ \Phi_t$. In this case, we can make time-dependent observations such as
$[\alpha_t(Q_x)](\x) = q_x(t)$.

It is well known that the Hamiltonian dynamics is generated by the Poisson bracket, i.e.,
\begin{equation}
\frac{d}{dt} \alpha_t(A) = \alpha_t \left( \left\{ A, H \right\} \right) = \left\{ \alpha_t(A), H \right\} \, ,
\end{equation}
and here the Poisson bracket of two observables is the observable
\begin{equation}
\left\{ A, B \right\} = \sum_{x \in \Lambda} \frac{\partial A}{\partial q_x} \cdot \frac{\partial B}{\partial p_x} - \frac{\partial A}{\partial p_x} \cdot \frac{\partial B}{\partial q_x}  \, .
\end{equation}
In terms of this Poisson bracket, we can now describe the Lieb--Robinson bound.

Let us fix finite sets $X,Y \subset \mathbb{Z}^d$ with $X \cap Y = \emptyset$. Take $\Lambda \subset \mathbb{Z}^d$
finite, but large enough so that $X \cup Y \subset \Lambda$.  Consider two observables $A, B \in \mathcal{A}_{\Lambda}$ 
with supports in $X, Y$ respectively, i.e., e.g.\ $A$ depends only on those $q_x$ and $p_x$ with $x \in X$. It is clear that
$\{ A, B \} =0$ since $A$ and $B$ have disjoint supports. A Hamiltonian $H$ satisfies a Lieb--Robinson bound if for
some initial condition $\x$ there exist numbers $\mu$, $C$, and $v$ for which
\begin{equation} \label{eq:genhamlrb}
\left| \left[ \left\{ \alpha_t(A), B \right\} \right] (\x) \right| \leq C e^{- \mu \left(d(X,Y) - v|t| \right)} \, ,
\end{equation} 
where $d(X,Y)$ is the distance between $X$ and $Y$. In words, this estimate shows that  for times $t$ with
$v|t| \leq d(X,Y)$ the support of $\alpha_t(A)$ remains essentially disjoint from the support of $B$, up to exponentially small
corrections in $d(X,Y)$. The number $v$ is often called the Lieb--Robinson velocity corresponding to $H$, and it represents a
bound on the rate at which disturbances can propagate through the system. 

We will now briefly discuss what is known concerning Lieb--Robinson bounds for classical systems. 
The works in \cite{marchioro1978, butta2007, raz2009} prove Lieb--Robinson type bounds for a variety of harmonic systems, and
it is shown that analogous results also hold for certain anharmonic perturbations. The strongest such result is in \cite{butta2007} and demonstrates that
for an anharmonic model with a quartic on-site perturbation the relevant Poisson bracket decays to zero whenever the distance between the
supports of the local observables grows faster than $t \log^{\alpha}(t)$ for suitable $\alpha >0$. In \cite{raz2009}, where explicit estimates on
the Lieb--Robinson velocity were obtained, it is shown that the number $v$ is independent of the initial condition $\x$. 
Since the system is linear, this is not so surprising.

Of course when it comes to anharmonic lattice oscillations one of the central objects is the famous
Fermi--Pasta--Ulam--Tsingou (FPU) problem. It was first demonstrated by Zabusky and Kruskal \cite{zakr} that one key ingredient
to resolve the FPU paradox is the relation with solitons. Moreover, subsequently Toda \cite{ta} presented an anharmonic
lattice which possess soliton solutions and was later on shown to be integrable by Flaschka \cite{Flaschka1,Flaschka2} and Manakov \cite{manakov1975}. Clearly this naturally raises the question about Lieb--Robinson type locality bounds for the Toda lattice and the main goal of this work is to establish such bounds. Our estimates
produce a Lieb--Robinson velocity that depends on the initial condition, see e.g.\ Theorem~\ref{thm:toda_lrb}. We discuss this
fact in the context of one-soliton solutions in Section~\ref{subsec:solitons}. For a restricted class of initial conditions, we can prove
the existence of a finite Lieb--Robinson velocity for a class of perturbations of the Toda lattice. We have two bounds of this
type. The first, in Section~\ref{subsec:direct}, establishes a result by directly mimicking the methods for the unperturbed system.  The next, in
Section~\ref{subsec:inter}, uses interpolation and shows that the velocity of the perturbed system can be estimated in terms of the velocity
of the unperturbed system. Analogous results are shown to also hold for solutions of the Toda hierarchy, see Section~\ref{sec:hth} and Section~\ref{sec:perthier}. For a much larger class
of initial conditions, we can prove a locality bound for perturbed systems, however, we do not establish the existence of a finite
Lieb--Robinson velocity. This is discussed in Section~\ref{sec:timedep}. The final section, Appendix~\ref{sec:app}, describes a set of perturbations and
corresponding initial conditions for which solutions remain globally bounded. This shows that the results obtained in
sections \ref{subsec:direct} and \ref{subsec:inter} are not vacuous.

%
%

\section{Locality Estimates for the Toda Lattice} \label{sec:loctoda}
In this section, we discuss the Toda Lattice and prove a Lieb--Robinson bound.
We introduce the model in Section~\ref{sec:toda}. The crucial solution estimate, 
Theorem~\ref{thm:toda_solest}, is contained in Section~\ref{subsec:solest} as well as some useful remarks. A 
Lieb--Robinson bound, see Theorem~\ref{thm:toda_lrb}, in terms of a large class of observables is
proven in Section~\ref{subsec:lrbtoda}. We end this section by comparing the velocity estimates in our Lieb--Robinson bound
to known results for one-soliton solutions of the Toda Lattice. This is done in Section~\ref{subsec:solitons}. 

\subsection{The Toda Lattice}\label{sec:toda}

The Toda Lattice is a well-studied physical model and one of the prototypical discrete integrable wave equations.
We refer to the monographs \cite{fad}, \cite{Te}, \cite{ta} or the review articles \cite{KruegerTeschl1}, \cite{Teschl1}
for further information. Like those discussed in the introduction, the Toda Lattice corresponds to a specific system of
coupled oscillators on $\mathbb{Z}$. We present this model immediately in the infinite volume setting.

For each site $n \in \Z$, we associate an ordered pair $(q_n, p_n) \in \mathbb{R}^2$ and denote by
$\mathcal{X}$ the set of all sequences  ${\rm x}=\{(q_n,p_n)\}_{n\in\Z}$. The system is said to be
comprised of an infinite collection of {\it oscillators}, each situated at a site $n \in \mathbb{Z}$ with position $q_n \in \R$ and 
momentum $p_n \in \R$. Here the state of the system is described by a sequence ${\rm x}=\{(q_n,p_n)\}_{n\in\Z} \in \mathcal{X}$. 

The oscillators evolve in time according to the following coupled system of differential equations. 
For each $n \in \mathbb{Z}$ and any $t \in \mathbb{R}$,
\be \label{eq:qp}
\dot{q}_n(t) = p_n(t)  \quad \mbox{and} \quad \dot{p}_n(t) = e^{-(q_n(t) - q_{n-1}(t))} - e^{-(q_{n+1}(t)-q_n(t))} \, ,
\ee
with initial condition given by some state of the system $\x \in \mathcal{X}$. 
We note that the differential system \eqref{eq:qp} corresponds to the following (formal) 
Hamiltonian $H: \mathcal{X}\rightarrow\R\cup\{\infty\}$ given by 
\be \label{eq:todahampq}
H( {\rm x}) = \sum_{n\in\Z} \fr{p_n^2}{2} +V(q_{n+1}-q_n),
\ee
where $V(r) = e^{-r} + r - 1$. For many sequences, the formal Hamiltonian may be infinite, however, it does formally generate
the system of differential equations in \eqref{eq:qp} through Hamilton's equations. For classes of solutions with prescribed decay
properties we refer to \cite{Teschl2}.

Existence and uniqueness of global solutions to \eqref{eq:qp} on certain subsets of $\mathcal{X}$ is well-known. 
Rather than address this directly, we begin by changing variables.
Suppose that \eqref{eq:qp} has a solution. For each $n \in \Z$ and $t \in \mathbb{R}$, set
\be \label{def:ab}
a_n(t) = \fr{1}{2} e^{-(q_{n+1}(t) -q_n(t))/2} \quad \mbox{and} \quad b_n(t) = - \fr{1}{2} p_n(t) \, ,
\ee
in terms of this given solution.
This choice is commonly referred to as Flaschka variables as they were initially introduced in \cite{Flaschka1}, see also \cite{Flaschka2}. 
Using \eqref{eq:qp}, it is clear that these new variables satisfy the following equations of
motion
\be \label{eq:toda}
\dot{a}_n(t) = a_n(t) \left( b_{n+1}(t) - b_n(t) \right) \quad \mbox{and} \quad 
\dot{b}_n(t) = 2 \left( a_n^2(t) -a_{n-1}^2(t) \right) \, ,
\ee
and correspond to the following formal Hamiltonian
\be \label{eq:todahamab}
H( {\rm x}) = \sum_{n\in\Z} 2b_n^2+ V(- \ln(4a_n^2)) = \sum_{n\in\Z} 2b_n^2+4a_n^2-2\ln{(2a_n)}-1.
\ee
In this work, we will concern ourselves mainly with the properties of the solutions of \eqref{eq:toda}.

Consider the vector space $M= \ell^{\infty}( \mathbb{Z}, \mathbb{R}) \times \ell^{\infty}( \mathbb{Z}, \mathbb{R})$ of pairs of
bounded, real-valued sequences.
We will write each ${\rm x} \in M$ as 
\be
{\rm x} = \left( \{ a_n \}_{n \in \mathbb{Z}} , \{b_n \}_{n \in \mathbb{Z}} \right) = \{ (a_n, b_n) \}_{n \in \mathbb{Z}} \, .
\ee
$M$ is a Banach space with respect to the norm
\be \label{eq:norm}
\| {\rm x } \|_{M} = \max(\|a\|_\infty,\|b\|_\infty), \qquad \|c\|_\infty = \sup_n|c_n| \, .
\ee
Given any initial condition $\x \in M$, global solvability of the system \eqref{eq:toda} is well-known
(see e.g.\ \cite{Te}, Theorem 12.6). 
We will denote the {\it Toda flow} on $M$ by $\Phi_t$, i.e., $\Phi_t : M \to M$ is the function
that associates an initial condition $\x \in M$ to the solution of \eqref{eq:toda} at time $t$:
$\Phi_t(\x) = \{ (a_n(t), b_n(t)) \}_{n \in \mathbb{Z}}$ with $\Phi_0(\x) =\x$.

\subsection{Estimating Toda solutions} \label{subsec:solest}
More is known about the solutions of \eqref{eq:toda} on $M$. In fact, let us fix $\x \in M$ and
introduce operators $L(\x)$, $P(\x): \ell^2(\mathbb{Z}) \rightarrow \ell^2(\mathbb{Z})$ by setting 
\be \label{eq:defL} 
[L(\x)f]_n = a_nf_{n+1} +a_{n-1}f_{n-1} +b_n f_{n} 
\ee 
and
\be \label{eq:defP}
[P(\x)f]_n = a_nf_{n+1} -a_{n-1}f_{n-1} \, .
\ee
For each $\x \in M$, the existence of global solutions imply that 
$L( \Phi_t(\x))$ and $P(\Phi_t(\x))$ are well-defined for all $t \in \mathbb{R}$.
Whenever the initial condition $\x$ is fixed, we will write $L(t) = L( \Phi_t(\x))$ and
$P(t) = P( \Phi_t(\x))$ to spare notation. Observe that for each $\x \in M$, $\Phi_t(\x) \in M$ 
and so $L(t)$ is a bounded self-adjoint operator with the operator norm satisfying
\be \label{eq:Lbd1}
\max\left(\|a(t)\|_\infty,\|b(t)\|_\infty\right)\leq \|L(t)\|_2 \leq 2 \|a(t)\|_\infty+\|b(t)\|_\infty.
\ee
Similarly $P(t)$ is a bounded skew-adjoint operator. 
Since $P(t)$ is also differentiable with respect to $t$, it generates a two-parameter family of unitary propagators $U(t,s)$ 
satisfying
\be \label{eq:uniprop}
\fr{d}{dt}U(t,s) = P(t) U(t,s)  \mbox{ with } U(t,t) = \idty \, ,
\ee
for all pairs $t,s \in \mathbb{R}$, see e.g.\ \cite{Te}, Theorem 12.4 for more details. 
Moreover, a short calculation shows that $P(t)$ and $L(t)$ are a Lax pair associated to \eqref{eq:toda}, i.e.,
\be\label{eq:Lax}
\fr{d}{dt} L(t) = [ P(t), L(t)] \, ,
\ee
and therefore, 
\be
L(t) = U(t,s)L(s)U(t,s)^{-1} \, .\nonumber
\ee
From this fact, each $\x \in M$ satisfies the a-priori estimate
\be \label{toda_unifest}
\| \Phi_t(\x) \|_{M}  \leq \| L(t) \|_2 = \| L(0) \|_2 \quad \mbox{ for all } t \in \mathbb{R}.
\ee
It is important to emphasize the fact that we work with real-valued solutions. In fact, 
while a local existence result can be established in the complex case, even
periodic complex initial conditions can blow up in finite time (see e.g.\ \cite{GHT} or \cite{GHMT}).

Our proof of the Lieb--Robinson bound for the Toda Lattice, see Theorem~\ref{thm:toda_lrb}, makes crucial use of 
the following estimate describing the sensitivity of solutions to changes in the initial condition.
\begin{thm} \label{thm:toda_solest} Let $\x \in M$ and $\mu >0$. There exists a number $v=v(\mu,\x)$
for which given any $n,m \in \mathbb{Z}$, the bound 
\be \label{toda_basest}
\max \left[ \abs{ \fr{\D}{ \D z} a_n(t)}, \abs{ \fr{\D}{ \D z} b_n(t)} \right] \leq 
\fr{8}{\sqrt{17}} \, e^{- \mu \left( |n-m| - v |t| \right)} \, ,
\ee
holds for all $t\in\R$ and each $z\in\{a_m,b_m\}$. In fact, one may take
\be \label{tvel}
v = \left(1+\sqrt{17}\right)\| L(0) \|_2\left(e^{\mu+1}+\fr{1}{\mu}\right).
\ee
\end{thm}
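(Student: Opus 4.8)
The plan is to differentiate the Toda system \eqref{eq:toda} with respect to one of the initial coordinates $z \in \{a_m, b_m\}$ and obtain a closed linear system of ODEs for the derivatives $\alpha_n(t) := \frac{\partial}{\partial z} a_n(t)$ and $\beta_n(t) := \frac{\partial}{\partial z} b_n(t)$. Differentiating \eqref{eq:toda} gives
\begin{align*}
\dot\alpha_n &= \alpha_n(b_{n+1}-b_n) + a_n(\beta_{n+1}-\beta_n),\\
\dot\beta_n &= 4a_n\alpha_n - 4a_{n-1}\alpha_{n-1},
\end{align*}
with initial data $\alpha_n(0) = \delta_{n,m}\,[\![z=a_m]\!]$, $\beta_n(0) = \delta_{n,m}\,[\![z=b_m]\!]$. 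First I would package $(\alpha,\beta)$ into a single $\ell^2$- or $\ell^\infty$-type vector and write the system as $\dot{w}(t) = \mathcal{K}(t) w(t)$ where $\mathcal{K}(t)$ is a time-dependent banded (tridiagonal-type) operator whose entries are controlled uniformly in $t$ by the a-priori bound \eqref{toda_unifest}: indeed $|a_n(t)|, |b_n(t)| \le \|L(0)\|_2$ for all $t$. The key structural point is that $\mathcal{K}(t)$ only couples neighbors, so propagating information from site $m$ to site $n$ requires passing through at least $|n-m|$ applications of $\mathcal{K}$.

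Next I would run the standard Lieb--Robinson iteration: write the solution via the Dyson/Duhamel series
\[
w_n(t) = w_n(0) + \int_0^t [\mathcal{K}(s) w(s)]_n\, ds,
\]
and iterate, tracking at each order the number of ``hops'' taken. Because $w(0)$ is supported at site $m$ and $\mathcal{K}$ is nearest-neighbor, the term of order $k$ in the expansion vanishes unless $k \ge |n-m|$, and each factor of $\mathcal{K}$ contributes a bounded constant (something like $(1+\sqrt{17})\|L(0)\|_2$ from \eqref{eq:Lbd1} and the coefficient $4$). Summing the tail of the series from $k = |n-m|$ onward gives a bound of the shape $C\,\frac{(c|t|)^{|n-m|}}{|n-m|!}$, and then one optimizes: multiplying and dividing by $e^{\mu|n-m|}$ and using $\frac{(c|t|)^k}{k!} \le e^{c|t|}$-type estimates (more precisely $k! \ge (k/e)^k$, so $\frac{(c|t|)^k}{k!} e^{\mu k} \le (e\,c|t|\,e^{\mu}/k)^k \cdot (\text{stuff})$, summed geometrically) converts the factorial decay into the exponential form $e^{-\mu(|n-m| - v|t|)}$ with $v$ of the form $\text{const}\cdot\|L(0)\|_2(e^{\mu+1} + \mu^{-1})$. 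Getting the precise constants $\frac{8}{\sqrt{17}}$ and the exact expression \eqref{tvel} is just bookkeeping in this optimization, choosing where to split the sum and bounding the geometric remainder.

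The main obstacle I anticipate is \emph{not} the combinatorics of the hop-counting but rather justifying that the Dyson series converges and represents the true solution in the relevant sequence space, and that differentiation with respect to $z$ commutes with the flow — i.e., that $\alpha_n(t), \beta_n(t)$ genuinely exist and solve the linearized system. This requires knowing the Toda flow is differentiable in the initial condition on $M$, which should follow from the smooth (indeed analytic) dependence of solutions of the Lax-pair ODE on parameters, using the uniform bound \eqref{toda_unifest} to keep everything in a fixed ball of $M$ for all time; one can alternatively set up the whole argument at the level of the variational equation directly and verify a posteriori that its solution is the derivative. A secondary technical point is handling the $b$-equation, whose right-hand side $4(a_n\alpha_n - a_{n-1}\alpha_{n-1})$ has no $\beta$ on the right, so the natural ``light-cone'' variable must account for the fact that a perturbation of $b_m$ reaches $b_n$ only after first travelling through the $a$-variables; keeping the vector $(\alpha,\beta)$ coupled as a single object, rather than treating the two sequences separately, is what makes the banded-operator picture clean, and the factor $\sqrt{17}$ (note $\max(1,2)^2 + \dots$, or rather the bound $2\|a\|_\infty + \|b\|_\infty \le \sqrt{5}\,\|L\|_2$ type inequalities together with the $4$) will emerge from optimizing the relative weighting of the $a$- and $b$-components in the norm.
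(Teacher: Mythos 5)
Your proposal follows essentially the same route as the paper's proof: differentiate the equations of motion to get the nearest-neighbor variational system, bound its coefficients uniformly in $t$ via the Lax-pair conservation $\|L(t)\|_2=\|L(0)\|_2$, iterate the Duhamel formula so that only orders $k\ge |n-m|$ contribute, and convert the factorial tail into the light-cone form by the Stirling-type case splitting, yielding $v=(1+\sqrt{17})\|L(0)\|_2\left(e^{\mu+1}+\frac{1}{\mu}\right)$; the differentiability in the initial data is handled, as you suggest, by citing smooth dependence of the flow (Abraham--Marsden--Ratiu). The only refinement is that the constants $\frac{8}{\sqrt{17}}$ and $1+\sqrt{17}$ come not from a norm-comparison inequality but from diagonalizing the $2\times 2$ one-step hop matrix with rows $(2,2)$ and $(8,0)$ (eigenvalues $1\pm\sqrt{17}$) and expanding the initial vector in its eigenvectors, which is precisely the optimal ``relative weighting of the $a$- and $b$-components'' you anticipated.
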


\begin{proof}
Fix $\x \in M$. Global existence of solutions on $M$ guarantees that for each $\x \in M$ and $n \in \mathbb{Z}$, the
function $F_n : \mathbb{R} \to \mathbb{R}^2$ given by
\be
F_n(t; \x)=\begin{pmatrix} a_n(t) \\ b_n(t) \end{pmatrix} \,
\ee
is well-defined. It is differentiable with respect to each $z \in \{ a_m, b_m \}$, e.g.\ as a consequence of Lemma 4.1.9 in \cite{Abraham}.
When convenient, we will suppress the dependence of $F_n$ on $\x$.
Using the equations of motion, i.e.\ \eqref{eq:toda}, it is clear that
\be \label{eq:flowatn}
F_n(t)= F_n(0)+\int_0^t \begin{pmatrix}  a_n(s) \left( b_{n+1}(s) - b_n(s) \right) \\  2 \left( a_n^2(s) -a_{n-1}^2(s) \right) \end{pmatrix} \, ds \, .
\ee
The relation
\be \label{eq:dfnz}
\fr{\D}{ \D z} F_n(t) = \fr{\D}{ \D z} F_n(0) +\sum_{|e|\leq 1}\int_0^t D_{n,e}(s) \fr{\D}{ \D z} F_{n+e}(s) \, ds,
\ee
with 
\be\label{matrix_D_toda}
D_{n,e}(s) = \begin{pmatrix} \left(b_{n+1}(s) -b_n(s) \right)  \delta_0(e) & a_n(s)(-\delta_0(e)+\delta_{1}(e)) \\  4(a_n(s)\delta_0(e) - a_{n-1}(s) \delta_{-1}(e)) & 0  \end{pmatrix} \, .
\ee
then follows immediately from \eqref{eq:flowatn}.

To complete our estimate, we introduce the following notation. For each $v \in \mathbb{R}^2$, we will denote by
\begin{equation}
v = \begin{pmatrix} x \\ y \end{pmatrix} \quad \mbox{and} \quad |v| = \begin{pmatrix} |x| \\ |y| \end{pmatrix} \, .
\end{equation}
Moreover, we will write 
\begin{equation}
\left| \begin{pmatrix} x \\ y \end{pmatrix} \right| \leq \left| \begin{pmatrix} u \\ v \end{pmatrix} \right| \quad \mbox{if and only if} \quad |x| \leq |u| \quad \mbox{and} \quad |y| \leq |v|.
\end{equation}
With this understanding, the uniform solution estimate \eqref{toda_unifest}, and \eqref{matrix_D_toda}, it is clear that
\be \label{eq:1bd}
\left|\fr{\D}{ \D z} F_n(t)\right| \leq \left|\fr{\D}{ \D z} F_n(0)\right| + \| L(0) \|_2 \sum_{|e|\leq 1}\int_0^{|t|} D_{e} \left|\fr{\D}{ \D z} F_{n+e}(s)\right| ds,
\ee
where 
\be \label{eq:matDe}
D_e =  \begin{pmatrix} 2\delta_0(e) & \delta_0(e)+\delta_{1}(e) \\  4(\delta_0(e) + \delta_{-1}(e)) & 0  \end{pmatrix} \, .
\ee
Let us now consider the case that $z = a_m$, i.e.,
\be
\frac{ \partial}{\partial a_m}F_n(0) = \begin{pmatrix} 1 \\ 0 \end{pmatrix} \delta_m(n) \, .
\ee
In this case, iteration of \eqref{eq:1bd} yields
\bea \label{eq:bd2}
\left|\fr{\D}{ \D a_m} F_n(t)\right| &\leq& \sum_{k=0}^\infty  \fr{(\| L(0) \|_2 |t|)^k}{k!} \sum_{|e_1|\leq 1} \cdots \sum_{|e_k| \leq 1} \delta_{m+e_1+\cdots + e_k}(n) D_{e_1}\cdots D_{e_k} \begin{pmatrix} 1 \\ 0 \end{pmatrix} \\ \nonumber
&\le&\sum_{k=|n-m|}^\infty \fr{(\| L(0) \|_2 |t|)^k}{k!} D^k \begin{pmatrix} 1 \\ 0 \end{pmatrix} \, ,
\eea
where we have set
\be
D = \sum_{|e| \leq 1} D_e =  \begin{pmatrix} 2 & 2 \\  8 & 0  \end{pmatrix}.
\ee
Moreover, note that the remainder term in finite iterations of \eqref{eq:1bd} converges to zero
since $\fr{\D}{ \D z} F_n(t)$ is continuous and thus bounded on compact time intervals.

The eigenvalues of $D$ are $\lambda_{\pm} = 1 \pm \sqrt{17}$, and in terms of the eigenvectors $v_{\pm}$ given by
\be
v_{\pm} = \begin{pmatrix} \lambda_{\pm} \\ 8 \end{pmatrix} \quad \mbox{it is clear that} \quad \begin{pmatrix} 1 \\ 0 \end{pmatrix} = \frac{1}{2 \sqrt{17}}v_+ - \frac{1}{2 \sqrt{17}}v_- \, .
\ee
Taking the infinity norm, in $\mathbb{R}^2$, of both sides of \eqref{eq:bd2} shows that
\be \label{toda_L0est}
\left\| \fr{\D}{\D a_m}F_n(t) \right\|_{\infty} \leq \frac{8}{\sqrt{17}} \sum_{k=|n-m|}^\infty\fr{\left( \lambda_+ \| L(0) \|_2 |t|\right)^k}{k!}
\ee

Now, let $\mu>0$ be fixed and set $c = \lambda_+ \| L(0) \|_2$. If $c|t| \leq |n-m| e^{-(\mu+1)}$, then by Stirling
\be \label{eq:stir}
\sum_{k=|n-m|}^{\infty}\fr{(c|t|)^k}{k!} \leq \fr{(c|t|)^{|n-m|}}{(n-m)!} e^{c|t|}
\leq \left(\fr{c|t|}{|n-m|}\right)^{|n-m|}e^{|n-m|}e^{c|t|} \leq e^{-\mu(|n-m|-\fr{c}{\mu}|t|)} \, .
\ee
Otherwise $0 \leq - \mu(|n-m| - ce^{\mu +1} |t|)$, and so by \eqref{toda_L0est} we have
\be
\left\| \fr{\D}{\D a_m}F_n(t) \right\|_{\infty} \leq \frac{8}{\sqrt{17}} e^{c|t|} \leq \frac{8}{\sqrt{17}} e^{-\mu(|n-m|-c(e^{\mu+1}+\fr{1}{\mu})|t|)} \, .
\ee

In the case that $z = b_m$, it is clear that
\be
\begin{pmatrix} 0 \\ 1 \end{pmatrix} = - \frac{ \lambda_-}{8(\lambda_+ - \lambda_-)}v_+ +  \frac{\lambda_+}{8(\lambda_+-\lambda_-)}v_- \, .
\ee 
A short calculation leads to a prefactor of $\frac{\lambda_+}{ \sqrt{17}}$ in the analogue of \eqref{toda_L0est}, and this is less than the one above.
We have proven the result.
\end{proof}

\noindent {\bf Remarks:} 
\vspace{.3cm}
\begin{enumerate}[1.]
\item The number $v$ in our estimate \eqref{toda_basest} depends on the initial condition $\x$ only through 
the quantity $\| L(0)\|_2$. In fact, one can replace $\| L(0) \|_2$ with any uniform estimate on 
the solutions (uniform in $n$ and $t$) as is clear in the bound from \eqref{eq:dfnz} to \eqref{eq:1bd}.
\item If $|n-m|>0$, the left hand side of \eqref{toda_basest} is zero when $t=0$. 
The estimate in \eqref{toda_basest} then shows that the solutions of the Toda Lattice
at site $n$ are insensitive (i.e.\ exponentially small) to changes in the initial condition 
at site $m$ for times $t$ satisfying $v|t| \leq |n-m|$. For this reason, the number
$v$ is often called a bound on the velocity (i.e.\ rate) at which disturbances propagate
through the Toda Lattice. Moreover, if $|n-m| \geq 1$, then the bound in \eqref{toda_L0est}
is at most linear in $|t|$, for small $t$.
\item In general, the fact that solutions of the Toda Lattice have 
partial derivatives that decay exponentially in $|n-m|$ will be more important
than any particular decay rate $\mu$. For this reason, given an
initial condition ${\rm x} \in M$, one can optimize the quantity $v$ over all possible $\mu>0$.
The graph of $f(\mu)=e^{\mu+1}+\fr{1}{\mu}$ is given below.
\begin{center}
\begin{picture}(6.5,3.9)

\put(-0.154,0.){\vector(1,0){6.4}}
\put(-0.154,0){\vector(0,1){3.857}}
\put(6.3,-0.1){$\mu$}
\put(0.2,3.7){$f(\mu)=e^{\mu+1}+\fr{1}{\mu}$}

\put(1.47,1.05){\circle*{0.1}}
\put(1.47,0.7){$(\mu_0,f(\mu_0))$}

\put(2.92,-0.05){\line(0,1){0.1}}
\put(6,-0.05){\line(0,1){0.1}}
\put(-0.20,1.62){\line(1,0){0.1}}
\put(-0.20,3.24){\line(1,0){0.1}}

\curve(0.,3.708, 0.15,2.13, 0.3,1.611, 0.45,1.363, 0.6,1.226,
0.75,1.144, 0.9,1.095, 1.05,1.067, 1.2,1.053, 1.35,1.051, 1.5,1.057,
1.65,1.069, 1.8,1.088, 1.95,1.111, 2.1,1.139, 2.25,1.171, 2.4,1.207,
2.55,1.246, 3.15,1.442, 3.45,1.561, 3.75,1.696, 4.05,1.848,
4.5,2.108, 4.8,2.307, 5.1,2.527, 5.4,2.771, 5.7,3.041, 6.,3.34)
\end{picture}
\end{center}
It is clear that the optimal $\mu$ is achieved when 
\be \label{eq:lamw}
\fr{d}{d\mu}\left(e^{\mu+1}+\fr{1}{\mu}\right)=0 \, 
\ee
and moreover, this value is independent of the initial condition $\x$.
The unique solution of \eqref{eq:lamw} can be expressed in terms of the Lambert $W$-function (\cite[\S4.13]{dlmf})
and is given by
\be
\mu_0= 2 W(1/(2\sqrt{e})) \approx 0.47767 \quad \mbox{and} \quad f(\mu_0) \approx 6.47622 \, .
\ee
\end{enumerate}
\subsection{A Lieb--Robinson bound for Toda}  \label{subsec:lrbtoda}
We can now formulate a Lieb--Robinson bound for the Toda Lattice. In general, an {\it observable} 
$A$ is a function $A : M \to \C$. Let us denote by $\A$ the set of all observables with
well-defined first order partial derivatives. We will say that an observable $A$ is {\it pointwise bounded} if
\be\label{defAinf}
\| A \|_{\x} = \sup_{t\in\R} |A(\Phi_t(\x))| < \infty \quad \forall\, \x\in M\, .
\ee
Take $\mathcal{A}^{(1)}$ to be the set of those observables in $\A$ for which all of the
first order partial derivatives are bounded and
\be
\| A \|_{1, \x} = \sum_{n \in \mathbb{Z}}\left( \left\|\fr{\D A}{\D a_n} \right\|_\x +  \left\|\fr{\D A}{\D b_n} \right\|_\x \right)< \infty
\quad \forall\, \x\in M\, .
\ee
An observable $A \in \A$ is said to be {\it supported} on $X \subset \mathbb{Z}$ if the observables
$\fr{ \D A}{ \D a_n}$ and $\fr{ \D A}{ \D b_n}$ are identically zero for all $n \in \mathbb{Z} \setminus X$.
The {\it support} of an observable $A$ is the minimal set on which $A$ is supported, and we will
denote this set by ${\rm supp}(A)$. Set $\A_0 \subset \A$ to be the set of all observables with
compact support. For any $t \in \mathbb{R}$ and $A \in \A$, the {\it Toda dynamics}, which we denote by
$\alpha_t$, is the observable-valued mapping given by
\be
\alpha_t(A) = A \circ \Phi_t \, ,
\ee
where $\Phi_t$ is the Toda flow described at the end of Section~\ref{sec:toda} above. In other words, for each ${\rm x} \in M$, $t \in \mathbb{R}$, and $A \in \A$,
\be
[ \alpha_t(A) ]( {\rm x}) = A( \Phi_t({\rm x}) ) \, ,
\ee  
which, since solutions are global on $M$, is a well-defined quantity.
\begin{Example} \label{ex:1} The two most basic observables correspond to evaluation maps, i.e.,
to each $n \in \mathbb{Z}$, we associate the functions $A_n$ and $B_n$ given by  
\be
A_n( {\rm x}) = a_n \quad \mbox{and} \quad B_n({\rm x}) = b_n \quad \mbox{for each} \quad {\rm x} \in M \, .
\ee
Each of these observables have support $X = \{ n\}$. Moreover, the dynamically evolved observables
\be
[ \alpha_t(A_n) ]({\rm x}) = a_n(t) \quad \mbox{and} \quad [ \alpha_t(B_n) ]({\rm x}) = b_n(t)
\ee
correspond to observations over time.
\end{Example}

The {\it modified Poisson bracket} of two observables $A$ and $B$ is formally defined as 
the observable
\be \label{eq:mpb}
\left[ \{A,B\} \right] (\x)  = \fr{1}{4}\sum_{n\in\Z} a_n \cdot \left[ \fr{\D A}{\D a_n}  \cdot \fr{\D B}{\D \tilde{b}_n} -\fr{\D  A}{\D \tilde{b}_n} \cdot \fr{\D B}{\D a_n} \right](\x)  
\ee
where we have denote by $\fr{\D}{\D \tilde{b}_n} = \fr{\D}{\D b_{n+1}} - \fr{\D}{\D b_n}$. 
If, for example, either $A$ or $B$ has compact support, then the corresponding modified Poisson bracket
is a well-defined observable. This quantity is of particular interest since it generates the Toda dynamics, i.e.,
for any $A \in \A_0$ 
\be\label{eq:eompobr}
\fr{d}{dt} \alpha_t(A) = \alpha_t \left( \left\{ A, H \right\} \right) = \left\{ \alpha_t(A), H \right\} \, ,
\ee
where $H$ is the Hamiltonian \eqref{eq:todahamab}.
Now, for any $n, m \in \mathbb{Z}$, $t \in \mathbb{R}$, and ${\rm x} \in M$, it is easy to 
see that
\be
\left\{ \alpha_t(A_n), B_m \right\}({\rm x}) = \fr{1}{4} a_{m-1} \fr{\D }{\D a_{m-1}}a_n(t) - \fr{1}{4} a_{m} \fr{\D }{\D a_{m}}a_n(t) \, 
\ee
using the basic observables defined in Example~\ref{ex:1}. In this simple case, Theorem~\ref{thm:toda_solest} gives
the bound
\begin{eqnarray}
\abs{ \left\{ \alpha_t(A_n), B_m \right\}({\rm x})} & \leq & \fr{2 \| a \|_{\infty}}{ \sqrt{17}}  \left( e^{- \mu (|n-m+1| - v|t|)} +  e^{- \mu (|n-m| - v|t|)} \right) \nonumber \\
&  \leq &\fr{2 \| a \|_{\infty}}{\sqrt{17}}  (1+ e^{\mu}) e^{- \mu(|n-m|-v|t|)} \, .
\end{eqnarray}
In general, we have the following Lieb--Robinson bound for the Toda Lattice.
\begin{thm} \label{thm:toda_lrb} Let $\x \in M$ and $\mu >0$. There exist numbers $C$ and $v$ for which given 
any observables $A, B \in \A^{(1)}$, the estimate 
\be \label{ineq:toda_gentbd}
|\{\alpha_t(A),B\}({\rm x})| \leq C\, \| a \|_{\infty} \sum_{n, m \in \mathbb{Z}} \left( \left\| \fr{\D A }{ \D a_m}  \right\|_\x + \left\| \fr{\D A}{ \D b_m}  \right\|_\x \right) \left( \left\| \fr{\D B}{ \D a_n}  \right\|_\x + \left\| \fr{\D B}{ \D b_n}  \right\|_\x \right) e^{- \mu \left( |n-m| -v|t| \right)} 
\ee
holds for all $t\in\R$. Here  $C = \frac{2}{\sqrt{17}}(1+e^{\mu}) $ and $v$ is as in Theorem~\ref{thm:toda_solest}.
\end{thm}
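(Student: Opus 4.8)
The plan is to reduce the bound to Theorem~\ref{thm:toda_solest} via the chain rule, so that only that estimate and the triangle inequality do real work. Fix $\x\in M$ and $\mu>0$, and let $v=v(\mu,\x)$ be given by \eqref{tvel} (note that it does not depend on the sites involved). For $A\in\A^{(1)}$ the evolved observable $\alpha_t(A)=A\circ\Phi_t$ has well-defined partial derivatives, and differentiating the composition gives, for $z\in\{a_n,b_n\}$,
\[
\fr{\D}{\D z}[\alpha_t(A)](\x)=\sum_{k\in\Z}\left(\fr{\D A}{\D a_k}(\Phi_t(\x))\,\fr{\D a_k(t)}{\D z}+\fr{\D A}{\D b_k}(\Phi_t(\x))\,\fr{\D b_k(t)}{\D z}\right).
\]
First I would bound $\abs{\fr{\D A}{\D a_k}(\Phi_t(\x))}\le\|\fr{\D A}{\D a_k}\|_\x$, and likewise for $b_k$, straight from the definition of $\|\cdot\|_\x$, and then apply Theorem~\ref{thm:toda_solest} to each solution derivative $\fr{\D a_k(t)}{\D z},\fr{\D b_k(t)}{\D z}$ (with the pair of sites $k$ and $n$). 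This produces, for $z\in\{a_n,b_n\}$,
\[
\abs{\fr{\D}{\D z}[\alpha_t(A)](\x)}\le\fr{8}{\sqrt{17}}\sum_{k\in\Z}\left(\left\|\fr{\D A}{\D a_k}\right\|_\x+\left\|\fr{\D A}{\D b_k}\right\|_\x\right)e^{-\mu(|k-n|-v|t|)},
\]
the sums being absolutely convergent because $A\in\A^{(1)}$.

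Next I would feed this into the modified Poisson bracket \eqref{eq:mpb}: estimate $|a_n|\le\|a\|_\infty$, expand $\fr{\D}{\D\tilde{b}_n}=\fr{\D}{\D b_{n+1}}-\fr{\D}{\D b_n}$ wherever it appears (acting on $\alpha_t(A)$ in one term of \eqref{eq:mpb}, on $B$ in the other), and bound the undifferentiated $B$-derivatives by their $\|\cdot\|_\x$-norms. Collecting the pieces leaves two double sums over $n,m$ carrying shifts $n\mapsto n\pm1$; since $|k-(n\pm1)|\ge|k-n|-1$, each shift costs only a factor $e^\mu$, and reindexing turns this into a single overall factor $1+e^\mu$. (These double sums converge because $e^{-\mu|n-m|}\le1$ reduces them to a multiple of $\|A\|_{1,\x}\|B\|_{1,\x}$, which also shows $\{\alpha_t(A),B\}(\x)$ is well-defined for $A,B\in\A^{(1)}$.) The delicate bookkeeping step is that the first term of \eqref{eq:mpb} winds up carrying $\|\fr{\D B}{\D b_n}\|_\x$ while the second carries $\|\fr{\D B}{\D a_n}\|_\x$, both against the \emph{same} kernel $e^{-\mu(|n-m|-v|t|)}$; adding the two terms reconstitutes the sum $\|\fr{\D B}{\D a_n}\|_\x+\|\fr{\D B}{\D b_n}\|_\x$ of \eqref{ineq:toda_gentbd}, and the prefactor collapses to $\fr{1}{4}\cdot\fr{8}{\sqrt{17}}(1+e^\mu)=\fr{2}{\sqrt{17}}(1+e^\mu)=C$. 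This is simply the special-case computation for $\{\alpha_t(A_n),B_m\}$ carried out in full generality.

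I expect the only genuinely delicate point to be the first step: justifying that $\alpha_t(A)$ is differentiable and that $\fr{\D}{\D z}$ may be interchanged with the infinite sum over $k$ in the chain rule. This relies on the differentiable dependence of the Toda flow on its initial data (already used in the proof of Theorem~\ref{thm:toda_solest}, via \cite{Abraham}), on $A\in\A^{(1)}$, and on the uniform-in-$k$ bound $\abs{\fr{\D a_k(t)}{\D z}},\abs{\fr{\D b_k(t)}{\D z}}\le\fr{8}{\sqrt{17}}e^{\mu v|t|}$ coming from Theorem~\ref{thm:toda_solest}, which furnishes the dominating function. The rest is routine; the one trap is to avoid enlarging $\|\fr{\D B}{\D b_n}\|_\x$ to $\|\fr{\D B}{\D a_n}\|_\x+\|\fr{\D B}{\D b_n}\|_\x$ before the two Poisson-bracket terms are combined, since that would inflate the constant to $2C$.
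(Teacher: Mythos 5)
Your proposal is correct and follows essentially the same route as the paper's proof: expand the modified Poisson bracket, apply the chain rule to $\alpha_t(A)$, invoke Theorem~\ref{thm:toda_solest} on the solution derivatives, and absorb the $\tilde b_n$-shifts at the cost of a factor $e^{\mu}$, yielding $C=\frac{2}{\sqrt{17}}(1+e^{\mu})$. Your bookkeeping of where $\|\fr{\D B}{\D b_n}\|_\x$ versus $\|\fr{\D B}{\D a_n}\|_\x$ appears, and your justification of the interchange of $\fr{\D}{\D z}$ with the infinite sum, match (and slightly sharpen) the argument given in the paper.
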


\begin{proof}
Substituting into \eqref{eq:mpb}, we find that
\be
\left\{ \alpha_t(A), B \right\} ({\rm x})  =  \fr{1}{4} \sum_{n \in \mathbb{Z}} a_n \cdot \left[ \fr{\D}{\D a_n} \alpha_t(A) \cdot \fr{\D B}{\D \tilde{b}_n} -\fr{\D }{\D \tilde{b}_n} \alpha_t(A) \cdot \fr{\D B}{\D a_n} \right]({\rm x}) \, .
\ee
Using the chain rule, it is clear that for any $z \in \{a_n, \tilde{b}_n\}$,
\be
\left[ \fr{\D}{ \D z} \alpha_t(A) \right] ({\rm x}) = \sum_{m \in \mathbb{Z}} \left[ \alpha_t \left( \fr{\D A}{ \D a_m}\right) \cdot \fr{\D}{ \D z} \alpha_t(A_m) +
\alpha_t \left( \fr{\D A}{ \D b_m} \right) \cdot \fr{\D}{ \D z} \alpha_t(B_m) \right]({\rm x})
\ee
where we are using the notation from Example~\ref{ex:1}. 
Applying the triangle inequality and then Theorem~\ref{thm:toda_solest}, the bound
\bea
|\{\alpha_t(A),B\}({\rm x})| & \leq &  \fr{\| a \|_{\infty}}{4} \sum_{n, m} \left( \left\| \fr{\D A}{ \D a_m}  \right\|_\x \left| \fr{ \D}{ \D a_n} a_m(t) \right| + \left\| \fr{\D A}{ \D b_m}  \right\|_\x \left| \fr{ \D}{ \D a_n} b_m(t) \right| \right) \left\| \fr{\D B}{ \D \tilde{b}_n} \right\|_\x \nonumber \\ 
& \mbox{ } & +  \fr{\| a \|_{\infty}}{4} \sum_{n, m} \left(  \left\| \fr{\D A}{ \D a_m}  \right\|_\x \left| \fr{ \D}{ \D \tilde{b}_n} a_m(t) \right|  +  
\left\| \fr{\D A}{ \D b_m} \right\|_\x \left| \fr{ \D}{ \D \tilde{b}_n} b_m(t)\right|  \right) \left\| \fr{\D B}{ \D a_n}  \right\|_\x \nonumber \\
& \leq &  \fr{2 \| a \|_{\infty}}{ \sqrt{17}} \sum_{n, m} \left( \left\| \fr{\D A}{ \D a_m}  \right\|_\x + \left\| \fr{\D A}{ \D b_m} \right\|_\x \right) \left\| \fr{\D B}{ \D \tilde{b}_n} \right\|_\x e^{-\mu \left(|n-m| - v|t|\right)} \nonumber \\ 
& \mbox{ } & +  \fr{2 \| a \|_{\infty}}{ \sqrt{17}} e^{\mu} \sum_{n, m} \left(  \left\| \fr{\D A}{ \D a_m}  \right\|_\x  +  
\left\| \fr{\D A}{ \D b_m}  \right\|_\x \right) \left\| \fr{\D B}{ \D a_n} \right\|_\x e^{-\mu \left(|n-m| - v|t| \right)} 
\eea
readily follows. Note that the bound on e.g.\ $\left| \fr{ \D}{ \D \tilde{b}_n} a_m(t) \right|$ follows from the argument in Theorem~\ref{thm:toda_solest}. 
Another triangle inequality proves \eqref{ineq:toda_gentbd} as claimed.
\end{proof}

For many applications, the observables of interest will have disjoint supports. 
In this case, the bound in \eqref{ineq:toda_gentbd} can be stated as follows. 
For any ${\rm x} \in M$ and $\mu >0$, there exist numbers $C$ and $v$, 
as above, such that given any two disjoint subsets $X$ and $Y$ of
$\mathbb{Z}$ the bound 
\be \label{eq:tlrb2}
|\{\alpha_t(A),B\}({\rm x})| \leq C \| a \|_{\infty} \| A \|_{1, \x} \| B \|_{1, \x} e^{- \mu(d(X,Y) - v|t|)}  \, ,
\ee
holds for all $t \in \mathbb{R}$ and any observables $A,B \in \A^{(1)}$ with supports in $X, Y$,
respectively. Here
\be
d(X,Y) = \inf \{ |x-y| \, : \, x \in X \mbox{ and } y \in Y \} >0 \, ,
\ee
and \eqref{eq:tlrb2} corresponds with the bound \eqref{eq:genhamlrb} claimed in the introduction.
Other norms on the observables under considerations are often useful.
For example, consider
\be
\| \D A \|_\x = \sup_{m \in \mathbb{Z}} \max \left[ \left\| \fr{\D A}{ \D a_m} \right\|_\x , \left\| \fr{\D A}{ \D b_m} \right\|_\x \right]
\ee
for any $A \in \A^{(1)}$. Theorem~\ref{thm:toda_lrb} then shows that: For any ${\rm x} \in M$ and 
$\mu >0$, there exist numbers $C$ and $v$, as above, such that given any two 
disjoint subsets $X$ and $Y$ of
$\mathbb{Z}$, at least one of which being finite, the bound 
\be
|\{\alpha_t(A),B\}({\rm x})| \leq 4 C \| a \|_{\infty} \| \D A \|_\x \| \D B \|_\x \sum_{m \in X, n \in Y} e^{- \mu(|m-n| - v|t|)}  \, ,
\ee
holds for all $t \in \mathbb{R}$ and any observables $A,B \in \A^{(1)}$ with supports in $X, Y$,
respectively. 

In any case, the bound in Theorem~\ref{thm:toda_lrb} demonstrates that each $\x \in M$ propagates no faster than some
finite rate. We say that the Lieb--Robinson velocity corresponding to $\x \in M$ is $v(\x, \mu_0)$ where $\mu_0$ is
the infimum defined in the remarks after Theorem~\ref{thm:toda_solest}.  

\subsection{On one-soliton solutions} \label{subsec:solitons}
Perhaps the most important collection of solutions to the Toda Lattice are the solitons, i.e.\ the solitary waves, see e.g.\ \cite{Te,Teschl1,ta}. In fact, solitons can be considered as the stable part of any short-range initial condition since
every such solution eventually splits into a number of stable solitons plus a decaying dispersive tail \cite{KruegerTeschl1}.
Since all involved quantities can be computed explicitly for the one-soliton solution we will use it as a test case and
for our Lieb--Robinson bounds. 

Fix $\kappa > 0$. A one-soliton solution of the Toda lattice is given by
\begin{equation}
q_n(t; \pm) = q - \ln \left( \frac{1+ \exp \left[ - 2 \kappa n \pm 2 \sinh(\kappa) t + \delta\right] }{1+ \exp \left[ - 2 \kappa (n-1) \pm 2 \sinh(\kappa) t +\delta\right]} \right) \, ,
\end{equation}
where $q$ and $\delta$ are real constants. Here $q_n(t; \pm)$ represents the position of the traveling wave.
It describes a single bump traveling with speed $\pm \frac{\sinh(\kappa)}{ \kappa}$ and width proportional to $1/ \kappa$.
In other words, the smaller the soliton the faster it propagates. Changing $\delta$ amounts to a shift of the solution and
we will set $\delta=0$ for simplicity. {F}rom  this explicit formula it is clear that, in contrast to harmonic models, the velocity
of a solution to the Toda Lattice may indeed depend on the initial condition.

In terms of the function
\begin{equation}
f_{\pm}(x,t) = 1 + \exp \left[ - 2 \kappa x \pm 2 \sinh(\kappa) t \right]
\end{equation}
it is clear that
\begin{equation}
q_n(t; \pm) = q - \ln \left( \frac{f_{\pm}(n,t)}{f_{\pm}(n-1,t)} \right) \, ,
\end{equation}
and from Hamilton's equations, we also know that
\begin{equation}
p_n(t; \pm) = \frac{d}{dt} q_n(t; \pm) = \frac{f_{\pm}'(n-1,t)}{f_{\pm}(n-1,t)} - \frac{f_{\pm}'(n,t)}{f_{\pm}(n,t)} \, .
\end{equation}
In Flaschka's variables, we have that
\begin{equation}
a_n(t; \pm) = \frac{1}{2} \frac{\sqrt{f_{\pm}(n-1,t) f_{\pm}(n+1,t)}}{f_{\pm}(n,t)} \quad \mbox{and} \quad b_n(t; \pm) = \frac{1}{2} \left( \frac{f_{\pm}'(n,t)}{f_{\pm}(n,t)} - \frac{f_{\pm}'(n-1,t)}{f_{\pm}(n-1,t)} \right) \, .
\end{equation}
A short calculation shows that
\begin{equation}
\sup_na_n(0; \pm) = a_0(0; \pm) = \frac{\cosh(\kappa)}{2} \quad \mbox{and} \quad \sup_n|b_n(0; \pm)| = |b_0(0; \pm)| = \frac{\sinh(\kappa)\tanh(\kappa)}{2} \, .
\end{equation}
With $\x_{\kappa}$ denoting the initial condition corresponding to this one-soliton, we have proven that
\begin{equation}
\frac{\cosh(\kappa)}{2} = \max\left(\frac{\cosh(\kappa)}{2}, \frac{\sinh(\kappa)\tanh(\kappa)}{2}\right) \leq \| L( \x_{\kappa}) \|_2 \leq  \cosh(\kappa) +\frac{\sinh(\kappa)\tanh(\kappa)}{2} \, ,
\end{equation}
using e.g.\ \eqref{eq:Lbd1}. Since $L(\x_{\kappa})$ is self-adjoint its norm is equal to the spectral radius. Moreover,
the spectrum is given by an absolutely continuous part $[-1,1]$ plus the single eigenvalue $\pm\cosh(\kappa)$ implying
\begin{equation}
 \| L( \x_{\kappa}) \|_2 = \cosh( \kappa) \, .
\end{equation}
To see this last claim note that the one-soliton solution can be computed from the inverse scattering transform by choosing one eigenvalue $\lambda=\pm\cosh(\kappa)$ plus the corresponding norming constant $\gamma=(1-\mathrm{e}^{-2\kappa})\mathrm{e}^\delta$ and zero reflection coefficient (cf.\ \cite[Sect.~3.6]{ta} or
\cite[Sec.~13.4]{Te}) or by using the double commutation method to add one eigenvalue $\lambda$ with norming
constant $\gamma$ to the trivial solution (cf.\ \cite[Sect.~14.5]{Te}).

As is clear from this calculation, the Lieb--Robinson velocity does provide a reasonable estimate on
the actual velocity, at least for one-soliton solutions. 

%
%
%
\section{Estimates on Perturbed Toda Systems}\label{sec:pert_toda}
In this section, we consider a class of perturbations of the Toda system for which locality results
analogous to Theorem~\ref{thm:toda_lrb} still hold.  
We introduce these perturbations as follows. Let $W : \mathbb{R} \to [0,\infty)$ satisfy $W \in C^2( \mathbb{R})$.
Consider the formal Hamiltonian
\be \label{eq:pertham}
H^\w = H + \sum_{n \in \mathbb{Z}} W_n,
\ee
where $H$ is the Toda Hamiltonian, see \eqref{eq:todahamab}, and for any initial condition
${\rm x}=\{(a_n,b_n)\}_{n\in\Z} \in M_0 = \ell^{\infty}( \mathbb{Z}, \mathbb{R}\setminus\{0\}) \times \ell^{\infty}( \mathbb{Z}, \mathbb{R})$, the observable $W_n(\x)=W( \ln(4a_n^2))$. Our choice of 
parametrization is motivated by the fact that the potential $V$ for the unperturbed Toda Lattice, 
see \eqref{eq:todahampq}, is a function of $q_{n+1}-q_n = - \ln(4a_n^2)$. Since the choice $a_n = 0$
corresponds to taking the positions of the particles at sites $n$ and $n+1$
infinitely far apart, we will exclude it from our considerations. The formal Hamiltonian \eqref{eq:pertham}
corresponds to the following system of coupled differential equations
\begin{eqnarray} \label{eq:pertevo}
\dot{a}_n^{\w}(t) &= &a_n^{\w}(t) \left( b_{n+1}^{\w}(t) - b_n^{\w}(t) \right)\\
\dot{b}_n^{\w}(t) &= & 2 \left( a_n^{\w}(t)^2 -a_{n-1}^{\w}(t)^2 \right)+R_n(t), \nonumber
\end{eqnarray}
where 
\be
R_n(t) = \fr{1}{2}\left[W'( \ln(4a_n^{\w}(t)^2)) - W'( \ln(4a_{n-1}^{\w}(t)^2))\right].
\ee
Local existence and uniqueness of solutions of \eqref{eq:pertevo} corresponding to initial conditions $\x \in M_0$
follows from standard results, \cite[Thm.~4.1.5]{Abraham}.
Let us denote by $\Phi^{\w}_t$ the perturbed Toda flow, i.e., the function that associates initial conditions
$\x \in M_0$ with $\Phi^{\w}_t(\x) = \left\{ \left(a_n^{\w}(t), b_n^{\w}(t) \right) \right\}$, the solution of \eqref{eq:pertevo} at time $t$.
Consider the set of initial conditions $M_b =M_b(W) \subset M_0$ for which there exists numbers $C_1, C_2 < \infty$ with
\be\label{def:C1C2}
\sup_{t \in \mathbb{R}} \| \Phi^{\w}_t( \x) \|_M \leq C_1 \quad \mbox{and} \quad \sup_{t \in \mathbb{R}} \sup_{n \in \mathbb{Z}} \frac{1}{|a_n^{\w}(t)|} \leq C_2 \, .
\ee
It will be shown in Appendix~\ref{sec:app} that this $M_b$ contains at least all initial conditions whose energy is finite
under appropriate assumptions on $W$.

For initial conditions $\x \in M_b$, we have two estimates on the corresponding Lieb--Robinson velocity. 
The first is obtained in Section~\ref{subsec:direct} by simply arguing as we did in Theorem~\ref{thm:toda_solest}. 
The other, in Section~\ref{subsec:inter}, achieves an estimate of the perturbed velocity (corresponding to $\x$)
in terms of the unperturbed velocity (corresponding to $\x$) via interpolation.
%
%
\subsection{Direct Bounds} \label{subsec:direct}
Following closely the arguments in Theorem~\ref{thm:toda_solest}, we obtain the next result.
\begin{thm} \label{thm:pert_toda_solest}
Fix $W \in C^2( \mathbb{R})$ with $W'' \in L^{\infty}(\mathbb{R})$, $\mu >0$, and let $\x \in M_b$. There exist numbers $C^{\w} = C^{\w}( \x, W)$ and $v^{\w} = v^{\w}( \mu, \x)$ for which given any 
$n,m\in\Z$, the bound
\be
\max \left[ \abs{ \fr{\D}{ \D z} a_n^\w(t)}, \abs{ \fr{\D}{ \D z} b_n^\w(t)} \right] \leq C^{\w} e^{-\mu(|n-m|-v^{\w}|t|)},
\ee
holds for all $t\in\R$ and each $z\in\{a_m,b_m\}$. In fact, one may take
\be
v^{\w} = \left(1+\sqrt{17 + \frac{4C_2 \| W'' \|_{\infty}}{C_1}} \right)C_1\left(e^{\mu+1}+\fr{1}{\mu}\right)
\ee
with $C_1, C_2$ from \eqref{def:C1C2}.
\end{thm}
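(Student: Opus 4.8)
The plan is to mimic the proof of Theorem~\ref{thm:toda_solest} almost verbatim, keeping careful track of where the perturbation term $R_n(t)$ enters. First I would note that, as in the unperturbed case, $F_n^\w(t)=(a_n^\w(t),b_n^\w(t))^{\mathrm T}$ is well-defined and differentiable with respect to $z\in\{a_m,b_m\}$ for $\x\in M_b$ (the hypothesis $\x\in M_b$ guarantees global existence and a uniform bound via \eqref{def:C1C2}). Integrating \eqref{eq:pertevo} and differentiating with respect to $z$ gives the analogue of \eqref{eq:dfnz}, namely
\be
\fr{\D}{\D z}F_n^\w(t) = \fr{\D}{\D z}F_n^\w(0) + \sum_{|e|\le 1}\int_0^t \widetilde D_{n,e}(s)\,\fr{\D}{\D z}F_{n+e}^\w(s)\,ds,
\ee
where $\widetilde D_{n,e}(s)$ is $D_{n,e}(s)$ from \eqref{matrix_D_toda} with the $(2,1)$ and $(2,-1)$ entries augmented by the contributions coming from $\fr{\D}{\D z}R_n(s)$. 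The key computation is $\fr{\D}{\D a_n^\w}R_n = \fr{1}{a_n^\w}W''(\ln(4(a_n^\w)^2))$ and $\fr{\D}{\D a_{n-1}^\w}R_n = -\fr{1}{a_{n-1}^\w}W''(\ln(4(a_{n-1}^\w)^2))$, while $R_n$ does not depend on any $b_k^\w$. Using $\|a^\w(t)\|_\infty\le C_1$ (actually $\le\|L\|$-type bounds are unavailable here, so one uses $C_1$ directly), $1/|a_n^\w(t)|\le C_2$, and $\|W''\|_\infty<\infty$, one bounds the new entries in absolute value and obtains, in the notation of the earlier proof,
\be
\left|\fr{\D}{\D z}F_n^\w(t)\right| \le \left|\fr{\D}{\D z}F_n^\w(0)\right| + \sum_{|e|\le 1}\int_0^{|t|} \widehat D_e \left|\fr{\D}{\D z}F_{n+e}^\w(s)\right|\,ds,
\ee
where $\widehat D_e$ is like $C_1 D_e$ from \eqref{eq:matDe} but with the scalar entries in the lower-left doubled/augmented so that $\widehat D = \sum_{|e|\le1}\widehat D_e = \begin{pmatrix} 2C_1 & 2C_1 \\ 8C_1 + 2C_2\|W''\|_\infty & 0\end{pmatrix}$ — equivalently $\widehat D = C_1 \begin{pmatrix} 2 & 2 \\ 8 + 2C_2\|W''\|_\infty/C_1 & 0\end{pmatrix}$.

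Next I would iterate exactly as in \eqref{eq:bd2}, getting for $z=a_m$ the bound $\left|\fr{\D}{\D a_m}F_n^\w(t)\right|\le \sum_{k\ge|n-m|}\fr{|t|^k}{k!}\widehat D^k\binom{1}{0}$. The matrix $\widehat D$ has eigenvalues $\lambda_\pm^\w = C_1\bigl(1\pm\sqrt{17 + 4C_2\|W''\|_\infty/C_1}\bigr)$ — one computes the characteristic polynomial $\lambda^2 - 2C_1\lambda - C_1(8C_1+2C_2\|W''\|_\infty) = 0$, whose discriminant is $4C_1^2 + 4C_1(8C_1+2C_2\|W''\|_\infty) = 4C_1^2(17 + 4C_2\|W''\|_\infty/C_1)$, giving exactly the claimed $v^\w$. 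Decomposing $\binom{1}{0}$ (and $\binom{0}{1}$ for the $z=b_m$ case) in the eigenbasis of $\widehat D$ produces an explicit finite prefactor $C^\w=C^\w(\x,W)$ depending on $C_1,C_2,\|W''\|_\infty$ but not on $n,m,t$. Then the Stirling argument of \eqref{eq:stir}, applied with $c=\lambda_+^\w$ in place of $\lambda_+\|L(0)\|_2$, splits into the two cases $c|t|\le|n-m|e^{-(\mu+1)}$ and its complement exactly as before, yielding $v^\w = \lambda_+^\w/C_1 \cdot C_1(e^{\mu+1}+1/\mu) = \bigl(1+\sqrt{17+4C_2\|W''\|_\infty/C_1}\bigr)C_1(e^{\mu+1}+1/\mu)$, as stated.

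The main obstacle — really the only place where genuine care beyond bookkeeping is needed — is justifying the a-priori bound that replaces \eqref{toda_unifest}. For the unperturbed system the Lax-pair structure gave the sharp conserved quantity $\|L(t)\|_2=\|L(0)\|_2$; the perturbed system is no longer integrable, so there is no such conservation law and one must instead invoke membership in $M_b$, i.e.\ the hypothesis $\sup_t\|\Phi_t^\w(\x)\|_M\le C_1$ together with $\sup_{t,n}1/|a_n^\w(t)|\le C_2$, to control both $\|a^\w(t)\|_\infty$ (hence the Toda-type entries) and $1/|a_n^\w(t)|$ (hence the $W''$ contribution to $\fr{\D}{\D a_n}R_n$). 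This is precisely why the theorem is stated for $\x\in M_b$ rather than for all $\x\in M_0$, and why Appendix~\ref{sec:app} is needed to show $M_b$ is nonempty. A secondary technical point is the convergence of the iteration remainder: since $\fr{\D}{\D z}F_n^\w(t)$ is continuous in $t$ and hence bounded on compact time intervals (local existence plus the uniform bound), the remainder after $N$ iterations of the Volterra inequality is $O(|t|^N/N!)\to 0$, exactly as remarked after \eqref{eq:bd2}. With these two points in hand the rest is a line-by-line transcription of the proof of Theorem~\ref{thm:toda_solest}.
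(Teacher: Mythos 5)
Your proposal is correct and follows essentially the same route as the paper: the paper writes the comparison matrix as $D^{\w}_e$ with the bounds $|a^{\w}_n|\le C_1$, $1/|a^{\w}_n|\le C_2$, $|W''|\le\|W''\|_{\infty}$ folded in, factors out $2C_1$ and diagonalizes the normalized matrix $\begin{pmatrix}1&1\\ \alpha&0\end{pmatrix}$ with $\alpha=4+C_2\|W''\|_{\infty}/C_1$, which is exactly your computation with $\widehat D$ left unnormalized, and it likewise invokes $M_b$ precisely because the Lax conservation law is lost. One small slip: the characteristic polynomial of your $\widehat D$ is $\lambda^2-2C_1\lambda-2C_1\left(8C_1+2C_2\|W''\|_{\infty}\right)=0$ (you dropped a factor $2$ in the constant term), but the discriminant $4C_1^2\left(17+4C_2\|W''\|_{\infty}/C_1\right)$, the eigenvalues $\lambda_\pm^{\w}=C_1\bigl(1\pm\sqrt{17+4C_2\|W''\|_{\infty}/C_1}\bigr)$, and the resulting $v^{\w}$ that you state are the correct ones, so the argument is unaffected.
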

\begin{proof}
Fix $\x \in M_b$. Global existence again guarantees that for each $n \in \mathbb{Z}$, the function
\be
F^\w_n(t) = \begin{pmatrix} a_n^{\w}(t) \\ b_n^{\w}(t) \end{pmatrix}
\ee
satisfies
\be
F^{\w}_n(t) = F^{\w}_n(0)  + \int_0^t \begin{pmatrix} a_n^{\w}(s) \left( b_{n+1}^{\w}(s) - b_n^{\w}(s) \right) \\ 2 \left( a_n^{\w}(s)^2 -a_{n-1}^{\w}(s)^2 \right)+R_n(s) \end{pmatrix}ds. 
\ee
Since $W$ is sufficiently smooth, the components of $F^{\w}_n(t)$ are differentiable with respect to each $z \in \{ a_m, b_m \}$, and
the bound 
\be
\left|\fr{\D}{ \D z} F^{\w}_n(t)\right| \leq \left|\fr{\D}{ \D z} F^{\w}_n(0)\right| + \sum_{|e|\leq 1}\int_0^{|t|} D^{\w}_{e} \left|\fr{\D}{ \D z} F^{\w}_{n+e}(s)\right| ds,
\ee
follows as in \eqref{eq:1bd} with 
\be
D^{\w}_e =  \begin{pmatrix} 2C_1\delta_0(e) & C_1 \left(\delta_0(e)+\delta_{1}(e) \right) \\  \left(4C_1 + C_2 \| W'' \|_{\infty} \right) \left( \delta_0(e) + \delta_{-1}(e) \right) & 0  \end{pmatrix} \, .
\ee
Following the previous scheme, iteration (with $z = a_m$) yields
\be
\left|\fr{\D}{ \D a_m} F^{\w}_n(t)\right| \leq \sum_{k =|n-m|}^{\infty} \frac{(2C_1 |t|)^k}{k!} (D^{\w})^k \begin{pmatrix} 1 \\ 0  \end{pmatrix}
\ee
with
\be
D^\w = \begin{pmatrix} 1 & 1 \\ \alpha & 0  \end{pmatrix} \quad \mbox{for} \quad \alpha = 4 + \frac{C_2 \| W'' \|_{\infty}}{C_1} \, .
\ee
Eigenvalues and eigenvectors of $D^{\w}$ are 
\be
\lambda_{\pm} = \frac{1 \pm \sqrt{1+4 \alpha}}{2} \quad \mbox{and} \quad v_{\pm} = \begin{pmatrix} \lambda_{\pm} \\ \alpha \end{pmatrix}
\ee
and arguing as before, it is now clear that
\be
\left\| \fr{\D}{\D a_m}F^{\w}_n(t) \right\|_{\infty} \leq \frac{2 \alpha}{ \sqrt{1+4 \alpha}}  e^{-\mu(|n-m|-v^{\w}|t|)}
\ee
with
\be 
v^{\w}= 2 \lambda_+ C_1\left(e^{\mu+1}+\fr{1}{\mu}\right) \, .
\ee
For $z=b_m$, one similarly finds 
\be
\left\| \fr{\D}{\D b_m}F^{\w}_n(t) \right\|_{\infty} \leq \frac{2 \lambda_+ }{ \sqrt{1+4 \alpha}}  e^{-\mu(|n-m|-v^{\w}|t|)}
\ee
This completes the proof.
\end{proof}
%
%

\subsection{Bounds Via Interpolation} \label{subsec:inter}
The goal of this section is to prove a different bound on the Lieb--Robinson velocity corresponding to an initial condition $\x \in M_b$, see Theorem~\ref{thm:pert_toda_interpol} below.
The novel feature of this estimate is that it is explicit in the unperturbed Lieb--Robinson velocity of $\x$. 

Before we state the main result of this section, we first indicate some further estimates on
the unperturbed system which will be useful in proving Theorem~\ref{thm:pert_toda_interpol}.
To start with, we prove another Lieb--Robinson type estimate for the Toda system, see Lemma~\ref{lem2} below. 
Afterwards, we introduce a quantity that is better suited for
the iteration scheme which is at the heart of proving Theorem~\ref{thm:pert_toda_interpol}, see $G_{\mu}(k)$ in \eqref{eq:defgmu} below.
Lastly, we will state and prove Theorem~\ref{thm:pert_toda_interpol}.

We begin with the following lemma on second order derivatives of the unperturbed system. 
\begin{lem}\label{lem2}
Fix $\mu >0$ and let $\x \in M$. There exists a number $C = C( \mu, \x)>0$  and a function $h$, depending on $\mu$ and $\x$, for which
given any $n,k, \ell \in\Z$, the estimate
\be
\max \left[ \left| \fr{\D^2}{ \D z \D \tilde{b}_k} a_n(t) \right| , \left| \fr{\D^2}{ \D z \D \tilde{b}_k} b_n(t) \right| \right] \leq C e^{- \mu |n- \ell|} e^{- \mu |n-k|} e^{2 \mu v |t|} h(t)
\ee
holds for all $t\in\R$. Here $a_n(t)$ and $b_n(t)$ are the solutions of \eqref{eq:toda} with initial condition $\x \in M$,
$\fr{\D}{\D \tilde{b}_k} = \fr{\D}{\D b_{k+1}} -\fr{\D}{\D b_k}$, $z \in \{a_\ell , b_\ell \}$, and the number $v$ is as in Theorem \ref{thm:toda_solest}.
The function $h$ grows at most exponentially. 
\end{lem}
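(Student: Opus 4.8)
The plan is to differentiate the Duhamel identity \eqref{eq:dfnz} once more and then rerun the Picard iteration of the proof of Theorem~\ref{thm:toda_solest}, now driven by an inhomogeneous term built from first-order derivatives that are already controlled by that theorem. Since the right-hand side of \eqref{eq:toda} is polynomial, the Toda flow is smooth in the initial data, so all the second derivatives below exist and are continuous in $t$, hence bounded on compact time intervals. Write $\phi_n(t)=\frac{\partial^2}{\partial z\,\partial\tilde b_k}F_n(t)$ with $F_n(t)=\binom{a_n(t)}{b_n(t)}$. Applying $\frac{\partial}{\partial\tilde b_k}=\frac{\partial}{\partial b_{k+1}}-\frac{\partial}{\partial b_k}$ to \eqref{eq:dfnz}, and using that $\frac{\partial}{\partial z}F_n(0)$ is a constant vector whose $\tilde b_k$-derivative vanishes, one gets
\[
\phi_n(t)=\sum_{|e|\le1}\int_0^t\Bigl[\Bigl(\tfrac{\partial}{\partial\tilde b_k}D_{n,e}(s)\Bigr)\tfrac{\partial}{\partial z}F_{n+e}(s)\;+\;D_{n,e}(s)\,\phi_{n+e}(s)\Bigr]\,ds,
\]
with $D_{n,e}$ as in \eqref{matrix_D_toda}; call the first summand the inhomogeneity $\Psi_n(s)$.

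The next step is to bound $\Psi_n(s)$. Each entry of $\frac{\partial}{\partial\tilde b_k}D_{n,e}(s)$ is $\pm\frac{\partial}{\partial\tilde b_k}$ of one of $a_n(s),a_{n-1}(s),b_n(s),b_{n+1}(s)$, while $\frac{\partial}{\partial z}F_{n+e}(s)$ is a first derivative in $z\in\{a_\ell,b_\ell\}$; both types of factor are estimated by Theorem~\ref{thm:toda_solest} (the $\tilde b_k$-derivatives as in the proof of Theorem~\ref{thm:toda_lrb}, at the cost of an extra $1+e^\mu$). Because $|e|\le1$ shifts the site index by at most one, the near-diagonal distances $|n\pm1-k|,\,|n\pm1-\ell|$ differ from $|n-k|,\,|n-\ell|$ by at most one, and that difference is absorbed into a constant via $e^\mu$. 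Hence, componentwise,
\[
|\Psi_n(s)|\;\le\;C_1\,e^{-\mu|n-k|}\,e^{-\mu|n-\ell|}\,e^{2\mu v|s|}\,w
\]
for a fixed vector $w$ and a constant $C_1=C_1(\mu,\x)$, the factor $e^{2\mu v|s|}$ arising as the product of the two single-derivative factors $e^{\mu v|s|}$.

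The equation for $\phi_n$ now has exactly the structure iterated in Theorem~\ref{thm:toda_solest}. Using $|D_{n,e}(s)|\le\|L(0)\|_2\,D_e$ componentwise, iterating, and observing (as there) that the remainder of a finite iteration vanishes since $\phi_n$ is continuous, one bounds $|\phi_n(t)|$ by $\sum_{j\ge0}\|L(0)\|_2^{\,j}$ times the sum over $e_1,\dots,e_j\in\{-1,0,1\}$ of $D_{e_1}\cdots D_{e_j}$ applied to the integral of $|\Psi_{n+e_1+\cdots+e_j}|$ over the simplex $0\le s_{j+1}\le\cdots\le s_1\le|t|$. Bounding $e^{2\mu v|s|}\le e^{2\mu v|t|}$, the simplex volume by $|t|^{j+1}/(j+1)!$, using $|n+e_1+\cdots+e_j-k|\ge|n-k|-j$ (and likewise for $\ell$) so that the displaced decay of $\Psi$ costs only $e^{2\mu j}$, and $\|D_{e_1}\cdots D_{e_j}w\|_\infty\le\mathrm{const}\cdot\lambda_+^{\,j}$ with $\lambda_+=1+\sqrt{17}$ (the eigenvalue bound of Theorem~\ref{thm:toda_solest}), one arrives at
\[
\|\phi_n(t)\|_\infty\;\le\;C\,e^{-\mu|n-k|}\,e^{-\mu|n-\ell|}\,e^{2\mu v|t|}\sum_{j\ge0}\frac{\bigl(\lambda_+e^{2\mu}\|L(0)\|_2|t|\bigr)^{j}}{(j+1)!}\,|t|\,.
\]
Since $\sum_{j\ge0}\frac{x^{j}}{(j+1)!}\,|t|=\frac{|t|}{x}(e^{x}-1)\le\frac{|t|}{x}e^{x}$ with $x=\lambda_+e^{2\mu}\|L(0)\|_2|t|$, the $|t|$ cancels and the series is bounded by a constant times $h(t)=e^{\lambda_+e^{2\mu}\|L(0)\|_2|t|}$, which grows at most exponentially. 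Absorbing all constants into $C=C(\mu,\x)$ yields the claim.

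Conceptually the argument is a verbatim repetition of the scheme in Theorem~\ref{thm:toda_solest}; the one genuinely new input is the estimate on $\Psi_n$, which requires tracking the matrix structure of $\frac{\partial}{\partial\tilde b_k}D_{n,e}$ and several numerical constants. The key structural point — and the place where a careless proof could break down — is that the product $e^{-\mu|n-k|}e^{-\mu|n-\ell|}$ of two exponential decays is preserved through the iteration: each step moves the site index by at most one, so the two distances degrade by at most one unit per step and contribute only a geometric factor $e^{2\mu}$, which is harmless against $1/(j+1)!$. Beyond that, the main obstacle is purely bookkeeping.
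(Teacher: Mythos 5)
Your proposal is correct and follows essentially the same route as the paper: differentiate the Duhamel identity \eqref{eq:dfnz} a second time, bound the inhomogeneous term by two applications of Theorem~\ref{thm:toda_solest} (producing the product $e^{-\mu|n-k|}e^{-\mu|n-\ell|}e^{2\mu v|s|}$), and iterate the resulting Volterra inequality, paying $e^{\mu}$ per unit shift in each decay factor. The only differences are cosmetic — you bound $e^{2\mu v s}\le e^{2\mu v|t|}$ and the simplex volume crudely and absorb the shift cost into $e^{2\mu j}D^j$ rather than into the paper's matrix $\tilde D=\sum_{|e|\le 1}e^{2\mu|e|}D_e$ and the exact iterated integral \eqref{eq:intvel}, which yields a slightly larger but still exponentially growing $h$, so the stated conclusion is unaffected.
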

\begin{proof}
Fix $\x \in M$. As in Theorem~\ref{thm:toda_solest}, it is clear that the function $F_n(t)$ is
well-defined for each $n \in \mathbb{Z}$ and $t \in \mathbb{R}$. In addition, both
\eqref{eq:dfnz} and \eqref{matrix_D_toda} still hold with the choice $z = \tilde{b}_k$. Taking a second
derivative, we find that
\be
\fr{\D^2}{ \D z \D \tilde{b}_k} F_n(t) = 
\sum_{|e|\leq 1}\int_0^t \left( \fr{\D}{ \D z} D_{n,e}(s) \fr{\D}{ \D \tilde{b}_k} F_{n+e}(s) + D_{n,e}(s) \fr{\D^2}{\D z \D \tilde{b}_k} F_{n+e}(s) \right) \, ds ,
\ee
for each $z \in \{ a_\ell, b_\ell \}$, since $\fr{\D^2}{ \D z \D \tilde{b}_k} F_n(0)  = 0$. 

The bound
\bea \label{eq:2derbd1}
\left| \fr{\D^2}{ \D z \D \tilde{b}_k} F_n(t) \right| & \leq & \frac{8}{\sqrt{17}} e^{-\mu|n- \ell|} \sum_{|e|\leq 1}\int_0^{|t|} e^{\mu v s}  D_{e}' \left| \fr{\D}{ \D \tilde{b}_k} F_{n+e}(s) \right| \, ds + \\
& \mbox{ } & \quad + \| L(0) \|_2 \sum_{|e| \leq 1} \int_0^{|t|} D_{e} \left| \fr{\D^2}{\D z \D \tilde{b}_k} F_{n+e}(s) \right|  \, ds  \nonumber 
\eea
with
\be
D_e' = \begin{pmatrix} (e^{\mu}+1) \delta_0(e) & \delta_0(e) + \delta_{1}(e) \\ 4 \left( \delta_0(e) + e^{\mu} \delta_{-1}(e) \right) & 0 \end{pmatrix}
\ee
and $D_e$ as in \eqref{eq:matDe}, follows using Theorem~\ref{thm:toda_solest} and the argument therein. A further application of 
Theorem~\ref{thm:toda_solest} implies that
\be
\left| \fr{\D}{ \D \tilde{b}_k} F_{n+e}(s) \right| \leq \frac{8}{\sqrt{17}} e^{\mu} e^{-\mu|n+e-k|} e^{\mu v s} \begin{pmatrix} 1 \\ 1 \end{pmatrix}
\ee
and therefore the first term on the right hand side of \eqref{eq:2derbd1} can be estimated by
\be
\frac{64}{17} e^{\mu} e^{- \mu |n-\ell|} e^{-\mu |n-k|} \int_0^{|t|} e^{2 \mu v s} \, ds \begin{pmatrix} 2(e^{\mu}+1) \\ 4(e^{2\mu} +1) \end{pmatrix}  \, .
\ee
Here we have used that
\be
\sum_{|e| \leq 1} e^{-\mu|n+e-k|} D_e' \begin{pmatrix} 1 \\ 1 \end{pmatrix} \leq e^{-\mu|n-k|} \begin{pmatrix} 2(e^{\mu}+1) \\ 4(e^{2\mu} +1) \end{pmatrix} \, .
\ee

Let us introduce the notation
\be
C_{\mu} = \frac{64}{17} e^{\mu} \quad \mbox{and} \quad y = \begin{pmatrix} 2(e^{\mu}+1) \\ 4(e^{2\mu} +1) \end{pmatrix} \, .
\ee
We have proven that
\be \label{eq:2derbd2}
\left| \fr{\D^2}{ \D z \D \tilde{b}_k} F_n(t) \right|  \leq  C_{\mu} e^{- \mu |n-\ell|} e^{-\mu |n-k|} \int_0^{|t|} e^{2 \mu v s} \, ds \cdot y +  \| L(0) \|_2 \int_0^{|t|}  \sum_{|e| \leq 1} D_{e} \left| \fr{\D^2}{\D z \D \tilde{b}_k} F_{n+e}(s) \right|  \, ds  
\ee
Iteration now yields
\bea \label{eq:2derbd3}
\left| \fr{\D^2}{ \D z \D \tilde{b}_k} F_n(t) \right|  & \leq &   C_{\mu} \sum_{j=0}^{\infty} \| L(0) \|_2^j \int_0^{|t|} \int_0^{t_1} \cdots \int_0^{t_j} e^{2 \mu v t_{j+1}} d t_{j+1} \cdots d t_1 \times \nonumber \\
& \mbox{ } & \quad \times \sum_{|e_1| \leq 1} \cdots \sum_{|e_j| \leq 1} e^{- \mu |n+e_1 + \cdots +e_j-\ell|} e^{-\mu |n+e_1 + \cdots +e_j-k|} 
D_{e_1} D_{e_2} \cdots D_{e_j} y 
\eea
and the convergence is guaranteed as before.

A short calculation shows that
\be \label{eq:intvel}
 \int_0^{|t|} \int_0^{t_1} \cdots \int_0^{t_j} e^{2 \mu v t_{j+1}} \, d t_{j+1} \cdots d t_1= \frac{1}{(2 \mu v)^{j+1}} \sum_{m=j+1}^{\infty} \frac{(2 \mu v |t|)^m}{m!}
\ee
whereas the bound $e^{- \mu|z+e|} \leq e^{\mu |e|} e^{- \mu |z|}$ immediately implies that
\be
e^{-\mu|n+e_1+e_2 + \cdots +e_j - \ell|} \leq e^{\mu|e_1|} \cdots e^{\mu|e_j|} e^{-\mu|n-\ell|} \, .
\ee
This proves that
\be \label{eq:2derbd4}
\left| \fr{\D^2}{ \D z \D \tilde{b}_k} F_n(t) \right|   \leq   \frac{C_{\mu}}{2 \mu v} e^{- \mu|n-\ell|} e^{-\mu|n-k|} \sum_{j=0}^{\infty} \left( \frac{\| L(0) \|_2}{2 \mu v} \right)^j 
\sum_{m=j+1}^{\infty} \frac{(2 \mu v |t|)^m}{m!} \left( \tilde{D} \right)^j y
\ee
where
\be
\tilde{D} = \sum_{|e| \leq 1} e^{2 \mu |e|} D_e =  \begin{pmatrix} 2 &e^{2 \mu} +1 
\\ 4(e^{2 \mu}+1) & 0 \end{pmatrix} \, .
\ee
The eigenvalues of $\tilde{D}$ are $\lambda_{\pm} = 1 \pm \sqrt{1+ 4(e^{2\mu}+1)^2}$ and a convenient choice of eigenvectors are
\be
v_{\pm} = \begin{pmatrix} \lambda_{\pm} \\ 4(e^{2 \mu}+1) \end{pmatrix} \, .
\ee
In terms of these, it is clear that
\be
y = y_1 v_+ + y_2 v_- \quad \mbox{with} \quad y_1 = \frac{2(e^{\mu}+1) -  \lambda_-}{\lambda_+ - \lambda_-} \quad \mbox{and} \quad y_2 = \frac{\lambda_+ -2(e^{\mu}+1)}{\lambda_+-\lambda_-} \, .
\ee
Inserting this into \eqref{eq:2derbd4}, we find that
\bea
\left\| \fr{\D^2}{ \D z \D \tilde{b}_k} F_n(t) \right\|_{\infty} & \leq & 
\frac{C_{\mu}}{2 \mu v}  e^{-\mu |n- \ell|} e^{-\mu |n-k|} \sum_{j=0}^{\infty} \left( \frac{\| L(0) \|_2}{2 \mu v} \right)^j \sum_{m=j+1}^{\infty} \frac{(2 \mu v|t|)^m}{m!} \left\| y_1 \lambda_+^j v_+ + y_2 \lambda_-^j v_- \right\|_{\infty}  \nonumber \\
& \leq & C e^{-\mu |n- \ell|} e^{-\mu |n-k|} \sum_{j=0}^{\infty}\left( \frac{\| L(0) \|_2 \lambda_+}{2 \mu v} \right)^j \sum_{m=j+1}^{\infty} \frac{(2 \mu v|t|)^m}{m!}
\eea
It is easy to see that for any $A,B \in \mathbb{R}$,
\be \label{eq:bigsum}
\sum_{j=0}^{\infty} A^j \sum_{m=j+1}^{\infty} \frac{(B|t|)^m}{m!} =
\sum_{m=1}^{\infty} \frac{(B|t|)^m}{m!} \sum_{j=0}^{m-1} A^j =
\begin{cases} \frac{1}{A-1} \left( e^{(A-1)B|t|} - 1 \right) e^{B|t|}, & A \neq 1, \\ B|t| e^{B|t|}, & A=1. \end{cases}
\ee
This proves Lemma~\ref{lem2}. One may choose
\be \label{eq:consts}
C = \frac{C_{\mu}}{2 \mu v} \left( |y_1| \|v_+\|_{\infty} + |y_2| \|v_-\|_{\infty} \right)
\ee
and by setting $\beta = \frac{\| L(0) \|_2 \lambda_+}{2 \mu v} - 1$, we see that
\be \label{eq:defh}
h(t) = \begin{cases} \frac{1}{\beta} \left(e^{2 \mu v \beta |t|} - 1 \right) & \mbox{if } \beta \neq 0, \\ 2 \mu v |t| & \mbox{if } \beta =0 \, . \end{cases}
\ee 
Note that $\beta$ depends only on $\mu$, $\lim_{t \to 0} h(t) =0$, and moreover, if $\beta <0$, then $h(t) \leq | \beta |^{-1} $.
For any $\mu$, however, $h(t) \leq h e^{r|t|}$, where $h = h( \mu, \x)$ and $r = r(\mu, \x) \geq 0$.
\end{proof}

Our proof of Theorem~\ref{thm:pert_toda_interpol} will again use an iteration scheme. Due to interpolation, it will
be necessary to sum certain terms over all integers. To this end, we make the following observation. 
For each $\mu >0$, the function
\be  \label{eq:defgmu}
G_\mu(k) = \frac{e^{- \mu|k|}}{(1+|k|)^2}
\ee 
satisfies the estimate
\be \label{eq:itbd}
\sum_{l \in \mathbb{Z}}G_\mu(j-l)G_\mu(l-k) \leq \gamma G_\mu(j-k)
\ee
with $\gamma = 4 \sum_{k \in \mathbb{Z}}(1+|k|)^{-2}$. In fact, setting $f(k)=(1+|k|)^2$, it is clear that
$f(j+k) \le 2(f(j)+f(k))$, and thus
\bea
G_\mu(j-k)^{-1}  G_\mu(j-l)G_\mu(l-k) & = & f(j-k) f(j-l)^{-1} f(l-k)^{-1} e^{\mu(|j-k|-|j-l|-|l-k|)} \nonumber \\
& \le & 2(f(j-l)^{-1} + f(l-k)^{-1}) \, .
\eea
The claim follows after summing over $l$. Clearly the choice of power $2$ in the denominator
of $G_\mu$ is merely for convenience; a factor $(1+|k|)^{-1 - \delta}$ for any $\delta >0$ would suffice.
Note, however, that there is no bound of the type in \eqref{eq:itbd} for the exponential function with no inverse
polynomial weight.  

As a final comment, it is clear that for any $\mu >0$ and $\epsilon >0$, the bound
\be \label{eq:ebdg}
e^{-(\mu + \epsilon)|x|} \leq C_{\epsilon} G_{\mu}(|x|) 
\ee
holds for all $x \in \mathbb{R}$. The number $C_{\epsilon} = \sup_{x \in \mathbb{R}}(1+|x|)^2 e^{- \epsilon |x|} < \infty$.

We can now state the second result of this section.

\begin{thm}\label{thm:pert_toda_interpol}
Fix $W \in C^2(\mathbb{R})$ with $W', W'' \in L^{\infty}(\mathbb{R})$, take $\mu>0$, and let $\x \in M_b$.
For any $\epsilon >0$, there are positive numbers $C = C( \epsilon)$, $D = D( \epsilon, \mu, \x, W)$, and $\delta =\delta(\epsilon, \mu, \x, W)$ such that
\be \label{eq:interbd}
\max \left[ \abs{ \fr{\D}{ \D z} a_n^\w(t)}, \abs{ \fr{\D}{ \D z} b_n^\w(t)} \right] \leq C G_{\mu}( |n-m| ) e^{(\mu+\epsilon) v |t|} \left[ 1 + D \left( e^{\delta |t|} -1 \right) \right] 
\ee
holds for all $t\in\R$ and $n, m \in \mathbb{Z}$. Here $v = v(\x, \mu + \epsilon)$ is as in Theorem \ref{thm:toda_solest} and $z\in\{a_m,b_m\}$.
\end{thm}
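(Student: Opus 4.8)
The natural route is an interpolation (Duhamel) argument comparing the perturbed dynamics $\tau_t$ (generated by $H^\w$) with the unperturbed Toda dynamics $\alpha_t$, expanding the difference as a Neumann series in the perturbation and estimating the terms with Theorem~\ref{thm:toda_solest} and Lemma~\ref{lem2}. Arguing as for \eqref{eq:eompobr}, both dynamics are generated by the modified Poisson bracket with the respective Hamiltonian, so differentiating $s\mapsto\alpha_{t-s}(\tau_s(A))$ and integrating over $[0,t]$ gives
\[
\tau_t(A)=\alpha_t(A)+\int_0^t\alpha_{t-s}\!\bigl(\{\tau_s(A),\mathcal W\}\bigr)\,ds,\qquad \mathcal W:=\sum_n W_n .
\]
Since $W_n$ depends only on $a_n$, \eqref{eq:mpb} yields $\{B,\mathcal W\}=-\tfrac12\sum_l W'(\ln(4a_l^2))\,\tfrac{\D}{\D\tilde b_l}B$, so the correction is a weighted sum of $\tfrac{\D}{\D\tilde b_l}$-derivatives. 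Applying this identity to the evaluation observables $A_n,B_n$ of Example~\ref{ex:1}, evaluating at $\x$, and differentiating once in $z\in\{a_m,b_m\}$ (and once more in $\tilde b_k$, for the companion identity used below) produces, after the product and chain rules, sums of terms built from four ingredients: the unperturbed first, resp.\ second, derivative $\tfrac{\D}{\D z}a_n(t)$, resp.\ $\tfrac{\D^2}{\D z\D\tilde b_k}a_n(t)$, bounded by Theorem~\ref{thm:toda_solest}, resp.\ Lemma~\ref{lem2}, both applied with $\mu$ replaced by $\mu+\epsilon$, so that the velocity $v=v(\x,\mu+\epsilon)$ appears; bounded weights $W'(\ln(4a_l(\cdot)^2))$, together with the factors $W''(\ln(4a_l(\cdot)^2))\tfrac{2}{a_l(\cdot)}$ whenever a derivative lands on a weight, controlled by $\|W''\|_\infty$ and by $1/|a_l(r)|\le C_2\,e^{2\|L(\x)\|_2|r|}$ (this bound holding along the \emph{unperturbed} flow, since $\dot a_l=a_l(b_{l+1}-b_l)$, $|a_l(0)|\ge C_2^{-1}$ as $\x\in M_0$, and $\sup_l|b_l(r)|\le\|L(\x)\|_2$ for all $r$ by \eqref{eq:Lax}); and perturbed first and second derivatives $\tfrac{\D}{\D\tilde b_l}a_n^\w(s)$, $\tfrac{\D^2}{\D z'\D\tilde b_l}a_n^\w(s)$, evaluated at the shifted initial conditions $\Phi_r(\x)$.

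The structural point that keeps derivatives of $W$ beyond $W''$ from entering is to \emph{not} expand the perturbed first and second derivatives against the unperturbed flow as well, but to control them through their own Duhamel identities, so that the $W$-weights always stay attached to at-most-once-differentiated factors. Let $P_1(t)$ and $P_2(t)$ be the smallest constants with
\[
\max\bigl[\,|\tfrac{\D}{\D z}a_n^\w(t)|,\,|\tfrac{\D}{\D z}b_n^\w(t)|\,\bigr]\le P_1(t)\,C\,G_\mu(|n-m|)\,e^{(\mu+\epsilon)v|t|},\qquad \max\bigl[\,|\tfrac{\D^2}{\D z\D\tilde b_k}a_n^\w(t)|,\,\ldots\,\bigr]\le P_2(t)\cdot\bigl(\text{the bound of Lemma~\ref{lem2}}\bigr),
\]
uniformly over all $n,m,k$, all $z\in\{a_m,b_m\}$, and all initial conditions $\Phi_\tau(\x)$, $\tau\in\R$ (these share the same $\|L\|_2$ by \eqref{eq:Lax}, hence the same $v$ and, via \eqref{eq:ebdg}, the same $C=C(\epsilon)=\tfrac{8}{\sqrt{17}}C_\epsilon$; at $t=0$ one has $P_1,P_2\le 1$ since $C\ge 8/\sqrt{17}>1$). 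Bounding the factors in the two Duhamel identities as above — and carrying out every intermediate lattice summation by first replacing a decay $e^{-(\mu+\epsilon)|\cdot|}$ by $G_\mu$ using \eqref{eq:ebdg} and then invoking the convolution bound \eqref{eq:itbd} — leads to a closed system of integral inequalities for $(P_1,P_2)$ whose kernels are bounded on compact intervals and grow at most exponentially (they collect $\|W'\|_\infty,\|W''\|_\infty,C_2$, the factors $e^{2\|L(\x)\|_2|\cdot|}$, the function $h$ of Lemma~\ref{lem2}, the constant $\gamma$ of \eqref{eq:itbd}, and any mismatch between the velocities furnished by Theorem~\ref{thm:toda_solest}/Lemma~\ref{lem2} and $v$). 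It is precisely for these summations that one must work with $G_\mu$ rather than a bare exponential, since convolving two exponentials loses a polynomial factor while $G_\mu$ is essentially closed under convolution.

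A (generalized) Gronwall argument — if necessary preceded by an exponential substitution $P_i\mapsto e^{-\lambda|t|}P_i$ to absorb the fastest-growing factors into the rate — then yields $P_1(t),P_2(t)\le 1+D(e^{\delta|t|}-1)$ for suitable $D=D(\epsilon,\mu,\x,W)$ and $\delta=\delta(\epsilon,\mu,\x,W)$. Unwinding the normalization of $P_1$ gives exactly \eqref{eq:interbd}, and the same argument with $B_n$ in place of $A_n$ covers $\tfrac{\D}{\D z}b_n^\w(t)$. I expect the genuinely delicate point to be the bookkeeping behind the second and third paragraphs: organizing the product/chain-rule expansion of the $z$- and $\tilde b_k$-derivatives so that (i) each $W$-weight is differentiated at most once — which is what forces the peel-off of the perturbed derivatives through their own Duhamel identities rather than a single simultaneous expansion against the unperturbed flow, the latter requiring $W'''\in L^\infty$; (ii) the number of $G_\mu$-factors produced always matches the number of intermediate summations, so that \eqref{eq:itbd} applies cleanly at every step; and (iii) the various exponential-in-$t$ contributions (the velocities, the bounds on $1/|a_l|$, and the function $h$ of Lemma~\ref{lem2}) combine into Gronwall kernels tame enough that the resummed series is a single exponential of the form $D(e^{\delta|t|}-1)$ rather than something growing faster.
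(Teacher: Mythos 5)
Your toolkit is the right one (interpolation, Theorem~\ref{thm:toda_solest}, Lemma~\ref{lem2}, the weight $G_\mu$ with \eqref{eq:ebdg} and \eqref{eq:itbd}, then iteration/Gronwall), but the interpolation is written in the order that does not close, and the closure is exactly the crux. You interpolate along $s\mapsto \al_{t-s}\bigl(\al^{\w}_s(A)\bigr)$, i.e.\ $A\circ\Phi^{\w}_s\circ\Phi_{t-s}$, so the derivative $\fr{\D}{\D z}$ in the initial condition passes first through the \emph{unperturbed} flow. The correction term then contains $\bigl[W'(\ln(4a_l^2))\cdot\fr{\D}{\D\tilde b_l}\al^{\w}_s(A_n)\bigr]$ evaluated at $\Phi_{t-s}(\x)$, and after one $z$-derivative you are left with \emph{perturbed} second derivatives $\fr{\D^2}{\D x_j\,\D\tilde b_l}\al^{\w}_s(A_n)$. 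Lemma~\ref{lem2} only controls second derivatives of the unperturbed flow, so you introduce $P_2$ and propose to close via a Duhamel identity for the perturbed second derivative. But with your ordering that identity is obtained by applying two initial-condition derivatives to the same correction term: when both land on the single weight $W'(\ln(4a_l^2))$ you produce $W'''$, and when both land on $\fr{\D}{\D\tilde b_l}\al^{\w}_s(A_n)$ you produce perturbed \emph{third} derivatives. Hence the $(P_1,P_2)$ system is not closed under the hypotheses $W',W''\in L^{\infty}$; it cascades into an infinite hierarchy requiring all derivatives of $W$, which is precisely the obstruction you flag in (i) but do not actually avoid. (A secondary issue of the same ordering: your unknowns appear at the shifted data $\Phi_{t-s}(\x)$, and running the inequality uniformly over all such shifts forces factors like $1/|a_l|$ along the unperturbed flow that grow with the shift, so even the $P_i$-sups are delicate.)

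The paper closes the argument by taking the opposite ordering in \eqref{eq:int}: it interpolates along $\al^{\w}_s(\al_{t-s}(A)) = A\circ\Phi_{t-s}\circ\Phi^{\w}_s$, so that $\fr{\D}{\D z}$ passes through the \emph{perturbed} flow first. The unknown then enters only linearly, as the perturbed first derivatives $\fr{\D}{\D z}\tilde F^{\w}_k(s)$, while all coefficients are built from $W'$, $W''\cdot A_k^{-1}$ (bounded by $\|W''\|_\infty C_2$ since $\x\in M_b$) and \emph{unperturbed} first and second derivatives, bounded by Theorem~\ref{thm:toda_solest} and Lemma~\ref{lem2}; the only price is that these unperturbed bounds are applied at the initial conditions $\Phi^{\w}_s(\x)$, which is where $M_b$ and the uniform velocity $v^*$ enter. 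This yields the single closed integral inequality \eqref{eq:intdz}/\eqref{eq:1intbd}, to which your $G_\mu$-convolution and resummation steps apply essentially verbatim and give \eqref{eq:interbd} — with no perturbed second derivatives, no $P_2$, and no assumption beyond $W',W''\in L^{\infty}$. So the fix is not more bookkeeping but swapping the two flows in your Duhamel formula.
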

In words, this result shows that for each $\mu >0$ the perturbed Lieb--Robinson velocity $v^{\w}(\x, \mu)$, corresponding to $\x \in M_b$, satisfies, for
each $\epsilon >0$,
\be
v^{\w}(\x, \mu) \leq \left( 1 + \frac{\epsilon}{\mu} \right)v(\x, \mu+ \epsilon) + \frac{\delta}{\mu}
\ee
and, as is shown in the proof below, the dependence of $\delta$ on $W$ can be made explicit
\be
\delta = \delta_1 \| W' \|_{\infty} + \delta_2 \| W'' \|_{\infty} + \delta_3
\ee
for some $\delta_i = \delta_i(\epsilon, \mu, \x)$ and $i = 1,2,3$.
\begin{proof}
Fix $W$, $\x \in M_b$, and $t \in \mathbb{R}$. We begin by interpolating between the Toda and perturbed dynamics. 
As in Section~\ref{sec:toda}, denote by 
\be
\alpha_t(A) = A \circ \Phi_t \quad \mbox{and} \quad \alpha^{\w}_t(A) = A \circ \Phi_t^{\w} \quad \mbox{for any } A \in \mathcal{A} ,
\ee
the Toda and perturbed dynamics respectively. For any $A \in \mathcal{A}$, the observable-valued equation
\be\label{eq:int}
\al_t^\w(A)-\al_t(A)=\int_0^t\fr{d}{ds}\al_s^\w(\al_{t-s}(A))ds
\ee
is clear. Due to the form of the perturbation,
\bea \label{eq:der}
\fr{d}{ds}\al_s^\w(\al_{t-s}(A))&=&\al_s^\w(\{\al_{t-s}(A),H^\w\})-\al_s^\w(\al_{t-s}(\{A,H \})) \nonumber \\
&=&\sum_{k\in\Z}\al_s^\w(\{\al_{t-s}(A),W_k\}) \, ,
\eea
and moreover, one finds that 
\be \label{eq:calc}
\{\al_{t-s}(A),W_k\} = - \fr{1}{2} W_k' \cdot \fr{\D}{ \D \tilde{b}_k} \alpha_{t-s}(A) \, .
\ee
Combining these expressions, we have shown that
\be
\al_t^\w(A) = \al_t(A) - \fr{1}{2} \sum_{k \in \mathbb{Z}} \int_0^t \alpha_s^{\w} \left( W_k' \cdot \fr{\D}{ \D \tilde{b}_k} \alpha_{t-s}(A) \right) \, ds 
\ee
as a formal expansion. For specific choices of observables, the expression above can be estimated. 

Fix $n \in \mathbb{Z}$ and denote by $A_n$ and $B_n$ the observables introduced in Example~\ref{ex:1}.
Consider the observable-valued 
\be
\tilde{F}^{\w}_n(t) = \begin{pmatrix} \alpha_t^\w(A_n) \\ \alpha_t^\w(B_n) \end{pmatrix} \ \ \mbox{and}\ \ \tilde{F}_n(t) = \begin{pmatrix} \alpha_t(A_n) \\ \alpha_t(B_n) \end{pmatrix}, 
\ee
which for any given $\x \in M$ satisfies
\be
[\tilde{F}^{\w}_n(t)](\x) =   \begin{pmatrix} a_n^{\w}(t) \\ b_n^{\w}(t) \end{pmatrix} = F^{\w}_n(t)\ \ \mbox{and}\ \ [\tilde{F}_n(t)](\x) =   \begin{pmatrix} a_n(t) \\ b_n(t) \end{pmatrix}=F_n(t),
\ee
in terms of our previous notation. Our starting point is the equation
\be
\tilde{F}^{\w}_n(t) = \tilde{F}_n(t) - \fr{1}{2} \sum_{k \in \mathbb{Z}} 
\int_0^t  \alpha_s^\w(W_k') \cdot \begin{pmatrix} \alpha_s^\w \left( \fr{\D}{\D \tilde{b}_k} \alpha_{t-s}(A_n) \right) \\ 
\alpha_s^\w \left( \fr{\D}{\D \tilde{b}_k} \alpha_{t-s}(B_n) \right) \end{pmatrix} \, ds \, .
\ee
Since $W$ is sufficiently smooth, the formula 
\be \label{eq:intdz}
\fr{\D}{\D z} \tilde{F}^{\w}_n(t) = \fr{\D}{\D z} \tilde{F}_n(t) -  \sum_{k \in \mathbb{Z}} 
\int_0^t D_k(s;n) \fr{\D}{\D z} \tilde{F}^{\w}_k(s) \, ds - \fr{1}{2} \sum_{k, \ell \in \mathbb{Z}} 
\int_0^t \alpha_s^{\w}(W_k') \cdot D_{k, \ell}(s;n) \fr{\D}{\D z} \tilde{F}^{\w}_{\ell}(s) \, ds 
\ee
where
\be \label{eq:1dermat}
D_k(s;n) = \begin{pmatrix} \alpha_s^{\w} \left( W_k'' \cdot A_k^{-1} \right) \cdot \alpha_s^{\w} \left( \fr{\D}{\D \tilde{b}_k} \alpha_{t-s}(A_n) \right) & 0 
\\  \alpha_s^{\w} \left( W_k'' \cdot A_k^{-1} \right) \cdot \alpha_s^{\w} \left( \fr{\D}{\D \tilde{b}_k} \alpha_{t-s}(B_n) \right) & 0 \end{pmatrix}
\ee
and
\be \label{eq:2dermat}
D_{k, \ell}(s;n) = \begin{pmatrix}  \alpha_s^\w \left( \fr{\D^2}{ \D a_{\ell} \D \tilde{b}_k} \alpha_{t-s}(A_n) \right) & \alpha_s^\w \left( \fr{\D^2}{ \D b_{\ell} \D \tilde{b}_k} \alpha_{t-s}(A_n) \right) \\
 \alpha_s^\w \left( \fr{\D^2}{ \D a_{\ell} \D \tilde{b}_k} \alpha_{t-s}(B_n) \right) & \alpha_s^\w \left( \fr{\D^2}{ \D b_{\ell} \D \tilde{b}_k} \alpha_{t-s}(B_n) \right) \end{pmatrix}
\ee
readily follows.

For any $\x \in M$, the estimate
\be \label{eq:1intbd}
\begin{split}
\left\| \left[ \fr{\D}{\D z} \tilde{F}^{\w}_n(t) \right] (\x) \right\|_{\infty}  & \leq  \left\| \left[ \fr{\D}{\D z} \tilde{F}_n(t)  \right] (\x) \right\|_{\infty}  +  \sum_{k \in \mathbb{Z}} 
\int_0^{|t|} \left\| \left[  D_k(s;n) \fr{\D}{\D z} \tilde{F}^{\w}_k(s) \right] (\x) \right\|_{\infty} \, ds \\
& \quad \quad + \fr{\| W'\|_{\infty} }{2} \sum_{k, \ell \in \mathbb{Z}} 
\int_0^{|t|}  \left\| \left[ D_{k, \ell}(s;n) \fr{\D}{\D z} \tilde{F}^{\w}_{\ell}(s) \right] (\x) \right\|_{\infty} \, ds
\end{split}
\ee
is clear. 

Using Theorem~\ref{thm:toda_solest}, the first term can be bounded by 
\bea
\left\| \left[ \fr{\D}{\D z} \tilde{F}_n(t)  \right] (\x) \right\|_{\infty} & = & \left\| \fr{\D}{\D z} F_n(t)  \right\|_{\infty} \\
& \leq & \frac{8}{\sqrt{17}}e^{-(\mu + \epsilon)\left( |n-m| - v|t| \right)} \nonumber \\
& \leq & \frac{8}{\sqrt{17}} C_{\epsilon} G_{\mu}(|n-m|) e^{(\mu + \epsilon) v|t|} \nonumber
\eea
where we have set $v = v(\x, \mu+ \epsilon)$ and used \eqref{eq:ebdg}. 

For each $\x \in M_b$, the matrix appearing in the second term satisfies
\bea
\left\| \left[  D_k(s;n) \right] (\x) \right\|_{\infty} & \leq & C_2 \| W'' \|_{\infty} \frac{8}{\sqrt{17}} e^{\mu+ \epsilon} e^{- (\mu+\epsilon) \left(|n-k| -v(s)(|t|-s) \right)} \\
& \leq & C_2 \| W'' \|_{\infty} \frac{8}{\sqrt{17}} e^{\mu+ \epsilon} C_{\epsilon} G_{\mu}(|n-k|) e^{(\mu+\epsilon)v(s)(|t|-s)} \nonumber
\eea
where we have denoted by
\be \label{eq:lrbs}
v(s) = v(\Phi_s^{\w}(\x), \mu + \epsilon) = (1+\sqrt{17}) \left\|L(\Phi_s^{\w}(\x)) \right\|_2 \left(e^{\mu + \epsilon+1} + (\mu+\epsilon)^{-1}\right)
\ee
the unperturbed Lieb--Robinson corresponding to the initial condition $\Phi_s^{\w}(\x)$. Since $\x \in M_b$,
\be
\left\|L(\Phi_s^{\w}(\x)) \right\|_2 \leq 2 \left\| a^\w(s) \right\|_\infty+\left\|b^\w(s)\right\|_\infty \leq 3 C_1 \, ,
\ee
and so the quantity in \eqref{eq:lrbs} can be estimated independent of $s$. Clearly,
\be
v = v(0) \leq \sup_{s \in \mathbb{R}}v(s) : = v^*.
\ee
 
The matrix in the third term can be dominated using Lemma~\ref{lem2}. In fact, the bound
\be\label{secondmatrix}
\left\| \left[  D_{k, \ell}(s;n) \right] (\x) \right\|_{\infty} \leq 2 C e^{- (\mu + \epsilon)|n- \ell|} e^{- (\mu+\epsilon) |n-k|} e^{2 (\mu +\epsilon) v(s)(|t|-s)} h(|t|-s) 
\ee
follows immediately. Some comments are in order. First, the prefactor $C$ from Lemma~\ref{lem2}, appearing above, seems to depend on $s$ through the velocity $v(s)$; see \eqref{eq:consts}. 
If, however, one repeats the argument of Lemma~\ref{lem2} and replaces the $v(s)$ in \eqref{eq:intvel} with $v^*$ above, then the
new prefactor is independent of $s$. Next, as discussed at the end of Lemma~\ref{lem2}, the function $h$ grows at most exponentially.
In fact, in this particular application, it is clear that
\be
h(|t|-s) \leq h e^{2( \mu + \epsilon) \gamma v^*(|t|-s)}
\ee
with numbers $h = h(\x, \mu + \epsilon)$ and $\gamma = \gamma(\x, \mu + \epsilon)$, each independent of $s$. This proves that
\be \label{eq:thirdmatrix}
\left\| \left[  D_{k, \ell}(s;n) \right] (\x) \right\|_{\infty} \leq  \tilde{C} G_{\mu}( |n- \ell|) G_{\mu}( |n-k|) e^{2(\gamma +1)(\mu +\epsilon) v^* (|t|-s)} \, .
\ee 

Putting everything together and suppressing the $\x$-dependence, we have found that
\bea 
\left\| \fr{\D}{\D z} F^{\w}_n(t) \right\|_{\infty}  & \leq & B_1 G_{\mu}(|n-m|) e^{(\mu+\epsilon)v|t|} +  B_2 \sum_{k \in \mathbb{Z}} G_{ \mu}( |n-k|)
\int_0^{|t|} e^{(\mu+ \epsilon) v^* (|t|-s)} \left\|  \fr{\D}{\D z}F^{\w}_k(s) \right\|_{\infty} \, ds \nonumber \\
& \mbox{ } &   \quad + B_3 \sum_{ k, \ell \in \mathbb{Z}} G_{\mu}( |n- \ell|)G_{\mu}(|n-k|)  \int_0^{|t|} e^{2 (\gamma+1)(\mu+\epsilon) v^* (|t|-s)} \left\| \fr{\D}{\D z} F^{\w}_{\ell}(s)\right\|_{\infty} \, ds \nonumber \\
& \leq & B_1 G_{\mu}(|n-m|) e^{(\mu+\epsilon)v|t|} +  B \sum_{k \in \mathbb{Z}} G_{ \mu}( |n-k|)
\int_0^{|t|} e^{2(\gamma+1)(\mu+ \epsilon) v^* (|t|-s)} \left\|  \fr{\D}{\D z}F^{\w}_k(s) \right\|_{\infty} \, ds
\eea
Upon iteration, we find that
\be \nonumber
\begin{split}
\left\|\fr{\D}{\D z}F^{\w}_n(t) \right\|_{\infty}  & \leq B_1 G_{\mu}( |n-m| ) e^{a|t|}  + 
B_1 \sum_{j=1}^{\infty} B^j \sum_{k_1 \in \mathbb{Z}} \cdots \sum_{k_j \in \mathbb{Z}} G_{\mu}(|n-k_1|) \cdots G_{\mu}(|k_j-m|) \\
& \quad \quad \times \int_0^{|t|} \int_0^{s_1} \cdots \int_0^{s_{j-1}} e^{b(|t|-s_1)} e^{b(s_1-s_2)} \cdots e^{b(s_{j-1}-s_j)} e^{ a s_j} ds_j \cdots d s_1 
\end{split}
\ee
where we have set
\be
a = (\mu + \epsilon)v \quad \mbox{and} \quad b = 2(\gamma +1) (\mu+\epsilon)v^* \, . 
\ee
{F}rom \eqref{eq:itbd}, it is clear that
\be
\sum_{k_1 \in \mathbb{Z}} \cdots \sum_{k_j \in \mathbb{Z}} G_{\mu}(|n-k_1|) \cdots G_{\mu}(|k_j-m|)  \leq \gamma^j G_{\mu}(|n-m|)
\ee
and the iterated integral can also be calculated:
\bea
\int_0^{|t|} \int_0^{s_1} \cdots \int_0^{s_{j-1} } e^{b(|t|-s_1)} e^{b(s_1-s_2)} \cdots e^{b(s_{j-1}-s_j)} e^{ a s_j} ds_j \cdots d s_1 
&=& e^{b|t|} \int_0^{|t|} \int_0^{s_1} \cdots \int_0^{s_{j-1}} e^{(a-b) s_j} ds_j \cdots d s_1 \nonumber \\
&=&  \fr{e^{b|t|}}{(a-b)^j} \sum_{k=j}^{\infty} \fr{ \left( (a -b) |t| \right)^k}{k!}
\eea
This shows that
\be
\left\| \fr{\D}{\D z} F^{\w}_n(t) \right\|_{\infty}   \leq B_1 G_{\mu}( |n-m| ) \left( e^{a|t|} + 
e^{b|t|} \sum_{j=1}^{\infty} \left( \fr{B \gamma}{a-b} \right)^j \sum_{k=j}^{\infty} \fr{ \left((a-b)|t| \right)^k}{k!} \right)
\ee
Since $v_{\mu + \epsilon} \leq v^*$, it is clear that $a-b<0$. In this case, 
\be
\sum_{j=1}^{\infty} \left( \fr{B \gamma}{a-b} \right)^j \sum_{k=j}^{\infty} \fr{ \left((a-b)|t| \right)^k}{k!} = \frac{B \gamma}{B\gamma + b -a} \left[ e^{B \gamma |t|} - e^{(a-b)|t|} \right]
\ee
follows from \eqref{eq:bigsum} and so 
\be
\left\| \fr{\D}{\D z} F^{\w}_n(t) \right\|_{\infty}   \leq B_1 G_{\mu}( |n-m| ) e^{a|t|} \left( 1 + \frac{B \gamma}{B\gamma + b -a}
\left[ e^{ \left(B \gamma +b - a \right) |t|} - 1 \right] \right) \, .
\ee
This proves \eqref{eq:interbd} and completes the proof.
\end{proof}

\subsection{Comments on these results}
It is clear that results analogous to Theorem~\ref{thm:pert_toda_solest} and Theorem~\ref{thm:pert_toda_interpol} hold for more general, finite range potentials $W$.
We stated the results as above for simplicity of presentation.

It is also clear that analogues of Theorem~\ref{thm:toda_lrb}, for more general observables, hold as direct corollaries of Theorem~\ref{thm:pert_toda_solest} and Theorem~\ref{thm:pert_toda_interpol}. 
Since the statements are clear, we do not rewrite them for the sake of brevity.

%
%
%
%
%

\section{Bounds for the Hierarchy}\label{sec:hth}

In this section, we will demonstrate that the results from Section~\ref{sec:loctoda} also apply to
the Toda hierarchy. We begin, in Section~\ref{subsec:todahier}, by introducing the Toda hierarchy using the 
recursive approach from \cite{Bulla}. We will use essentially the same notation as \cite{Te} and
refer the interested reader to this text for further information. In Section~\ref{subsec:aprisolest}, we 
state the Lieb--Robinson bound valid for the hierarchy, see Theorem~\ref{thm:hier_lrb}. To prove this result, we 
first establish the crucial solution estimate, an analogue of Theorem~\ref{thm:toda_solest}, which shows that a solution of the
hierarchy at site $n$ has weak dependence on the initial condition at site $m$ if $|n-m| >>1$. The bound is
explicit in terms of a quantity $v_r$ which, in particular, depends on the initial condition through the infinity norm of
a specific matrix, see \eqref{eq:velhier}. To make our bound even more concrete, we estimate the norm of this matrix in Lemma~\ref{lem:hiervel} below. 
Theorem~\ref{thm:hier_lrb} follows as in Section~\ref{sec:loctoda}.

We note that our results also immediately apply to the Kac--van Moerbeke hierarchy.
This hierarchy can be viewed as a special case of the Toda hierarchy which is 
obtained by setting $b_n\equiv 0$ in the even order Toda equations.  As is discussed, e.g.\ in \cite{mite2009}, 
this gives precisely the equations of the Kac--van Moerbeke hierarchy.

\subsection{The Toda Hierarchy} \label{subsec:todahier}
We introduce the Toda Hierarchy as follows. Fix $r\in \N_0 = \N \cup \{0\}$. Take $c_0=1$ and choose constants $c_j \in \mathbb{R}$ for $1\le j \le r$.
For each $\x \in M$, define sequences $g(\x, r)$ and $h(\x, r)$ componentwise by setting
\be \label{eq:gj}
g_n(\x, r) = \sum_{j=0}^{r} c_{r-j} \ti{g}^{(j+1)}_n(\x) \quad \mbox{with} \quad \ti{g}^{(j)}_n(\x)=\spr{\delta_n}{L(\x)^j \delta_n}, 
\ee
and
\be \label{eq:hj}
h_n(\x, r) = \sum_{j=0}^{r} c_{r-j}  \ti{h}^{(j+1)}_n(\x)  \quad \mbox{with} \quad \ti{h}^{(j)}_n(\x) =  2 a_n \spr{\delta_{n+1}}{L(\x)^j\delta_n}, 
\ee
with $L(\x)$ as in \eqref{eq:defL}. We now define a system of equations for
the components of an unknown sequence $\x(t,r) = \{ (a_n(t,r), b_n(t,r) \}$: 
\be \label{eq:HTH_1}
\dot{a}_n(t, r) = a_n(t, r)\left(g_{n+1}( t, r)-g_n( t,r)\right) 
\ee
and
\be \label{eq:HTH_2}
\dot{b}_n(t, r) = \left(h_n( t, r)-h_{n-1}(t, r)\right)  \, 
\ee
with initial condition $\x(0) = \{ (a_n(0,r), b_n(0,r) \} = \x \in M$. 
As in the previous section, we have, for example, denoted by $g_n(t,r) = g_n(\x(t,r), r)$ to simplify notation.
The system \eqref{eq:HTH_1} and \eqref{eq:HTH_2} is known to have global solutions (see e.g.\ \cite{Te}, Theorem 12.6)
for initial conditions in $M$.
Varying $r\in\N_0$ describes the Toda hierarchy. Let us denote by $\Phi_t^{(r)}$ the flow corresponding
to the Toda hierarchy, i.e.\ $\Phi_t^{(r)}: M \to M$ satisfies $\Phi_t^{(r)}(\x) = \{ (a_n(t,r), b_n(t,r) \}$, the solution of
\eqref{eq:HTH_1} and \eqref{eq:HTH_2} above, with $\Phi_0^{(r)}(\x) = \x$.

The simplest example corresponds to $r=0$. In this case, the system becomes
\bea
\dot{a}_n(t,0) = a_n(t,0)(b_{n+1}(t,0)-b_n(t,0)) \quad \mbox{and} \quad \dot{b}_n(t,0) = 2(a_n(t,0)^2-a_{n-1}(t,0)^2) \, ,
\eea
which is, of course, the Toda system \eqref{eq:toda}.

\subsection{The Lieb--Robinson Bound} \label{subsec:aprisolest}
The above choice of sequences $g$ and $h$ guarantee that the Lax-formalism of the
Toda Lattice, see \eqref{eq:defL}--\eqref{eq:Lax}, still holds for the hierarchy. In fact, set
\be
P(\x, r) = \sum_{j=0}^r c_{r-j} \ti{P}^{(j+1)}(\x) \quad \mbox{with} \quad
\ti{P}^{(j)}(\x) = [L(\x)^{j}]_+ - [L(\x)^{j}]_-,
\ee
where $[A]_\pm$ denote the upper and lower triangular parts of an operator with respect to the
standard basis $\delta_m(n)= \delta_{m,n}$ (with $\delta_{m,n}$ the usual Kronecker delta).
It is known, see e.g.\ \cite{Te}, that the Toda hierarchy is equivalent to the Lax equation
\be
\fr{d}{dt}L(t)=[P(t, r),L(t)] \, ,
\ee
where we have set $L(t) = L(\Phi_t^{(r)}(\x))$ and $P(t,r) = P(\Phi_t^{(r)}(\x), r)$ again to ease notation.
It is easy to see that the operator $P(t,r)$, which is of order $2r+2$, 
is skew-adjoint and differentiable. Like before then, there exists a unique unitary propagator 
$U^{(r)}(t,s)$ for $P(t,r)$. It follows from the Lax equation that
\be
L(t)=U^{(r)}(t, s)L(s)U^{(r)}(t, s)^{-1}
\ee
implying again an a-priori estimate
\be\label{hth_unifest}
\max\left(\left\|a(t,r)\right\|_\infty,\left\|b(t,r)\right\|_\infty\right)\leq \left\|L(t)\right\|_2=\left\|L(0)\right\|_2 \, .
\ee

For later use we record the following structure. 

\begin{lem}[\cite{KruegerTeschl2}]\label{lem:structgh}
For each $\x \in M$ and any integer $j \geq 1$, the sequences $\ti{g}^{(j)}(\x)$ and $\ti{h}^{(j)}(\x)$, as defined in \eqref{eq:gj} and \eqref{eq:hj}, 
have components that are homogeneous. In fact, they are sums of monomials of the components of $\x$
with degree $j$ and $j+1$, respectively, which have the form:
\[
\ti{g}^{(j)}_n(\x) = \begin{cases}
\left(\prod\limits_{\ell=0}^{k-1} a_{n+\ell}^2 \right) b_{n+k} + R(n+k-1,n-k+1) +\\
+ \left(\prod\limits_{\ell=1}^k a_{n-\ell}^2 \right) \left( b_{n-k} + 2 \sum\limits_{\ell=0}^{k-1} b_{n-\ell}\right), \qquad j=2k+1,\\
 \left(\prod\limits_{\ell=0}^{k-2} a_{n+\ell}^2 \right) \Big( a_{n+k-1}^2 +  b_{n+k-1}^2 + 2b_{n+k-1}\sum\limits_{\ell=0}^{k-2} b_{n+\ell} \Big)+ \\
+ R(n+k-2,n-k+1) + \prod\limits_{\ell=1}^k a_{n-\ell}^2, \qquad j=2k,
\end{cases}
\]
and
\[
\ti{h}^{(j)}_n(\x) = \begin{cases}
2 \left(\prod\limits_{\ell=0}^{k-1} a_{n+\ell}^2 \right) \Big( a_{n+k}^2 +  b_{n+k}^2 + 2b_{n+k}\sum\limits_{\ell=0}^{k-1} b_{n+\ell} \Big)+\\
+ R(n+k-1,n-k+1) + 2\prod\limits_{\ell=0}^k a_{n-\ell}^2, \qquad j=2k+1,\\
2\left(\prod\limits_{\ell=0}^{k-1} a_{n+\ell}^2 \right) b_{n+k} + R(n+k-1,n-k+2) +\\
+ 2 \prod\limits_{\ell=0}^{k-1} a_{n-\ell}^2 \left( b_{n+1} + b_{n-k+1} +2 \sum\limits_{\ell=0}^{k-2} b_{n-\ell} \right), \qquad j=2k,
\end{cases}
\]
for $j>1$. Here $R(n,m)$ denotes terms which involve only $a_\ell$ and $b_\ell$ with $m \le \ell \le n$ and
we set $R(n,m)=0$ if $n<m$.
\end{lem}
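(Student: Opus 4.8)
The plan is to proceed by induction on $j$, exploiting the recursive structure of the quantities $\ti g^{(j)}_n(\x) = \spr{\delta_n}{L(\x)^j\delta_n}$ and $\ti h^{(j)}_n(\x) = 2a_n\spr{\delta_{n+1}}{L(\x)^j\delta_n}$ that comes directly from the three-diagonal form of $L(\x)$ in \eqref{eq:defL}. The key algebraic input is the one-step recursion: writing $L = L(\x)$ and using $[L\delta_n]$ to expand $L^{j}\delta_n = L(L^{j-1}\delta_n)$, one gets
\be
\ti g^{(j)}_n = a_n \spr{\delta_{n+1}}{L^{j-1}\delta_n} + a_{n-1}\spr{\delta_{n-1}}{L^{j-1}\delta_n} + b_n \ti g^{(j-1)}_n
\ee
and similarly for $\ti h^{(j)}_n$ after multiplying by $2a_n$. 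The off-diagonal matrix elements $\spr{\delta_{n+1}}{L^{j-1}\delta_n}$ and $\spr{\delta_{n-1}}{L^{j-1}\delta_n}$ are themselves expressible through $\ti h^{(j-1)}$ and (by symmetry of $L$) a shifted copy of it, so the whole computation closes within the family $\{\ti g^{(i)}, \ti h^{(i)} : i < j\}$. First I would record these recursions cleanly, then check the homogeneity/degree claim (degree $j$ for $\ti g^{(j)}$, degree $j+1$ for $\ti h^{(j)}$), which is immediate from the recursion since each step either multiplies by one $a$-factor and adds an off-diagonal element of degree $j-2$, or multiplies by $b_n$ and keeps degree $j-1$.

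The substantive part is pinning down the \emph{extremal} monomials — the terms $\prod_{\ell=0}^{k-1}a_{n+\ell}^2\, b_{n+k}$, $\prod_{\ell=1}^{k}a_{n-\ell}^2(b_{n-k}+2\sum_{\ell=0}^{k-1}b_{n-\ell})$, and their analogues — which reach as far right as site $n+k$ (resp.\ $n+k-1$) and as far left as site $n-k$ (resp.\ $n-k+1$), everything in between being absorbed into the remainder $R$. For these I would track only the "boundary walks" in the expansion of $\spr{\delta_n}{L^j\delta_n}$: a matrix element $\langle\delta_n|L^j|\delta_n\rangle$ is a sum over nearest-neighbour walks of length $j$ from $n$ to $n$, and the ones contributing to the extremal monomials are the walks that go monotonically out to the turning point and back, possibly idling (picking up $b$-factors) at various sites. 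Counting how many times each such boundary walk passes through a given site gives the powers $a^2$ and the coefficient $2$ in front of the $b$-sums; the factor-of-$2$'s in $\ti h^{(j)}$ come from the extra $2a_n$ in its definition together with the two walks $n\to n+1$ that are forced at the first step. The parity split $j=2k$ vs.\ $j=2k+1$ is exactly the parity of the longest excursion a length-$j$ closed walk can make.

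I expect the main obstacle to be the bookkeeping in the even case $j=2k$, where the turning point is reached by a walk that must spend its "last step" differently than in the odd case, producing the asymmetric-looking terms $a_{n+k-1}^2 + b_{n+k-1}^2 + 2b_{n+k-1}\sum_{\ell=0}^{k-2}b_{n+\ell}$ versus the bare $\prod_{\ell=1}^k a_{n-\ell}^2$ on the left; getting the indices and the coefficient $2$ on the $b$-cross-terms exactly right, consistently for $\ti g$ and $\ti h$ and across the parity cases, is where the care is needed. A cleaner alternative — and the one I would actually write up — is to cite \cite{KruegerTeschl2} for the explicit form (as the lemma statement already attributes it) and include only the short inductive verification: assume the claimed forms for $\ti g^{(j-1)}, \ti h^{(j-1)}$, insert them into the recursion above, and check that the extremal monomials transform into the claimed extremal monomials for index $j$ while all newly created terms land in $R$ with the stated support. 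This reduces the proof to a finite, mechanical check at the two ends of the support interval, with the interior handled automatically by the definition of $R(n,m)$.
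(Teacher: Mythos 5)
The first thing to note is that the paper does not prove this lemma at all: it is quoted from \cite{KruegerTeschl2} (as the bracketed citation in the statement already signals), so there is no in-paper argument to compare against, and your fallback option of citing the reference is exactly what the authors do. Your proposed direct proof via the walk expansion of $\spr{\delta_n}{L(\x)^j\delta_n}$ and $2a_n\spr{\delta_{n+1}}{L(\x)^j\delta_n}$ is the natural independent route and would work: homogeneity and the degrees $j$ and $j+1$ are immediate because every length-$j$ walk contributes one factor ($a$ or $b$) per step, and the displayed extremal terms are precisely the contributions of the walks of maximal excursion, with the coefficient $2$ coming from placing the single idle step on the outbound or the return leg. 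The left/right asymmetry you flag as the main obstacle is not deep bookkeeping but a consequence of the indexing convention: the hop between $n$ and $n+1$ carries the factor $a_n$, so a walk reaching $n+k$ on the right produces $a$-indices only up to $n+k-1$ (hence only the stay-at-the-top term $\bigl(\prod_{\ell=0}^{k-1}a_{n+\ell}^2\bigr)b_{n+k}$ survives outside $R$), while reaching $n-k$ on the left already forces the factor $a_{n-k}^2$, which is why all of those walks, including the ones idling at intermediate sites, must be listed explicitly.

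Two caveats. First, your claim that the one-step recursion closes within the family $\{\ti{g}^{(i)},\ti{h}^{(i)}\}$ is not correct: the recursion $\ti{g}^{(j)}_n=\frac{1}{2}\bigl(\ti{h}^{(j-1)}_n+\ti{h}^{(j-1)}_{n-1}\bigr)+b_n\ti{g}^{(j-1)}_n$ does close, but expanding $\ti{h}^{(j)}_n$ one step produces the term $2a_na_{n+1}\spr{\delta_{n+2}}{L^{j-1}\delta_n}$, a second off-diagonal matrix element that is not in the family. So if you run an induction it must be over all matrix elements $\spr{\delta_m}{L^{j}\delta_n}$ (or you should work with the walk expansion directly, which needs no recursion at all). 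Second, as written this is a plan rather than a proof: the combinatorial bookkeeping that yields the precise even-case terms such as $a_{n+k-1}^2+b_{n+k-1}^2+2b_{n+k-1}\sum_{\ell=0}^{k-2}b_{n+\ell}$ is exactly the content of the lemma and is deferred. Nothing in the plan points in a wrong direction once the closure claim is repaired, but the decisive verification remains to be carried out (or, as in the paper, delegated to \cite{KruegerTeschl2}).
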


The result below is an analogue of Theorem \ref{thm:toda_solest} for the Toda hierarchy. 
\begin{thm} \label{thm:hth_solest} Fix $\x \in M$, $\mu >0$ and $r\in\N_0$. For any $n,m \in \Z$, the bound 
\be  \label{eq:lrbhier}
\max \left[ \abs{ \fr{\D}{ \D z} a_n(t,r)}, \abs{ \fr{\D}{ \D z} b_n(t,r)} \right] \leq 
e^{- \mu \left( \lceil |n-m|/ (\lfloor\frac{r}{2}\rfloor+1)\rceil - v_r |t| \right)} \, ,
\ee
holds for all $t\in\R$, where $z\in\{a_m,b_m\}$,
\be \label{eq:velhier}
v_r = v_r(\x,\mu)= \| D(r) \|_{\infty} \| L(0) \|_2 \left(e^{\mu+1}+\fr{1}{\mu}\right) \, ,
\ee
and $D(r)$ depends on $\x$, $r$, and the numbers $c_1,\dots,c_r$.
\end{thm}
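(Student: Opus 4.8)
The plan is to follow the proof of Theorem~\ref{thm:toda_solest} almost verbatim, with the main new ingredient being the structural description of the sequences $\ti g^{(j)}$ and $\ti h^{(j)}$ provided by Lemma~\ref{lem:structgh}. First I would fix $\x\in M$ and $r\in\N_0$, set $F_n(t) = (a_n(t,r), b_n(t,r))^{\T}$, and write down the integral version of \eqref{eq:HTH_1}--\eqref{eq:HTH_2}, namely $F_n(t) = F_n(0) + \int_0^t (a_n(s,r)(g_{n+1}-g_n), h_n - h_{n-1})^{\T}\,ds$. Differentiating with respect to $z\in\{a_m,b_m\}$ and using the chain rule, one obtains a relation of the form
\be
\fr{\D}{\D z}F_n(t) = \fr{\D}{\D z}F_n(0) + \sum_{|e|\le N}\int_0^t D_{n,e}(s)\,\fr{\D}{\D z}F_{n+e}(s)\,ds,
\ee
where the sum runs over $|e|\le N$ with $N = \lfloor r/2\rfloor + 1$. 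The key point, read off from Lemma~\ref{lem:structgh}, is that $g_n(\x,r)$ and $h_n(\x,r)$ depend only on the variables $a_\ell, b_\ell$ with $|\ell - n|\le N$; hence $\fr{\D g_n}{\D a_m}, \fr{\D g_n}{\D b_m}$ (and likewise for $h$) vanish unless $|n-m|\le N$, which is exactly what confines the coupling range to $|e|\le N$.

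Next I would bound the entries of $D_{n,e}(s)$ uniformly. Since every monomial in $\ti g^{(j)}_n$ has degree $j\le r$ in the components of $\x$, and every monomial in $\ti h^{(j)}_n$ has degree $\le r+1$, differentiating produces monomials of degree at most $r-1$ resp.\ $r$; using the a-priori bound \eqref{hth_unifest}, $\|a(t,r)\|_\infty, \|b(t,r)\|_\infty \le \|L(0)\|_2$, each such monomial is bounded by $\|L(0)\|_2^{\,\text{deg}}$. Pulling out one factor of $\|L(0)\|_2$ (coming from the prefactor $a_n$ in \eqref{eq:HTH_1} and from the structure of $h$), one arrives at a bound
\be
\left|\fr{\D}{\D z}F_n(t)\right| \le \left|\fr{\D}{\D z}F_n(0)\right| + \|L(0)\|_2 \sum_{|e|\le N}\int_0^{|t|} D_e\,\left|\fr{\D}{\D z}F_{n+e}(s)\right|\,ds,
\ee
for a fixed nonnegative $2\times 2$ matrix $D_e = D_e(r, c_1,\dots,c_r)$ (componentwise absolute values understood as in Theorem~\ref{thm:toda_solest}). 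Here the combinatorial coefficients, along with the $c_j$, get absorbed into the entries of $D_e$; this is the step that produces the matrix $D(r) = \sum_{|e|\le N} D_e$ appearing in \eqref{eq:velhier}.

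From this point the argument is identical to Theorem~\ref{thm:toda_solest}: iterate the integral inequality, collapsing products $D_{e_1}\cdots D_{e_k}$ into powers $D(r)^k$ and noting that a term survives only if $e_1 + \cdots + e_k = n-m$ with each $|e_i|\le N$, so that $k \ge |n-m|/N = \lceil |n-m|/(\lfloor r/2\rfloor+1)\rceil$ (the ceiling because $k$ is an integer). The remainder of finite iterations vanishes by continuity of $\fr{\D}{\D z}F_n$ on compact time intervals, exactly as before. One then diagonalizes (or at least norm-estimates) $D(r)$, takes the $\ell^\infty$-norm in $\R^2$, applies Stirling as in \eqref{eq:stir} with $c = \|D(r)\|_\infty\|L(0)\|_2$, and splits into the two cases $c|t|\le \lceil\cdots\rceil e^{-(\mu+1)}$ and its complement; this yields \eqref{eq:lrbhier} with $v_r$ as in \eqref{eq:velhier}. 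I expect the only genuinely delicate point to be the bookkeeping in the second paragraph: carefully extracting exactly one factor $\|L(0)\|_2$ and verifying that the remaining dependence on $\x$ is absorbed into $\|D(r)\|_\infty$ uniformly in $t$, where Lemma~\ref{lem:structgh} (homogeneity and the explicit range of the $R$-terms) does the essential work. The prefactor in \eqref{eq:lrbhier} is $1$ rather than $\tfrac{8}{\sqrt{17}}$ simply because one can absorb the eigenvector-expansion constant into a (slightly larger) choice of $D(r)$; no new idea is needed.
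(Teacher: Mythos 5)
Your proposal is correct and follows essentially the same route as the paper's proof: the integral form of \eqref{eq:HTH_1}--\eqref{eq:HTH_2}, Lemma~\ref{lem:structgh} to confine the coupling range to $\lfloor r/2\rfloor+1$ and, via the a-priori bound \eqref{hth_unifest}, to dominate the derivative matrices by a fixed matrix $D(r)$ depending on $\x$ through $\|L(0)\|_2$, then iteration with the cutoff $k\ge\lceil|n-m|/(\lfloor r/2\rfloor+1)\rceil$ and the Stirling estimate with $c=\|D(r)\|_\infty\|L(0)\|_2$. The only (harmless) deviations are cosmetic: the paper keeps per-$j$ matrices $D_e^{(j)}$ with ranges $\lfloor j/2\rfloor+1$ before summing them into $D(r)$, obtains the prefactor $1$ directly from the matrix $\infty$-norm estimate (no eigenvector expansion needs to be absorbed), and the differentiated monomials actually have degree up to $r+1$ rather than your $r-1$ resp.\ $r$ --- none of which affects the validity of the argument.
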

\begin{proof}
Without loss of generality assume $t\geq 0$. For each $n\in\Z$, define the function $F_n:\R\rightarrow \R^2$ by
\be
F_n(t,r)=\begin{pmatrix} a_n(t,r) \\ b_n(t,r) \end{pmatrix}.
\ee
Since $r$ will be fixed for the remainder of the argument, we will drop it from our notation.
Using the equations of motion \eqref{eq:HTH_1} and \eqref{eq:HTH_2}, it is clear that
\bea 
F_n(t) & = & F_n(0)+\int_0^t \begin{pmatrix}  a_n(s) \left( g_{n+1}(s) - g_n(s) \right) \\  h_n(s) -h_{n-1}(s)  \end{pmatrix}ds \\
& = & F_n(0)+\sum_{j=0}^{r} c_{r-j} \int_0^t \begin{pmatrix}  a_n(s) \left( \ti{g}_{n+1}^{(j+1)}(s) - \ti{g}_n^{(j+1)}(s) \right) \\  \ti{h}_n^{(j+1)}(s) -\ti{h}_{n-1}^{(j+1)}(s)  \end{pmatrix}ds \nonumber
\eea
Observe that by Lemma~\ref{lem:structgh}, for each $0 \leq j \leq r$, the quantities 
$a_n(s)(\ti{g}_{n+1}^{(j+1)}(s) - \ti{g}_n^{(j+1)}(s))$ and $\ti{h}_n^{(j+1)}(s) -\ti{h}_{n-1}^{(j+1)}(s)$ 
are homogeneous polynomials (of degree $j+2$) in the variables $a_{n+e}(s)$ and $b_{n+e}(s)$ for $|e|\leq \lfloor\frac{j}{2}\rfloor+1$. 

Differentiating with respect to $z\in\{a_m,b_m\}$ we get
\be 
\fr{\D}{ \D z} F_n(t) = \fr{\D}{ \D z} F_n(0) +\sum_{j=0}^{r} c_{r-j} \sum_{|e|\leq \lfloor\frac{j}{2}\rfloor+1}\int_0^t D_{n,e}^{(j)}(s) \fr{\D}{ \D z} F_{n+e}(s)ds,
\ee
where the entries  of $D_{n,e}^{(j)}(s)$ are homogeneous polynomials of degree $j+1$. In particular, the following estimate, analogous to the bound in \eqref{eq:1bd}, 
holds
\be \label{eq:1bdonD}
|D_{n,e}^{(j)}(s)|\leq  \| L(0) \|_2^{j+1} \, D_{e}^{(j)} \, ,
\ee
an explicit formula for $D_e^{(j)}$ appears in \eqref{eq:Dej} below. As a result, it is clear that
\be
\left|\fr{\D}{ \D z} F_n(t)\right| \leq \left|\fr{\D}{ \D z} F_n(0)\right| + \sum_{j=0}^{r} |c_{r-j}| \| L(0) \|_2^{j+1} \sum_{|e|\leq \lfloor\frac{j}{2}\rfloor+1}\int_0^t D_{e}^{(j)} \left|\fr{\D}{ \D z} F_{n+e}(s)\right| ds \, .
\ee
Upon iteration of the above inequality (in the case that $z = a_m$), we find that
\bea
\left|\fr{\D}{ \D a_m} F_n(t)\right| &\leq& \sum_{k=0}^\infty  \fr{t^k}{k!} \sum_{j_1 =0}^{r} |c_{r-j_1}| \| L(0) \|_2^{j_1+1} \cdots \sum_{j_k =0}^{r} |c_{r-j_k}| \| L(0) \|_2^{j_k+1} \times \nonumber \\
& \mbox{ } & \quad \times \sum_{|e_1|\leq \lfloor\frac{j_1}{2}\rfloor+1} \cdots \sum_{|e_k| \leq \lfloor\frac{j_k}{2}\rfloor+1} 
\delta_{m+e_1+\cdots + e_k}(n) D_{e_1}^{(j_1)} \cdots D_{e_k}^{(j_k)} \begin{pmatrix} 1 \\ 0 \end{pmatrix} \nonumber \\ \nonumber
&\le&\sum_{k=\lceil |n-m|/ (\lfloor\frac{r}{2}\rfloor+1)\rceil}^\infty \fr{(\| L(0) \|_2 t)^k}{k!} \sum_{j_1 =0}^{r} |c_{r-j_1}| \| L(0) \|_2^{j_1} \cdots \sum_{j_k =0}^{r} |c_{r-j_k}| \| L(0) \|_2^{j_k} 
D^{(j_1)} \cdots D^{(j_k)} \begin{pmatrix} 1 \\ 0 \end{pmatrix} \, \nonumber \\
& \le&\sum_{k=\lceil |n-m|/ (\lfloor\frac{r}{2}\rfloor+1)\rceil}^\infty \fr{(\| L(0) \|_2 t)^k}{k!} D(r)^k \begin{pmatrix} 1 \\ 0 \end{pmatrix} \, 
\eea
where we have set 
\be \label{eq:defD}
D^{(j)} = \sum_{|e|\leq \lfloor\frac{j}{2}\rfloor+1} D_{e}^{(j)}  \quad \mbox{and} \quad D(r) = \sum_{j=0}^{r} |c_{r-j}| \| L(0) \|_2^j D^{(j)} \, .
\ee
A similar estimate holds for the case $z=b_m$.

Taking the infinity norm one obtains that for any $\mu>0$
\bea
\left\| \fr{\D}{\D z}F_n(t) \right\|_{\infty} &\leq& \sum_{k=\lceil |n-m|/ (\lfloor\frac{r}{2}\rfloor+1)\rceil}^\infty \fr{ (\| L(0) \|_2 t)^k}{k!} \|D(r)\|_\infty^k\\
&\leq& e^{-\mu\left(\lceil |n-m|/ (\lfloor\frac{r}{2}\rfloor+1)\rceil-v_r|t|\right)}, \nonumber
\eea
where we have set $v_r = \|D(r)\|_{\infty} \| L(0) \|_2 \left(e^{\mu+1}+\fr{1}{\mu}\right)$. 
\end{proof}

We now provide a rough, but explicit, estimate on the velocity corresponding to the Toda hierarchy.
\begin{lem} \label{lem:hiervel} Fix $\x \in M$, $\mu >0$ and $r\in\N_0$. The velocity corresponding to the Toda hierarchy, see \eqref{eq:lrbhier} and
\eqref{eq:velhier} satisfies
\be
v_r \leq 8 \left( e^{\mu+1} + \frac{1}{\mu} \right) \sum_{j=0}^{r} |c_{r-j}| \| L(0) \|_2^{j+1} (j+2) 3^j
\ee
\end{lem}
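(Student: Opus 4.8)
The plan is to open up the definition of $v_r$ supplied in the proof of Theorem~\ref{thm:hth_solest} and to estimate the single quantity there that is not yet explicit, namely $\|D(r)\|_\infty$. Recall $v_r = \|D(r)\|_\infty \|L(0)\|_2 (e^{\mu+1}+\fr{1}{\mu})$ with $D(r) = \sum_{j=0}^r |c_{r-j}| \|L(0)\|_2^j D^{(j)}$ and $D^{(j)} = \sum_{|e|\le \lfloor j/2 \rfloor + 1} D^{(j)}_e$, where all the matrices $D^{(j)}_e$ have non-negative entries. So by sub-additivity and homogeneity of $\|\cdot\|_\infty$ it is enough to prove $\|D^{(j)}\|_\infty \le 8(j+2)3^j$ for each $0 \le j \le r$; inserting this into the formula for $v_r$ and keeping the sum over $j$ intact gives exactly the asserted bound (in fact one obtains a slightly smaller constant, but $8$ is tidier).

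The next step is to translate $\|D^{(j)}\|_\infty$ into a coefficient count. By definition $\|D^{(j)}\|_\infty$ is the larger of the two row sums of $D^{(j)}$; the first row sum is $\sum_{|e| \le \lfloor j/2\rfloor + 1} \left( \left\| \fr{\D}{\D a_{n+e}} P^{(j)}_1 \right\|_{\mathrm{co}} + \left\| \fr{\D}{\D b_{n+e}} P^{(j)}_1 \right\|_{\mathrm{co}} \right)$, where $P^{(j)}_1 = a_n(\ti g^{(j+1)}_{n+1} - \ti g^{(j+1)}_n)$ is the first component of the vector field in \eqref{eq:HTH_1}--\eqref{eq:HTH_2}, $P^{(j)}_2 = \ti h^{(j+1)}_n - \ti h^{(j+1)}_{n-1}$ is the second, and $\|\cdot\|_{\mathrm{co}}$ denotes the sum of the absolute values of the coefficients of a polynomial. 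As recorded in the proof of Theorem~\ref{thm:hth_solest}, $P^{(j)}_1$ and $P^{(j)}_2$ are homogeneous of degree $j+2$ and only involve the variables at sites within $\lfloor j/2\rfloor + 1$ of $n$, so summing over $e$ is the same as summing over all sites. Combined with the elementary fact that, for a homogeneous polynomial $P$ of degree $d$, the sum of $\|\cdot\|_{\mathrm{co}}$ over all its first partial derivatives equals $d\,\|P\|_{\mathrm{co}}$, this yields $\|D^{(j)}\|_\infty \le (j+2)\max\left( \|P^{(j)}_1\|_{\mathrm{co}}, \|P^{(j)}_2\|_{\mathrm{co}} \right)$.

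Finally I would estimate $\|P^{(j)}_1\|_{\mathrm{co}}$ and $\|P^{(j)}_2\|_{\mathrm{co}}$ through the walk expansion of $L$. Since $\ti g^{(i)}_n = \spr{\delta_n}{L^i \delta_n}$ and $\ti h^{(i)}_n = 2 a_n \spr{\delta_{n+1}}{L^i \delta_n}$, and every entry of $L$ is one of the $a_k$ or $b_k$ with coefficient $+1$, expanding $L^i$ shows that $\|\ti g^{(i)}_n\|_{\mathrm{co}}$ equals the number of nearest-neighbour walks of length $i$ on $\Z$ returning to their starting site, hence is at most $3^i$, while $\|\ti h^{(i)}_n\|_{\mathrm{co}} = 2 M(i)$ with $M(i)$ the number of length-$i$ walks from a site to a fixed neighbour; as the walks ending one step to the left and those ending one step to the right are disjoint and equinumerous by reflection, $2 M(i) \le 3^i$. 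A triangle inequality — together with the fact that multiplying by $a_n$ leaves $\|\cdot\|_{\mathrm{co}}$ unchanged — then gives $\|P^{(j)}_1\|_{\mathrm{co}} \le 2 \cdot 3^{j+1}$ and $\|P^{(j)}_2\|_{\mathrm{co}} \le 4 M(j+1) \le 2 \cdot 3^{j+1}$, so $\|D^{(j)}\|_\infty \le 6(j+2)3^j \le 8(j+2)3^j$, and the claim follows. I do not expect a genuine obstacle: the whole argument is bookkeeping on top of Theorem~\ref{thm:hth_solest}. The one place that needs care is not to be wasteful with the triangle inequality — the bound $2 M(i) \le 3^i$ (rather than the crude $M(i) \le 3^i$) is exactly what keeps the constant as small as $8$; without it one would still get an explicit finite estimate, but not in the stated form. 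It is also cleaner to quote the homogeneity and locality of $P^{(j)}_1, P^{(j)}_2$ from the proof of Theorem~\ref{thm:hth_solest} than to re-extract them from Lemma~\ref{lem:structgh}.
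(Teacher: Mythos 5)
Your proposal is correct and takes essentially the same route as the paper: both reduce the claim to bounding $\|D^{(j)}\|_\infty$ by counting, through the walk expansion of $L^{j+1}$ (equivalently of $(S^++S^-+I)^{j+1}$), the monomials of the degree-$(j+2)$ polynomials $a_n\bigl(\ti{g}^{(j+1)}_{n+1}-\ti{g}^{(j+1)}_n\bigr)$ and $\ti{h}^{(j+1)}_n-\ti{h}^{(j+1)}_{n-1}$ and their partial derivatives, then insert this into $v_r=\|D(r)\|_\infty\|L(0)\|_2\bigl(e^{\mu+1}+\tfrac{1}{\mu}\bigr)$. The only differences are bookkeeping: the paper uses $\eta^{(j+1)}\le 2\xi^{(j+1)}$ and $\xi^{(j+1)}\le 3^{j}$ to arrive at $8(j+2)3^j$, while your Euler-type derivative count together with the reflection bound $2\xi^{(j+1)}\le 3^{j+1}$ yields the slightly smaller $6(j+2)3^j$, which still implies the stated estimate.
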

\begin{proof}
It is clear from \eqref{eq:velhier} that we need only estimate the quantity $\| D(r) \|_{\infty}$, with $D(r)$ as defined in \eqref{eq:defD}.
To see this, first recall that for each $1 \leq j \leq r$, the quantities $a_n(s)(\ti{g}_{n+1}^{(j+1)}(s) - \ti{g}_n^{(j+1)}(s))$ and $\ti{h}_n^{(j+1)}(s) -\ti{h}_{n-1}^{(j+1)}(s)$ 
are homogeneous polynomials (of degree $j+2$) in the variables $a_{n+e}(s)$ and $b_{n+e}(s)$ for $|e|\leq \lfloor\frac{j}{2}\rfloor+1$.
We need an estimate on the number of terms in each of these polynomials. 

Both sequences $\ti{g}^{(j+1)}$ and $\ti{h}^{(j+1)}$, see \eqref{eq:gj} and \eqref{eq:hj}, have components defined in terms of 
the operator $L(\x)$ given by \eqref{eq:defL}.
In terms of the shifts $S^{\pm}$ on $\ell^2(\mathbb{Z})$, i.e.\ $(S^{\pm}f)_n = f_{n \pm 1}$, one can write 
\be
L(\x) = aS^+ +a^-S^- + b \, .
\ee
Here $a$ and $b$ are regarded a multiplication operators and $a^{\pm}$ is the multiplication operator given by
$(a^{\pm}f)_n = a_{n \pm 1}f_n$. An upper bound on the desired number of terms can be obtained by taking
both $a$ and $b$ to be constant sequences, i.e., calculating
\be
\eta^{(j+1)} = \left\langle \delta_n, \left(S^+ + S^- +I \right)^{j+1} \delta_n \right\rangle = \sum_{k=0,\ k\ even}^{j+1} {j+1\choose k} {k\choose k/2}
\ee
and
\be
\xi^{(j+1)} = \left\langle \delta_{n+1},\left( S^++S^-+I\right)^{j+1} \delta_n \right\rangle = \sum_{k=0,\ k\ odd}^{j+1} {j+1\choose k} {k\choose (k+1)/2}
\ee
The bounds $\eta^{(j+1)} \leq 2 \xi^{(j+1)}$ and $\xi^{(j+1)} \leq 3^j$ readily follow. 

Next, using this estimate, we can expand
\be \label{eq:expgj}
a_n(s) \left( \ti{g}_{n+1}^{(j+1)}(s) - \ti{g}_n^{(j+1)}(s) \right) = \sum_{k=1}^{2\eta^{(j+1)}} \prod_{\ell =1}^{j+2} d_{n+e_{k,\ell}}
\ee
and
\be \label{eq:exphj}
\ti{h}_n^{(j+1)}(s) -\ti{h}_{n-1}^{(j+1)}(s) = 2 \sum_{k=1}^{2\xi^{(j+1)}} \prod_{\ell =1}^{j+2} d_{n+e'_{k,\ell}}
\ee
where $d_{n+e} \in \{ \pm a_{n+e}(s), \pm b_{n+e}(s) \}$. Here we have inserted a $2$ because of the definition of 
$\ti{h}^{(j+1)}$, see \eqref{eq:hj}. This form enables us to determine the entries of
the matrix $D^{(j)}_{n,e}(s)$. In fact, differentiation of the left hand side of \eqref{eq:expgj}, with respect to $z \in \{a_m , b_m \}$, yields
\be 
\sum_{k=1}^{2\eta^{(j+1)}} \sum_{o = 1}^{j+2} \frac{\partial}{\partial z} d_{n+e_{k,o}} \cdot \prod_{\ell \neq o}^{j+2} d_{n+e_{k,\ell}} 
= \sum_{|e| \leq \lfloor \frac{j}{2} \rfloor +1} \frac{\partial}{\partial z} d_{n+e} \sum_{k=1}^{2\eta^{(j+1)}} \sum_{o = 1}^{j+2} \delta_e(e_{k,o}) \prod_{\ell \neq o}^{j+2} d_{n+e_{k,\ell}} 
\ee
and an analogous formula holds for the partial derivative of the left hand side of \eqref{eq:exphj}. These expressions
determine the matrix entries of $D^{(j)}_{n,e}(s)$. The bound in \eqref{eq:1bdonD} now follows with 
\be \label{eq:Dej}
D^{(j)}_e = \begin{pmatrix}  \sum_{k=1}^{2\eta^{(j+1)}}\sum_{o=1}^{j+2}\delta_e(e_{k,o})&\sum_{k=1}^{2\eta^{(j+1)}}\sum_{o=1}^{j+2}\delta_e(e_{k,o})\\
                       2\sum_{k=1}^{2\xi^{(j+1)}}\sum_{o=1}^{j+2}\delta_e(e_{k,o})&2\sum_{k=1}^{2\xi^{(j+1)}}\sum_{o=1}^{j+2}\delta_e(e_{k,\xi})
                       \end{pmatrix} \, .
\ee
Summing on $e$ yields,
\be
\left\| D^{(j)} \right\|_{\infty} = \left\| \sum_{|e| \leq \lfloor \frac{j}{2} \rfloor +1} D_e^{(j)} \right\|_{\infty} = 8(j+2) \xi^{(j+1)}
\ee
where we have used $\eta^{(j+1)} \leq 2 \xi^{(j+1)}$. This shows that 
\be
\| D(r) \|_{\infty} \leq 8 \sum_{j=0}^{r} |c_{r-j}| \| L(0) \|_2^j (j+2) 3^j \, ,
\ee
and we are done.
\end{proof}

We end this section with an analogue of Theorem~\ref{thm:toda_lrb} for the hierarchy. 
Let $\alpha_t^{(r)}$ denote the dynamics corresponding to the
Toda hierarchy, i.e., $\alpha_t^{(r)}(A) = A \circ \Phi_t^{(r)}$ for all $A \in \mathcal{A}$.

\begin{thm} \label{thm:hier_lrb} Let $r \in \N_0$, $\x \in M$, and $\mu >0$. There exist numbers $C_r$ and $v_r$ for which given 
any observables $A, B \in \A^{(1)}$, the estimate 
\be \label{ineq:hier_gentbd}
|\{\alpha_t^{(r)}(A),B\}({\rm x})| \leq C \sum_{n, m \in \mathbb{Z}} \left( \left\| \fr{\D A }{ \D a_m}  \right\|_\x + \left\| \fr{\D A}{ \D b_m}  \right\|_\x \right) \left( \left\| \fr{\D B}{ \D a_n}  \right\|_\x + \left\| \fr{\D B}{ \D b_n}  \right\|_\x \right) e^{- \mu \left( \lceil |n-m|/ (\lfloor\frac{r}{2}\rfloor+1)\rceil - v_r |t| \right)} 
\ee
holds for all $t\in\R$. Here  $4C_r = \| a\|_{\infty} (1+e^{\mu/ (\lfloor\frac{r}{2}\rfloor+1)}) $ and $v_r$ is as in Theorem~\ref{thm:hth_solest}.
\end{thm}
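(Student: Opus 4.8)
The plan is to run the proof of Theorem~\ref{thm:toda_lrb} essentially verbatim, with the hierarchy solution estimate Theorem~\ref{thm:hth_solest} taking the place of Theorem~\ref{thm:toda_solest}. The modified Poisson bracket \eqref{eq:mpb} is a fixed bilinear form on $M$ and does not see which Hamiltonian generates the flow, so the point of departure is again
\be
\{\alpha_t^{(r)}(A),B\}(\x) = \fr14\sum_{n\in\Z} a_n\cdot\left[\fr{\D}{\D a_n}\alpha_t^{(r)}(A)\cdot\fr{\D B}{\D\tilde{b}_n} - \fr{\D}{\D\tilde{b}_n}\alpha_t^{(r)}(A)\cdot\fr{\D B}{\D a_n}\right](\x),
\ee
together with the chain rule, which gives, for $z\in\{a_n,\tilde{b}_n\}$,
\be
\left[\fr{\D}{\D z}\alpha_t^{(r)}(A)\right](\x)=\sum_{m\in\Z}\left[\alpha_t^{(r)}\left(\fr{\D A}{\D a_m}\right)\cdot\fr{\D}{\D z}\alpha_t^{(r)}(A_m)+\alpha_t^{(r)}\left(\fr{\D A}{\D b_m}\right)\cdot\fr{\D}{\D z}\alpha_t^{(r)}(B_m)\right](\x)
\ee
with $A_m,B_m$ the evaluation observables of Example~\ref{ex:1}, so that $[\fr{\D}{\D z}\alpha_t^{(r)}(A_m)](\x)=\fr{\D}{\D z}a_m(t,r)$ and $[\fr{\D}{\D z}\alpha_t^{(r)}(B_m)](\x)=\fr{\D}{\D z}b_m(t,r)$.

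First I would apply the triangle inequality to both identities, factoring out $\fr14\|a\|_\infty$ from the bracket and the finite norms $\|\D A/\D a_m\|_\x$, $\|\D A/\D b_m\|_\x$ (finite since $A\in\A^{(1)}$). What remains are the partial derivatives $|\fr{\D}{\D z}a_m(t,r)|$, $|\fr{\D}{\D z}b_m(t,r)|$ for $z\in\{a_n,b_n,b_{n+1}\}$, the last two arising from $\fr{\D}{\D\tilde{b}_n}=\fr{\D}{\D b_{n+1}}-\fr{\D}{\D b_n}$, and each of these is controlled by Theorem~\ref{thm:hth_solest} (applied with source site $n$, resp.\ $n+1$, and target site $m$): it yields $e^{-\mu(\lceil|n-m|/(\lfloor r/2\rfloor+1)\rceil-v_r|t|)}$, resp.\ the same with $|n-m|$ replaced by $|n+1-m|$. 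Since the ceiling is non-decreasing, $\lceil(|n-m|-1)/(\lfloor r/2\rfloor+1)\rceil\ge\lceil|n-m|/(\lfloor r/2\rfloor+1)\rceil-1$, so the unit shift produced by $\tilde{b}_n$ only costs a bounded multiplicative factor; collecting the $a_n$- and $\tilde{b}_n$-contributions and re-indexing the shifted $B$-sum via $\|\D B/\D\tilde{b}_n\|_\x\le\|\D B/\D b_{n+1}\|_\x+\|\D B/\D b_n\|_\x$ produces the prefactor $1+e^{\mu/(\lfloor r/2\rfloor+1)}$, which together with $\fr14\|a\|_\infty$ is the constant $C_r$ in the statement.

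A final triangle inequality then collapses the estimate into the double sum over $n,m\in\Z$ weighted by $\left(\|\D B/\D a_n\|_\x+\|\D B/\D b_n\|_\x\right)e^{-\mu(\lceil|n-m|/(\lfloor r/2\rfloor+1)\rceil-v_r|t|)}$, which is \eqref{ineq:hier_gentbd}; convergence of the sums is automatic since $A,B\in\A^{(1)}$ and $n-m\mapsto e^{-\mu\lceil|n-m|/(\lfloor r/2\rfloor+1)\rceil}$ is summable. I do not expect a genuine obstacle here: the heavy lifting — turning the homogeneity structure of Lemma~\ref{lem:structgh} into the rescaled-distance exponential decay — has already been done in Theorem~\ref{thm:hth_solest}, so the only point that needs attention rather than pure bookkeeping is that this decay is measured in $\lceil|n-m|/(\lfloor r/2\rfloor+1)\rceil$ instead of in $|n-m|$, and must be tracked consistently through the unit shifts and the re-indexing. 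Specializing to observables supported on disjoint sets $X,Y$ then gives, exactly as in \eqref{eq:tlrb2}, a bound $|\{\alpha_t^{(r)}(A),B\}(\x)|\le C_r\|A\|_{1,\x}\|B\|_{1,\x}\,e^{-\mu(\lceil d(X,Y)/(\lfloor r/2\rfloor+1)\rceil-v_r|t|)}$.
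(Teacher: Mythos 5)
Your proposal is correct and is essentially the paper's own proof, which consists of the single remark that the estimate follows as in the proof of Theorem~\ref{thm:toda_lrb} using Theorem~\ref{thm:hth_solest} as input; you have simply spelled that argument out. One small quibble on the constant: the ceiling inequality you cite, $\lceil(|n-m|-1)/(\lfloor r/2\rfloor+1)\rceil\ge\lceil|n-m|/(\lfloor r/2\rfloor+1)\rceil-1$, makes the unit shift from $\tilde{b}_n$ cost a factor $e^{\mu}$ rather than $e^{\mu/(\lfloor r/2\rfloor+1)}$, so strictly your argument gives $4C_r=\|a\|_\infty(1+e^{\mu})$ --- but this discrepancy originates in the paper's stated constant and does not affect the substance of the bound.
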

\begin{proof}
The estimate \eqref{ineq:hier_gentbd} follows as in the proof of Theorem~\ref{thm:toda_lrb} using
the results of Theorem~\ref{thm:hth_solest} as input.
\end{proof}


%
%
\section{Results for the Perturbed Hierarchy} \label{sec:perthier}

The purpose of this section is to demonstrate that the methods from Section~\ref{sec:pert_toda} also apply to the Toda
hierarchy. Many of the proofs follow closely the previous arguments, and so we only sketch the
details. 

To introduce the relevant class of perturbations, we first recall the Hamiltonian formulation of the hierarchy, see e.g.\ Section 1.7 of \cite{GHMT}.
For each $r \in \mathbb{N}_0$, consider the following formal Hamiltonian 
\be \label{eq:hierham}
H_r (\x) = \frac{4}{r+2} \sum_{k \in \mathbb{Z}} \sum_{j=0}^r c_{r-j} \big( \tilde{g}_k^{(j+2)}(\x) -\lambda_{r+2}\big), \qquad
\lambda_r=  \begin{cases}
\frac{1}{2^r}{r\choose r/2}, & r \text{ even},\\
0, & r \text{ odd},
\end{cases}
\ee
with $\tilde{g}_k^{(j+2)}$ as defined in \eqref{eq:gj}.

It can be shown that $H_r$ generates the Toda hierarchy in the sense that for any $A \in \mathcal{A}_0$,
\be
\frac{d}{dt} \alpha_t^{(r)}(A) = \alpha_t^{(r)} \left( \left\{ A, H_r \right\} \right) = \left\{ \alpha_t^{(r)}(A), H_r \right\}
\ee
where $\alpha_t^{(r)}$ is the dynamics associated with the Toda hierarchy as introduced before Theorem~\ref{thm:hier_lrb}. In fact, it is 
shown in e.g.\ \cite[Thm.~1.71]{GHMT} that
\be
\frac{\partial H_r}{ \partial a_n} (\x) = 4 g_n( \x, r) \quad \mbox{and} \quad A_n \frac{\partial H_r}{ \partial b_n}(\x) = 4 h_n(\x, r) 
\ee
and therefore the evolution equations 
\be
\frac{d}{dt} \alpha_t^{(r)}(A_n) = \alpha_t^{(r)} \left( \left\{ A_n , H_r \right\} \right) = \frac{1}{4} \alpha_t^{(r)} \left( A_n \left( \frac{\partial H_r}{\partial b_{n+1}} - \frac{\partial H_r}{\partial b_n} \right) \right)
\ee
and similarly,
\be
\frac{d}{dt} \alpha_t^{(r)}(B_n) = \alpha_t^{(r)} \left( \left\{ B_n , H_r \right\} \right) = \frac{1}{4} \alpha_t^{(r)} \left( A_n \frac{\partial H_r}{\partial a_n} -  A_{n-1} \frac{\partial H_r}{a_{n-1}} \right) 
\ee
follow; compare with \eqref{eq:HTH_1} and \eqref{eq:HTH_2}. 

Note that for $r=0$
\be
H_0(\x)  = \sum_{k\in\Z} \left( 2b_k^2 + 4a_k^2 - 1 \right) .
\ee
is different from \eqref{eq:todahamab}. However, the equations of motion \eqref{eq:toda} are the same as above.

We can now introduce perturbations as in Section~\ref{sec:pert_toda}.
Fix $r \in \N_0$ and let $W: \mathbb{R} \to [0, \infty)$ satisfy $W \in C^2(\R)$. 
Consider the formal Hamiltonian
\be 
H_r^{\w} = H_r + \sum_{n \in \mathbb{Z}} W_n
\ee
where $H_r$ is as in \eqref{eq:hierham} above and $W_n$ is the observable with $W_n(\x) = W( \ln(4a_n^2))$.
 
 The equations of motion corresponding to $H_r^\w$ are 
 \be \label{eq:perthiersola}
 \dot{a}^{\w}_n(t,r) = a_n^{\w}(t,r) \left( g_{n+1}(t,r) -g_n(t,r) \right)
 \ee
 and
 \be \label{eq:perthiersolb}
  \dot{b}^{\w}_n(t,r) = h_n(t,r) - h_{n-1}(t,r) +R_n(t) 
 \ee
 where
 \be
 R_n(t) = \frac{1}{2} \left[ W'(\ln(4a_n^{\w}(t,r)^2)) - W'(\ln(4a_{n-1}^{\w}(t,r)^2))  \right] \, .
 \ee
 
Again, local existence and uniqueness of solutions of \eqref{eq:perthiersola} and \eqref{eq:perthiersolb}, corresponding to initial conditions $\x \in M_0$,
follows from standard results, \cite[Thm.~4.1.5]{Abraham}.
As before, let us denote by $\Phi^{\w}_{t,r}$ the perturbed flow of the Toda hierarchy, i.e., the function 
$\Phi^{\w}_{t,r}(\x) = \left\{ \left(a_n^{\w}(t,r), b_n^{\w}(t,r) \right) \right\}$.
Our arguments apply to the set of initial conditions $M_{b,r} =M_{b,r}(W)$ with bounded trajectories, i.e., for which there exists numbers $C_1, C_2 < \infty$ with
\be
\sup_{t \in \mathbb{R}} \| \Phi^{\w}_{t,r}( \x) \|_M \leq C_1 \quad \mbox{and} \quad \sup_{t \in \mathbb{R}} \sup_{n \in \mathbb{Z}} \frac{1}{|a_n^{\w}(t,r)|} \leq C_2 \, .
\ee

For initial conditions $\x \in M_{b,r}$, there are results similar to the two main estimates from Section~\ref{sec:pert_toda}. 
We first state an analogue of Theorem~\ref{thm:pert_toda_solest}.

\begin{thm} \label{thm:pert_hth_solest} Fix $r \in \N_0$, $W \in C^2(\mathbb{R})$ with $W'' \in L^{\infty}(\mathbb{R})$, and $\x \in M_{b,r}$.
For each $\mu >0$, there exist a number $v^{\w}_r = v^{\w}_r(\mu, \x)$
for which given any $n,m \in \Z$, the estimate 
\be  \label{eq:simp_pert_hier}
\max \left[ \abs{ \fr{\D}{ \D z} a_n^\w(t,r)}, \abs{ \fr{\D}{ \D z} b_n^\w(t,r)} \right] \leq 
e^{- \mu \left( \lceil |n-m|/ (\lfloor\frac{r}{2}\rfloor+1)\rceil - v_r^{\w} |t| \right)} \, ,
\ee
holds for all $t\in\R$. Here $z\in\{a_m,b_m\}$.
\end{thm}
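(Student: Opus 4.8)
The plan is to run the iteration scheme from the proof of Theorem~\ref{thm:hth_solest}, which controls the higher‑order structure of the hierarchy, and graft onto it the treatment of the anharmonic term from the proof of Theorem~\ref{thm:pert_toda_solest}. First I would fix $r\in\N_0$, $W$, $\mu$, and $\x\in M_{b,r}$, and set
\be
F_n^{\w}(t,r)=\begin{pmatrix} a_n^{\w}(t,r)\\ b_n^{\w}(t,r)\end{pmatrix}.
\ee
Using \eqref{eq:perthiersola}, \eqref{eq:perthiersolb} together with the definitions \eqref{eq:gj}, \eqref{eq:hj} of $g$ and $h$, I would write $F_n^{\w}(t,r)$ as $F_n^{\w}(0,r)$ plus a sum over $0\le j\le r$ of the time integrals of the increments $a_n^{\w}(\ti{g}_{n+1}^{(j+1)}-\ti{g}_n^{(j+1)})$ and $\ti{h}_n^{(j+1)}-\ti{h}_{n-1}^{(j+1)}$, plus the integral of $(0,R_n(s))^{\T}$ with $R_n(s)=\tfrac12[W'(\ln(4a_n^{\w}(s)^2))-W'(\ln(4a_{n-1}^{\w}(s)^2))]$. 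By Lemma~\ref{lem:structgh} the $g,h$‑increments are homogeneous polynomials of degree $j+2$ in the variables $a_{n+e}^{\w}(s),b_{n+e}^{\w}(s)$ with $|e|\le\lfloor\tfrac{j}{2}\rfloor+1$, exactly as in Theorem~\ref{thm:hth_solest}, while $R_n$ involves only the two sites $n$ and $n-1$.

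Next I would differentiate with respect to $z\in\{a_m,b_m\}$. The $g,h$‑part reproduces the matrices $D_{n,e}^{(j)}(s)$, whose entries are homogeneous of degree $j+1$; since the perturbed flow stays in $M_{b,r}$ I would use the a‑priori bound $\|\Phi_{t,r}^{\w}(\x)\|_M\le C_1$ in place of the Lax estimate $\|L(0)\|_2$ to get $|D_{n,e}^{(j)}(s)|\le C_1^{j+1}D_e^{(j)}$ with $D_e^{(j)}$ as in \eqref{eq:Dej}. The $R_n$‑part contributes, via $\tfrac{\D}{\D z}W'(\ln(4a_k^{\w}(s)^2))=W''(\ln(4a_k^{\w}(s)^2))\,(a_k^{\w}(s))^{-1}\tfrac{\D}{\D z}a_k^{\w}(s)$ and $|a_k^{\w}(s)|^{-1}\le C_2$, an extra matrix supported on $|e|\le1$ whose only nonzero entries lie in the second row and are bounded by $C_2\|W''\|_\infty$ (compare the $(2,1)$ entry of $D_e^{\w}$ in the proof of Theorem~\ref{thm:pert_toda_solest}). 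Collecting everything I obtain a componentwise inequality
\be
\left|\fr{\D}{\D z}F_n^{\w}(t,r)\right| \le \left|\fr{\D}{\D z}F_n^{\w}(0,r)\right| + \sum_{|e|\le\lfloor\frac r2\rfloor+1}\int_0^{|t|}\widetilde{D}_e\left|\fr{\D}{\D z}F_{n+e}^{\w}(s,r)\right|ds,
\ee
with $\widetilde{D}_e=\sum_{j=0}^r|c_{r-j}|C_1^{j+1}D_e^{(j)}$ plus the perturbation piece (present only for $|e|\le1\le\lfloor\tfrac r2\rfloor+1$).

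Then I would iterate this inequality as in Theorems~\ref{thm:toda_solest} and \ref{thm:hth_solest}, starting from $\tfrac{\D}{\D z}F_n^{\w}(0,r)=(1,0)^{\T}\delta_m(n)$ (resp.\ $(0,1)^{\T}\delta_m(n)$); the $k$‑th term carries a Kronecker delta forcing the displacement $|n-m|$ to be covered in $k$ steps of length at most $\lfloor\tfrac r2\rfloor+1$, so the sum starts at $k=\lceil|n-m|/(\lfloor\tfrac r2\rfloor+1)\rceil$. Summing on $e$ yields a single $2\times2$ matrix $\widetilde D=C_1 D^{\w}(r)$, where $D^{\w}(r)=\sum_{j}|c_{r-j}|C_1^{j}D^{(j)}$ as in \eqref{eq:defD} plus a contribution with entries $\le C_2\|W''\|_\infty/C_1$, and then
\be
\left\|\fr{\D}{\D z}F_n^{\w}(t,r)\right\|_\infty \le \sum_{k=\lceil|n-m|/(\lfloor\frac r2\rfloor+1)\rceil}^{\infty}\fr{(C_1|t|)^k}{k!}\,\|D^{\w}(r)\|_\infty^k,
\ee
so a Stirling estimate identical to \eqref{eq:stir} gives \eqref{eq:simp_pert_hier} with $v_r^{\w}=\|D^{\w}(r)\|_\infty\,C_1\left(e^{\mu+1}+\tfrac1\mu\right)$, the scalar prefactor being $\le1$ after this optimization. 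The hard part is purely bookkeeping: one must verify that the perturbation term genuinely fits the same matrix recursion with range $\le1$ and weight $C_2\|W''\|_\infty$, and that merging it with the $j$‑dependent, growing‑range hierarchy matrices $D_e^{(j)}$ leaves the lower limit $\lceil|n-m|/(\lfloor\tfrac r2\rfloor+1)\rceil$ of the iteration sum untouched; everything else is a routine repetition of the arguments already given.
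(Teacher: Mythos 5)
Your proposal is correct and follows essentially the same route as the paper's proof: the authors likewise repeat the iteration of Theorem~\ref{thm:hth_solest} with the $M_{b,r}$ constants $C_1$, $C_2$ replacing $\| L(0)\|_2$, absorb the derivative of the $W'$-term as an extra range-one matrix (their $D_e^{(r+1)}$ with weight $C_2\|W''\|_\infty$, your second-row perturbation piece), and obtain $v_r^\w = \|D_r^\w\|_\infty\bigl(e^{\mu+1}+\tfrac{1}{\mu}\bigr)$, which coincides with your $\|D^\w(r)\|_\infty C_1\bigl(e^{\mu+1}+\tfrac{1}{\mu}\bigr)$ after factoring out $C_1$. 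No substantive differences.
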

\begin{proof}
We will follow closely the proof of Theorem~\ref{thm:pert_toda_solest} using Theorem~\ref{thm:hth_solest} as input. 
Take $t \geq 0$ and, as before, introduce
\be
F_n^\w(t,r)=\begin{pmatrix} a_n^\w(t,r) \\ b_n^\w(t,r) \end{pmatrix}.
\ee
We will suppress the dependence on $r$. Since $W$ is sufficiently smooth,
$F_n^{\w}$ is differentiable with respect to $z \in \{a_m,b_m\}$ and the bound
\be
\left|\fr{\D}{ \D z} F_n^{\w}(t)\right| \leq \left|\fr{\D}{ \D z} F_n^{\w}(0)\right| + \sum_{j=0}^{r+1} \alpha_j \sum_{|e|\leq \beta_j}\int_0^t D_{e}^{(j)} \left|\fr{\D}{ \D z} F_{n+e}^{\w}(s)\right| ds \, .
\ee
follows as in Theorem~\ref{thm:hth_solest}. Here we have denoted by 
\be
\alpha_j = |c_{r-j}| C_1^{j+1} \quad \mbox{for } 0 \leq j \leq r \quad \mbox{with } \alpha_{r+1} = C_2 \| W'' \|_{\infty}
\ee
and
\be
\beta_j = \lfloor j/2 \rfloor+1\quad \mbox{for } 0 \leq j \leq r \quad \mbox{with } \beta_{r+1} = 1 \, .
\ee
Moreover, for $0 \leq j \leq r$, we have taken $D_e^{(j)}$ as in the proof of Theorem~\ref{thm:hth_solest} and set
\be
D_e^{(r+1)} = \begin{pmatrix} 0 & 0 \\ 1 & 0 \end{pmatrix} \quad \mbox{for } e = -1, \, 0 \quad \mbox{and} \quad D_1^{(r+1)} = \begin{pmatrix} 0 & 0 \\ 0 & 0 \end{pmatrix} \, .
\ee 
Taking $z=a_m$ and iterating yields
\be
\left|\fr{\D}{ \D z} F_n^{\w}(t)\right| \leq \sum_{k=\lceil |n-m|/ (\lfloor\frac{r}{2}\rfloor+1)\rceil}^\infty \fr{t^k}{k!} (D_r^{\w})^k \begin{pmatrix} 1 \\ 0 \end{pmatrix} 
\ee
with 
\be \label{eq:defD2}
D_r^{\w} = \sum_{j=0}^{r+1} \alpha_j D^{(j)}   \quad \mbox{and} \quad  D^{(j)} = \sum_{|e|\leq \beta_j} D_{e}^{(j)} \, .
\ee
A similar estimate holds for the case $z=b_m$.

As in the proof of Theorem~\ref{thm:hth_solest}, the bound \eqref{eq:simp_pert_hier} now follows with
\be
v_r^{\w} = \| D_r^{\w} \|_{\infty} \left( e^{\mu+1} + \frac{1}{\mu} \right) \, .
\ee

Note also that
\be
\| D_r^{\w} \|_{\infty} \leq C_1 \sum_{j=0}^r |c_{r-j}| C_1^j \| D^{(j)} \|_{\infty} + 2 C_2 \|W'' \|_{\infty} \, ,
\ee
and so the estimate from Lemma~\ref{lem:hiervel} applies here as well.
\end{proof}

The interpolation argument proven in Section~\ref{sec:pert_toda} generalizes to the hierarchy as well.  To see this, we first
prove an analogue of Lemma~\ref{lem2}.
\begin{lem}\label{lem:hier2der}
Fix $r \in \N_0$, $\mu >0$, and let $\x \in M_0$. There exists a number $C = C(r, \mu, \x)>0$  and a function $h$, depending on $r$, $\mu$, and $\x$, for which
given any $n,k, \ell \in\Z$, the estimate
\be \label{eq:2derhier}
\max \left[ \left| \fr{\D^2}{ \D z \D \tilde{b}_k} a_n(t,r) \right| , \left| \fr{\D^2}{ \D z \D \tilde{b}_k} b_n(t, r) \right| \right] \leq C e^{-\mu\left(\lceil |n-\ell|/ (\lfloor\frac{r}{2}\rfloor+1)\rceil \right)} e^{-\mu\left(\lceil |n-k|/ (\lfloor\frac{r}{2}\rfloor+1)\rceil \right)} e^{2 \mu v_r |t|} h(t) 
\ee
holds for all $t\in\R$. Here $a_n(t,r)$ and $b_n(t,r)$ are the solutions of \eqref{eq:toda} with initial condition $\x \in M$,
$\fr{\D}{\D \tilde{b}_k} = \fr{\D}{\D b_{k+1}} -\fr{\D}{\D b_k}$, $z \in \{a_\ell , b_\ell \}$, and the number $v_r$ is as in Theorem~\ref{thm:toda_solest}.
The function $h$ grows at most exponentially.
\end{lem}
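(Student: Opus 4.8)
The plan is to reproduce the proof of Lemma~\ref{lem2} essentially verbatim, feeding in Theorem~\ref{thm:hth_solest} wherever Theorem~\ref{thm:toda_solest} was used and the structural decomposition of the hierarchy nonlinearities from Lemma~\ref{lem:structgh} (as already organized in the proof of Lemma~\ref{lem:hiervel}). Fix $\x$ and $r$, suppress $r$ from the notation, and set $F_n(t) = \bigl(a_n(t,r),b_n(t,r)\bigr)^{\mathrm T}$. From \eqref{eq:HTH_1}--\eqref{eq:HTH_2} and the argument of Theorem~\ref{thm:hth_solest}, differentiating the integral equation for $F_n$ once with respect to $z = \tilde b_k$ yields $\fr{\D}{\D \tilde b_k}F_n(t) = \fr{\D}{\D \tilde b_k}F_n(0) + \sum_{j=0}^{r} c_{r-j}\sum_{|e|\le\lfloor j/2\rfloor+1}\int_0^t D^{(j)}_{n,e}(s)\,\fr{\D}{\D \tilde b_k}F_{n+e}(s)\,ds$, with the degree-$(j+1)$ homogeneous matrices $D^{(j)}_{n,e}(s)$ from the proof of Theorem~\ref{thm:hth_solest}, which satisfy $|D^{(j)}_{n,e}(s)| \le \|L(0)\|_2^{\,j+1}\,D^{(j)}_e$ by \eqref{eq:1bdonD}. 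Differentiating once more with respect to $z \in \{a_\ell,b_\ell\}$, and using $\fr{\D^2}{\D z\,\D\tilde b_k}F_n(0) = 0$, Leibniz produces an integral equation for the second derivative with an inhomogeneous ``source'' term $\sum_{j,e}\int_0^t\bigl(\fr{\D}{\D z}D^{(j)}_{n,e}(s)\bigr)\fr{\D}{\D\tilde b_k}F_{n+e}(s)\,ds$ and the usual iteration term $\sum_{j,e}\int_0^t D^{(j)}_{n,e}(s)\,\fr{\D^2}{\D z\,\D\tilde b_k}F_{n+e}(s)\,ds$.

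The source term is controlled as in \eqref{eq:2derbd1}. Each entry of $\fr{\D}{\D z}D^{(j)}_{n,e}(s)$ is, by homogeneity, a sum of $(j+1)$-fold products in which exactly one factor has the form $\fr{\D}{\D z}a_{n+e'}(s)$ or $\fr{\D}{\D z}b_{n+e'}(s)$ with $|e'|\le\lfloor j/2\rfloor+1 \le \lfloor r/2\rfloor+1$; by Theorem~\ref{thm:hth_solest} such a factor is at most $e^{-\mu(\lceil|n+e'-\ell|/(\lfloor r/2\rfloor+1)\rceil - v_r s)} \le e^{\mu}\,e^{-\mu\lceil|n-\ell|/(\lfloor r/2\rfloor+1)\rceil}\,e^{\mu v_r s}$, since $|e'|\le\lfloor r/2\rfloor+1$ forces $\lceil|n+e'-\ell|/(\lfloor r/2\rfloor+1)\rceil \ge \lceil|n-\ell|/(\lfloor r/2\rfloor+1)\rceil - 1$, and the remaining $j$ factors are each bounded by $\|L(0)\|_2$ via \eqref{hth_unifest}. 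Likewise $\fr{\D}{\D\tilde b_k}F_{n+e}(s)$ is, again by Theorem~\ref{thm:hth_solest} applied to $\tilde b_k$, bounded by $e^{\mu}\,e^{-\mu\lceil|n-k|/(\lfloor r/2\rfloor+1)\rceil}\,e^{\mu v_r s}\,(1,1)^{\mathrm T}$. Multiplying, the source term factors out the product of the two ceiling-exponentials and leaves $\int_0^{|t|}e^{2\mu v_r s}\,ds$ together with a constant vector $y = y(\mu,r)$, exactly as in the passage to \eqref{eq:2derbd2}.

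One then iterates as from \eqref{eq:2derbd2} to \eqref{eq:2derbd4}, absorbing the shift $e$ in each of the two decay arguments at the cost of a single factor $e^{\mu}$ (using again $\lceil|n+e-\ell|/(\lfloor r/2\rfloor+1)\rceil \ge \lceil|n-\ell|/(\lfloor r/2\rfloor+1)\rceil - 1$ and the analogous inequality for $k$). The resulting iteration matrix is $\tilde D(r) = \sum_{j=0}^{r}|c_{r-j}|\,\|L(0)\|_2^{\,j}\sum_{|e|\le\lfloor j/2\rfloor+1} e^{2\mu}D^{(j)}_e$, in complete analogy with $\tilde D$ in Lemma~\ref{lem2} and with $D(r)$ in \eqref{eq:defD}. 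Using the nested-integral identity \eqref{eq:intvel} with $v$ replaced by $v_r$, the crude bound $\|\tilde D(r)^j y\|_\infty \le \|\tilde D(r)\|_\infty^{\,j}\,\|y\|_\infty$, and the summation formula \eqref{eq:bigsum} with $A = \|\tilde D(r)\|_\infty/(2\mu v_r)$ and $B = 2\mu v_r$, one obtains \eqref{eq:2derhier} with an explicit $C$ of the form of \eqref{eq:consts} and $h(t) = \beta^{-1}\bigl(e^{2\mu v_r \beta|t|}-1\bigr)$ (respectively $h(t) = 2\mu v_r|t|$ when $\beta = 0$), where $\beta = \|\tilde D(r)\|_\infty/(2\mu v_r) - 1$; in particular $h$ grows at most exponentially. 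Convergence of all the series is guaranteed as in Lemma~\ref{lem2}, since $\fr{\D^2}{\D z\,\D\tilde b_k}F_n(\cdot)$ is continuous and hence bounded on compact time intervals.

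\textbf{Main obstacle.} There is no conceptual difficulty: the work is bookkeeping. The one point that needs care is verifying that a single factor $e^{2\mu}$ per iteration step (rather than a weight growing with $r$) suffices to absorb the change of the ceiling-rescaled distances $\lceil|n-\ell|/(\lfloor r/2\rfloor+1)\rceil$ and $\lceil|n-k|/(\lfloor r/2\rfloor+1)\rceil$ under the shifts $|e| \le \lfloor j/2\rfloor+1 \le \lfloor r/2\rfloor+1$, and confirming that the matrices $D^{(j)}_e$ and their entrywise bounds from the proof of Lemma~\ref{lem:hiervel} survive one further differentiation. Once these are in place, every summation is identical to the one in Lemma~\ref{lem2}.
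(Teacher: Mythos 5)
Your proposal follows essentially the same route as the paper's proof: differentiate the first-order integral equation for $\fr{\D}{\D\tilde b_k}F_n$, control the source term by combining Theorem~\ref{thm:hth_solest} with the homogeneity structure behind the matrices $D^{(j)}_e$, absorb the shifts $|e|\le\lfloor j/2\rfloor+1$ into the rescaled ceiling distances at the cost of $e^{\mu}$ per decay argument, iterate, and sum via \eqref{eq:intvel} and \eqref{eq:bigsum} to obtain $h$ as in \eqref{eq:defh}. The only discrepancy is bookkeeping: since \eqref{eq:1bdonD} gives $\|L(0)\|_2^{j+1}$ per step, the geometric ratio in the final sum should be $e^{2\mu}\|L(0)\|_2\,\bigl\|\sum_j |c_{r-j}|\|L(0)\|_2^{j}\sum_e D_e^{(j)}\bigr\|_\infty/(2\mu v_r)$ rather than your $\|\tilde D(r)\|_\infty/(2\mu v_r)$, which only shifts the explicit value of $\beta$ (hence of $h$) and does not affect the conclusion.
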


\begin{proof} 
Again, since this proof follows closely the arguments of Lemma~\ref{lem2}, we will only sketch the details. 

Keeping with the previous notation, it is clear that
\be
\fr{\D^2}{ \D z \D \tilde{b}_k} F_n(t) = 
\sum_{j=0}^r c_{r-j} \sum_{|e|\leq \lfloor j/2 \rfloor+1} \int_0^t \fr{\D}{ \D z} \left(  D_{n,e}^{(j)}(s) \fr{\D}{ \D \tilde{b}_k} F_{n+e}(s) \right) \, ds \, .
\ee
Differentiation with respect to $z$ produces two terms. We begin by estimating the first term, i.e., the one that contains
$\frac{\partial}{\partial z}D_{n,e}^{(j)}(s)$. Using Theorem~\ref{thm:hth_solest}, it is clear that if $|e| \leq \lfloor j/2 \rfloor+1$, then
\be \label{eq:1derhier}
\left| \frac{\partial}{\partial \tilde{b}_k} F_{n+e}(s) \right| \leq e^{\mu} e^{-\mu\left(\lceil |n-k|/ (\lfloor\frac{r}{2}\rfloor+1)\rceil-v_r|s|\right)} \begin{pmatrix} 1 \\ 1 \end{pmatrix} \, .
\ee
For convenience, let us denote by
\be
E(x) = e^{-\mu\left(\lceil |x|/ (\lfloor\frac{r}{2}\rfloor+1)\rceil \right)} \, .
\ee
Using \eqref{eq:1derhier}, one can show that
\be
\sum_{|e|\leq \lfloor j/2 \rfloor+1}\int_0^{|t|}  \left|  \fr{\D}{ \D z} D_{n,e}^{(j)}(s) \fr{\D}{ \D \tilde{b}_k} F_{n+e}(s) \right| \, ds \leq \gamma_j(\mu) E(n-k) E(n- \ell) \int_0^{|t|} e^{2 \mu v_r s} ds \, \begin{pmatrix} 1 \\ 1 \end{pmatrix} 
\ee
where
\be
\gamma_j(\mu) = 8 e^{2 \mu} \| L(0) \|_2^j (j+1) (j+2)  \, .
\ee
This proves that
\bea 
\left| \fr{\D^2}{ \D z \D \tilde{b}_k} F_n(t) \right|  & \leq &  C_{\mu}(r) E(n-k) E(n-\ell) \int_0^{|t|} e^{2 \mu v_r s} \, ds \, \begin{pmatrix} 1 \\ 1 \end{pmatrix}  \nonumber \\
& \mbox{ } & \quad  +  \| L(0) \|_2 \sum_{j=0}^r |c_{r-j}| \| L(0) \|_2^j  \sum_{|e|\leq \lfloor j/2 \rfloor+1} \int_0^{|t|}  D_e^{(j)} \left| \fr{\D^2}{\D z \D \tilde{b}_k} F_{n+e}(s) \right|  \, ds  
\eea
where
\be
C_{\mu}(r) = \sum_{j=0}^r |c_{r-j}| \gamma_j(\mu) \, ,
\ee
and $D_e^{(j)}$ is as in \eqref{eq:1bdonD}. Iteration, as in the proof of Lemma~\ref{lem2}, yields 
\be
\left| \fr{\D^2}{ \D z \D \tilde{b}_k} F_n(t) \right|  \leq  \frac{C_{\mu}(r)}{2 \mu v_r} E(n-k) E(n-\ell) \sum_{m=0}^{\infty} \left( \frac{\| L(0) \|_2}{2 \mu v_r} e^{2 \mu} \right)^m \sum_{p=m+1}^{\infty} \frac{(2 \mu v_r |t|)^p}{p!} \tilde{D}(r)^m 
\begin{pmatrix} 1 \\ 1 \end{pmatrix} \, ,
\ee
where
\be
\tilde{D}(r) = \sum_{j=0}^r |c_{r-j}| \| L(0) \|_2^j  \sum_{|e|\leq \lfloor j/2 \rfloor+1} D_e^{(j)} \, .
\ee
Taking $\infty$-norms, and using \eqref{eq:bigsum}, we see that
\be
\left\| \fr{\D^2}{ \D z \D \tilde{b}_k} F_n(t) \right\|_{\infty}  \leq C E(n-k) E(n-\ell) e^{2 \mu v_r |t|} h(t) 
\ee
where
\be
C = \frac{C_{\mu}(r)}{2 \mu v_r} \quad \mbox{and with } \beta = \frac{e^{2 \mu} \| L(0) \|_2 \| \tilde{D}(r) \|_{\infty}}{2 \mu v_r}  - 1  
\ee
$h$ is as in \eqref{eq:defh}. This proves \eqref{eq:2derhier} and we are done.
\end{proof}

Given Lemma~\ref{lem:hier2der} above, the analogue of Theorem~\ref{thm:pert_toda_interpol} follows
with only minor modifications. We state it below.
\begin{thm}
Fix $r \in \N_0$, $W \in C^2(\mathbb{R})$ with $W', W'' \in L^{\infty}(\mathbb{R})$, and let $\x \in M_{b,r}$. For each $\mu >0$ and any $\epsilon >0$, there are positive numbers 
$C = C(r, \epsilon)$, $D = D(r, \epsilon, \mu, \x, W)$, and $\delta =\delta(r, \epsilon, \mu, \x, W)$ such that
\be \label{eq:interbdhier}
\max \left[ \abs{ \fr{\D}{ \D z} a_n^\w(t,r)}, \abs{ \fr{\D}{ \D z} b_n^\w(t,r)} \right] \leq C G_{\mu_r}( |n-m| ) e^{(\mu+\epsilon) v_r |t|} \left[ 1 + D \left( e^{\delta |t|} -1 \right) \right] 
\ee
holds for all $t\in\R$. Here $v_r = v_r(\mu + \epsilon, \x)$ is as in Theorem~\ref{thm:hth_solest}, $\mu_r = \mu/(\lfloor\frac{r}{2}\rfloor+1)$, and $z\in\{a_m,b_m\}$.
\end{thm}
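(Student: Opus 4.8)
The plan is to reproduce the interpolation scheme of Theorem~\ref{thm:pert_toda_interpol} almost word for word, now feeding in the hierarchy estimates Theorem~\ref{thm:hth_solest} and Lemma~\ref{lem:hier2der} in place of Theorem~\ref{thm:toda_solest} and Lemma~\ref{lem2}. Fix $r$, $W$, $\x \in M_{b,r}$, and $\mu, \epsilon > 0$, and write $\alpha_t^{\w}(A) = A \circ \Phi^{\w}_{t,r}$ for the perturbed hierarchy dynamics. Exactly as in \eqref{eq:int}--\eqref{eq:calc}, the only difference between $H_r^{\w}$ and $H_r$ being $\sum_k W_k$ with $W_k$ depending on $a_k$ alone, one has $\{\alpha_{t-s}^{(r)}(A), W_k\} = -\frac12 W_k' \cdot \frac{\D}{\D\tilde b_k}\alpha_{t-s}^{(r)}(A)$ and hence
\be
\alpha_t^{\w}(A) = \alpha_t^{(r)}(A) - \frac12 \sum_{k \in \Z} \int_0^t \alpha_s^{\w}\!\left( W_k' \cdot \frac{\D}{\D \tilde b_k} \alpha_{t-s}^{(r)}(A) \right) ds .
\ee
Specializing $A$ to the evaluation observables $A_n, B_n$ of Example~\ref{ex:1} and differentiating in $z \in \{a_m, b_m\}$ produces, just as in \eqref{eq:intdz}--\eqref{eq:2dermat}, an integral equation for $\frac{\D}{\D z}F_n^{\w}(t,r)$ whose inhomogeneous term is $\frac{\D}{\D z}F_n(t,r)$ and whose two kernel terms involve (i) $\alpha_s^{\w}(W_k'' A_k^{-1})$ times the first derivative $\frac{\D}{\D\tilde b_k}\alpha_{t-s}^{(r)}(A_n)$, multiplying $\frac{\D}{\D z}F_k^{\w}(s,r)$, and (ii) $\alpha_s^{\w}(W_k')$ times the second derivative $\frac{\D^2}{\D z'\D\tilde b_k}\alpha_{t-s}^{(r)}(A_n)$, multiplying $\frac{\D}{\D z}F_\ell^{\w}(s,r)$.

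Next I would estimate the kernels. Since $\x \in M_{b,r}$, the flow $\Phi_s^{\w}(\x)$ stays in a fixed ball of $M$, so $\|L(\Phi_s^{\w}(\x))\|_2 \le 3 C_1$ and hence the hierarchy velocity $v_r(\Phi_s^{\w}(\x), \mu+\epsilon)$ is bounded by a constant $v_r^{*}$ uniform in $s$; replacing the $s$-dependent velocity by $v_r^{*}$ in the analogue of \eqref{eq:intvel}, exactly as in the proof of Theorem~\ref{thm:pert_toda_interpol}, makes the prefactors of Lemma~\ref{lem:hier2der} $s$-independent as well. Theorem~\ref{thm:hth_solest}, applied with parameter $\mu+\epsilon$, bounds $\frac{\D}{\D\tilde b_k}\alpha_{t-s}^{(r)}(A_n)$ by a constant times $e^{-(\mu+\epsilon)(\lceil |n-k|/(\lfloor r/2\rfloor+1)\rceil - v_r^{*}(|t|-s))}$, and Lemma~\ref{lem:hier2der} bounds the second derivatives by $C\, E(n-\ell)E(n-k)\, e^{2(\mu+\epsilon)v_r^{*}(|t|-s)}h(|t|-s)$ with $E(x) = e^{-(\mu+\epsilon)\lceil |x|/(\lfloor r/2\rfloor+1)\rceil}$ and $h$ growing at most exponentially. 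Since $\lceil |x|/(\lfloor r/2\rfloor+1)\rceil \ge |x|/(\lfloor r/2\rfloor+1)$, the estimate \eqref{eq:ebdg} (applied at base rate $\mu_r = \mu/(\lfloor r/2\rfloor+1)$ with spare $\epsilon/(\lfloor r/2\rfloor+1)$) turns every factor $E(x)$ into $C_\epsilon G_{\mu_r}(|x|)$, and on $M_{b,r}$ one has $\|\alpha_s^{\w}(W_k'' A_k^{-1})\|_{\x} \le C_2\|W''\|_\infty$ and $\|\alpha_s^{\w}(W_k')\|_{\x} \le \|W'\|_\infty$. Absorbing the first kernel term into the second (the exponential rate in time it carries being no larger), one reaches an inequality of the form
\be
\left\| \frac{\D}{\D z} F_n^{\w}(t,r) \right\|_{\infty} \le B_1 G_{\mu_r}(|n-m|)\, e^{a|t|} + B \sum_{k \in \Z} G_{\mu_r}(|n-k|) \int_0^{|t|} e^{b(|t|-s)} \left\| \frac{\D}{\D z} F_k^{\w}(s,r) \right\|_{\infty} ds ,
\ee
with $a = (\mu+\epsilon)v_r$, with $b$ a fixed positive multiple of $(\mu+\epsilon)v_r^{*}$, hence $a < b$.

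Finally I would iterate this inequality as in Theorem~\ref{thm:pert_toda_interpol}. The sub-convolution estimate \eqref{eq:itbd} for $G_{\mu_r}$ (valid for every positive rate, in particular $\mu_r$) collapses the iterated spatial sums to $\gamma^j G_{\mu_r}(|n-m|)$, the nested time integrals evaluate to $\frac{e^{b|t|}}{(a-b)^j}\sum_{k \ge j}\frac{((a-b)|t|)^k}{k!}$, and since $a - b < 0$ the resulting double series is resummed via \eqref{eq:bigsum} to $e^{a|t|}\bigl(1 + \frac{B\gamma}{B\gamma + b - a}(e^{(B\gamma + b - a)|t|} - 1)\bigr)$, which is precisely the asserted bound \eqref{eq:interbdhier} with $D = \frac{B\gamma}{B\gamma + b - a}$ and $\delta = B\gamma + b - a$; the dependence of $\delta$ on $\|W'\|_\infty$ and $\|W''\|_\infty$ is tracked through $B$ and $h$ exactly as in the remark following Theorem~\ref{thm:pert_toda_interpol}. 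Convergence of all iterations is justified as before, $\frac{\D}{\D z}F_n^{\w}$ being continuous and thus bounded on compact time intervals. The main obstacle is bookkeeping rather than conceptual: one must carry the hopping range $\lfloor r/2\rfloor+1$ and the ceiling functions consistently through the whole iteration — in effect passing once and for all to the weight $G_{\mu_r}$ so that the spatial sums remain sub-convolutive — and must be content to feed the higher-degree, longer-range nonlinearities of the hierarchy into the argument only through the already packaged bounds Theorem~\ref{thm:hth_solest} and Lemma~\ref{lem:hier2der}, so that no combinatorics beyond Lemma~\ref{lem:hiervel} is needed.
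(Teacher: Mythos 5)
Your proposal is correct and follows essentially the same route as the paper: interpolate between the perturbed and unperturbed hierarchy dynamics, bound the inhomogeneous term and the two kernels via Theorem~\ref{thm:hth_solest} and Lemma~\ref{lem:hier2der} (with the $s$-dependent velocity replaced by the uniform $v_r^*$ available on $M_{b,r}$), convert the ceiling-exponentials to the weight $G_{\mu_r}$ via the analogue of \eqref{eq:ebdg}, and iterate using \eqref{eq:itbd} and \eqref{eq:bigsum} exactly as in Theorem~\ref{thm:pert_toda_interpol}. This matches the paper's proof, which is itself only a sketch deferring to the Toda case.
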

\begin{proof}
Interpolating as before, it is clear that for any $\x \in M$, the bound
\be \label{eq:1intbdhier}
\begin{split}
\left\| \left[ \fr{\D}{\D z} \tilde{F}^{\w}_n(t) \right] (\x) \right\|_{\infty}  & \leq  \left\| \left[ \fr{\D}{\D z} \tilde{F}_n(t)  \right] (\x) \right\|_{\infty}  +  \sum_{k \in \mathbb{Z}} 
\int_0^{|t|} \left\| \left[  D_k(s;n) \fr{\D}{\D z} \tilde{F}^{\w}_k(s) \right] (\x) \right\|_{\infty} \, ds \\
& \quad \quad + \fr{\| W'\|_{\infty} }{2} \sum_{k, \ell \in \mathbb{Z}} 
\int_0^{|t|}  \left\| \left[ D_{k, \ell}(s;n) \fr{\D}{\D z} \tilde{F}^{\w}_{\ell}(s) \right] (\x) \right\|_{\infty} \, ds
\end{split}
\ee
follows as in the proof of \eqref{eq:1intbd}; here all quantities depend now also on $r$, but we have suppressed this in our notation.

With Theorem~\ref{thm:hth_solest}, it is clear that
\bea
\left\| \left[ \fr{\D}{\D z} \tilde{F}_n(t)  \right] (\x) \right\|_{\infty} & = & \left\| \fr{\D}{\D z} F_n(t)  \right\|_{\infty} \\
& \leq & e^{- (\mu + \epsilon) \left( \lceil |n-m|/ (\lfloor\frac{r}{2}\rfloor+1)\rceil - v_r |t| \right)}  \nonumber \\
& \leq & C_{\epsilon, r} G_{\mu_r}(|n-m|) e^{(\mu + \epsilon) v_r|t|} \, , \nonumber
\eea
where we have set
\be
C_{\epsilon , r} = \sup_{x\geq0} (1+x)^2e^{-\epsilon \left( \lceil x/ (\lfloor\frac{r}{2}\rfloor+1)\rceil \right) } \, , \quad \mu_r = \frac{\mu}{\lfloor\frac{r}{2}\rfloor+1} \, , \quad \mbox{and} \quad v_r = v_r(\x, \mu+\epsilon) \, .
\ee

Similarly, for each $\x \in M_{b,r}$, the matrix appearing in the second term satisfies
\bea
\left\| \left[  D_k(s;n) \right] (\x) \right\|_{\infty} & \leq & C_2 \| W'' \|_{\infty} e^{\mu+ \epsilon}  e^{- (\mu + \epsilon) \left( \lceil |n-k|/ (\lfloor\frac{r}{2}\rfloor+1)\rceil - v_r(s) (|t| -s) \right)}  \\
& \leq & C_2 \| W'' \|_{\infty} e^{\mu+ \epsilon} C_{\epsilon, r} G_{\mu_r}(|n-k|) e^{(\mu+\epsilon) v_r^*(|t|-s) }  \nonumber 
\eea
with
\be
v_r(s) = v_r \left( \Phi_{s,r}^{\w}(\x), \mu+ \epsilon \right) \quad \mbox{and} \quad v_r^* = \sup_{s \in \mathbb{R}} v_r(s) \, .
\ee

Lastly, the bound
\be
\left\| \left[  D_{k, \ell}(s;n) \right] (\x) \right\|_{\infty} \leq \tilde{C}  G_{\mu_r}(|n-\ell|)G_{\mu_r}(|n-k|) e^{2(\gamma+1)(\mu+ \epsilon)v_r^*(|t|-s)}
\ee
follows as in the proof of \eqref{eq:thirdmatrix}; with (possibly) different values of $\tilde{C}$ and $\gamma$. 

Iteration yields \eqref{eq:interbdhier} as in the proof of Theorem~\ref{thm:pert_toda_interpol}.
\end{proof}

%
%
%
%

\section{Locality bounds for more general initial conditions} \label{sec:timedep}
In this section, we return to the class of perturbations considered in Section~\ref{sec:pert_toda},
see also Section~\ref{sec:perthier}.
The goal here is to prove a locality estimate for more general initial conditions.
Let us recall the basic set-up.

Fix $W : \R \to [0, \infty)$ satisfying $W \in C^2( \mathbb{R})$ with $W',W''\in L^\infty(\R)$.
The formal Hamiltonian is given by
\be 
H^\w = H + \sum_{n \in \mathbb{Z}} W_n,
\ee
where $H$ is the Toda Hamiltonian as in \eqref{eq:todahamab}, and for each $n$, the perturbation $W_n$ is taken to be the 
observable $W_n(\x)=W( \ln(4a_n^2))$. The corresponding equations of motion are
\begin{eqnarray}\label{eq:pertab}
\dot{a}_n^{\w}(t) &= &a_n^{\w}(t) \left( b_{n+1}^{\w}(t) - b_n^{\w}(t) \right)\\
\dot{b}_n^{\w}(t) &= & 2 \left( a_n^{\w}(t)^2 -a_{n-1}^{\w}(t)^2 \right)+R_n(t)\nonumber
\end{eqnarray}
where 
\be
R_n(t) = \fr{1}{2}\left[W'( \ln(4a_n^{\w}(t)^2)) - W'( \ln(4a_{n-1}^{\w}(t)^2))\right].
\ee
As we discussed before, local existence and uniqueness of solutions of \eqref{eq:pertab}
corresponding to initial conditions $\x \in M_0$
follows again from Theorem~4.1.5 in \cite{Abraham}.  For $\x \in M_0$, let us again denote by
$\Phi_t^{\w}(\x) = \{ (a_n^{\w}(t), b_n^{\w}(t) ) \}$ the perturbed Toda flow. If we set,
$L(t) = L( \Phi_t^{\w}(\x))$ and $P(t)= P( \Phi_t^{\w}(\x))$ with $L$ and $P$ as in 
 \eqref{eq:defL} and \eqref{eq:defP}, it is easy to check that
 \be
\fr{d}{dt}L(t)=[P(t),L(t)]+R(t) \, ,
\ee
where the multiplication operator $R(t) : \ell^2(\mathbb{Z}) \to \ell^2(\mathbb{Z})$ is defined by
\be
[R(t)f]_n = R_n(t) f_n \, .
\ee
{F}rom this definition, it is clear that 
\be
\| R(t) \|_2 \leq \| W' \|_\infty,
\ee
and hence the operator norm of $R(t)$ is bounded uniformly in both $t$ and the initial condition ${\rm x} \in M_0$. 
 
One can check that $P(t)$ still corresponds to a family of unitary propagators, which we denote by $U(t,s)$.
A short calculation shows then that $\tilde{L}(t)=U(t,s)^*L(t)U(t,s)$ satisfies 
\be
\fr{d}{dt}\tilde{L}(t)=U(t,s)^*R(t)U(t,s) \, .
\ee
In this case, 
\be \label{eq:Lbd}
\| L(t) \|_2 = \| \tilde{L}(t) \|_2 \leq \| \tilde{L}(0) \|_2 + \int_0^{|t|} \| R(s) \|_2 \, ds \leq \| L(0) \|_2 + \| W' \|_\infty |t| \, .
\ee
Since the bound
\be \label{eq:solbd}
\max(\sup_n|a_n^\w(t)|,\sup_n|b_n^\w(t)|)\leq\left\|L(t)\right\|_2 \, 
\ee
holds, \eqref{eq:Lbd} produces a linear bound on the growth of solutions. {F}rom this, the existence of global solutions follows from
Proposition~4.1.22 in \cite{Abraham}.

A result analogous to Theorem~\ref{thm:toda_solest} follows.

\begin{thm} \label{thm:genlrb} Let ${\rm x} = \{ (a_n, b_n) \}_{n \in \mathbb{Z}} \in M_0$ with $a_*=\inf_n |a_n|>0$. Then for any $\mu >0$ and $n,m \in \Z$ the estimate
\be \label{eq:basest2}
\max \left[  \abs{\fr{\D}{ \D z} \ln(a_n^\w(t)^2)}, \abs{ \fr{\D}{ \D z} b_n^\w(t)} \right] \leq \max\left(1,\fr{2}{a_*}\right) e^{- \mu \left( |n-m| - v^\w(t) \right)}
\ee
holds for all $t\in\R$. Here $z\in\{a_m,b_m\}$ and $v^\w(t)$ is an explicit cubic polynomial with $v^\w(0)=0$.
\end{thm}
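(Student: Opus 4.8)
The plan is to run the Picard--iteration argument of Theorem~\ref{thm:toda_solest} for the perturbed system \eqref{eq:pertab}, but working with the variable $\rho_n(t) = \ln\bigl(a_n^\w(t)^2\bigr)$ in place of $a_n^\w(t)$. Since $\x \in M_0$ gives $a_n \neq 0$, the relation $\dot a_n^\w = a_n^\w(b_{n+1}^\w - b_n^\w)$ keeps $a_n^\w(t) \neq 0$ for all $t$, so $\rho_n(t)$ is well-defined (though possibly unbounded below). The point of the substitution is twofold. First, it linearizes the $a$-equation: $\dot\rho_n = 2(b_{n+1}^\w - b_n^\w)$. Second, writing $a_n^\w(t)^2 = e^{\rho_n(t)}$ and $\ln(4 a_n^\w(t)^2) = \ln 4 + \rho_n(t)$, the $b$-equation becomes $\dot b_n^\w = 2(e^{\rho_n} - e^{\rho_{n-1}}) + \tfrac12[W'(\ln 4 + \rho_n) - W'(\ln 4 + \rho_{n-1})]$, and differentiating this with respect to $z \in \{a_m, b_m\}$ brings down only the \emph{bounded} quantities $e^{\rho_n} = a_n^\w(s)^2$ and $W''(\ln 4 + \rho_n)$ --- never a factor $a_n^{-1}$. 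This is precisely why the statement controls $\fr{\D}{\D z}\ln(a_n^\w(t)^2)$ rather than $\fr{\D}{\D z} a_n^\w(t)$: for a general $\x \in M_0$ we have no uniform-in-time lower bound on $|a_n^\w(t)|$ (that bound is exactly the extra hypothesis defining $M_b$), so the naive iteration in the $a$-variable does not close, whereas the one in the $\rho$-variable does. This substitution is the crux of the argument; the rest is bookkeeping.

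Concretely, set $G_n(t) = \begin{pmatrix} \rho_n(t) \\ b_n^\w(t) \end{pmatrix}$. From the integrated equations of motion one obtains, exactly as in \eqref{eq:dfnz},
\be
\fr{\D}{\D z} G_n(t) = \fr{\D}{\D z} G_n(0) + \sum_{|e| \leq 1} \int_0^t D_{n,e}(s)\, \fr{\D}{\D z} G_{n+e}(s)\, ds,
\ee
where the only nonzero entries of $D_{n,e}(s)$ are a constant $\pm 2$ in the $(1,2)$-slot (for $e = 0, 1$, from $\dot\rho_n$) and $-\bigl(2 a_{n+e}^\w(s)^2 + \tfrac12 W''(\ln 4 + \rho_{n+e}(s))\bigr)$ in the $(2,1)$-slot (for $e = 0, -1$). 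By \eqref{eq:solbd} and \eqref{eq:Lbd}, $\sup_n |a_n^\w(s)| \leq \|L(s)\|_2 \leq \|L(0)\|_2 + \|W'\|_\infty |s|$, and $|W''| \leq \|W''\|_\infty$, so $|D_{n,e}(s)|$ is dominated entrywise by the matrix $D_e(s)$ obtained by replacing the $(2,1)$-entry by $\phi(s) := 2\bigl(\|L(0)\|_2 + \|W'\|_\infty |s|\bigr)^2 + \tfrac12 \|W''\|_\infty$, a quadratic polynomial in $|s|$. For the initial data, $\bigl|\fr{\D}{\D a_m} G_n(0)\bigr| \leq \begin{pmatrix} 2/a_* \\ 0 \end{pmatrix}\delta_m(n)$ because $\fr{\D}{\D a_m}\rho_n(0) = \tfrac{2}{a_m}\delta_m(n)$ and $|a_m| \geq a_* > 0$, while $\bigl|\fr{\D}{\D b_m} G_n(0)\bigr| = \begin{pmatrix} 0 \\ 1 \end{pmatrix}\delta_m(n)$; this accounts for the prefactor $\max(1, 2/a_*)$.

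From here the argument mirrors Theorem~\ref{thm:toda_solest}. Bounding $\phi(s) \leq \phi(|t|)$ under the time integrals (monotonicity in $|s|$) and iterating --- the remainder of the finite iterations vanishing by continuity of $\fr{\D}{\D z} G_n$ on compact time intervals --- one arrives at
\be
\left| \fr{\D}{\D z} G_n(t) \right| \leq \max\!\left(1, \frac{2}{a_*}\right) \sum_{k = |n-m|}^{\infty} \frac{|t|^k}{k!}\, D(|t|)^k e_z, \qquad D(|t|) = \begin{pmatrix} 0 & 4 \\ 2\phi(|t|) & 0 \end{pmatrix},
\ee
where $e_z$ is the relevant unit vector. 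The matrix $D(|t|)$ has eigenvalues $\pm\sqrt{8\phi(|t|)}$, so the eigenvector decomposition used to pass to \eqref{toda_L0est} yields, with $c(t) := \sqrt{8\phi(|t|)}$,
\be
\left\| \fr{\D}{\D z} G_n(t) \right\|_\infty \leq C_0 \sum_{k = |n-m|}^{\infty} \frac{(c(t)|t|)^k}{k!},
\ee
the constant $C_0$ collecting the prefactor $\max(1, 2/a_*)$ and the (bounded) eigenvector norms exactly as in the passage leading to \eqref{toda_L0est}.

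Finally apply the Stirling dichotomy \eqref{eq:stir} with the time-dependent $c(t)$ in place of the constant $c$: if $c(t)|t| \leq |n-m|\, e^{-(\mu+1)}$ it gives $e^{-\mu(|n-m| - \frac{c(t)}{\mu}|t|)}$, and otherwise $|n-m| \leq c(t)|t|\, e^{\mu+1}$, so the crude bound $\sum_{k \geq |n-m|}(c(t)|t|)^k/k! \leq e^{c(t)|t|}$ gives $e^{-\mu(|n-m| - c(t)(e^{\mu+1} + \frac{1}{\mu})|t|)}$. Either way \eqref{eq:basest2} holds with
\be
v^\w(t) = \left(e^{\mu+1} + \frac{1}{\mu}\right) c(t)\, |t|, \qquad c(t) = \sqrt{8\phi(|t|)} \leq \frac{1 + 8\phi(|t|)}{2},
\ee
where we used the elementary inequality $\sqrt{y} \leq (1+y)/2$; since $1 + 8\phi(|t|) = 1 + 4\|W''\|_\infty + 16\bigl(\|L(0)\|_2 + \|W'\|_\infty|t|\bigr)^2$ is quadratic in $|t|$, $v^\w(t)$ is an explicit cubic polynomial in $|t|$ with $v^\w(0) = 0$, as claimed. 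The case $z = b_m$ is handled exactly as in Theorem~\ref{thm:toda_solest} and produces the same bound with a possibly smaller prefactor. The only point requiring care beyond the unperturbed proof --- and the expected main obstacle --- is keeping track of the time dependence of the dominating matrix $D(\cdot)$ through the linearly growing a-priori estimate \eqref{eq:Lbd}, which is exactly what turns the constant unperturbed velocity into the polynomial $v^\w(t)$.
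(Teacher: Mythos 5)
Your route is, in outline, the same as the paper's: the change of variables $\rho_n=\ln(a_n^\w(t)^2)$, the a priori bound $\sup_n|a_n^\w(s)|\le\|L(0)\|_2+\|W'\|_\infty|s|$ from \eqref{eq:Lbd}--\eqref{eq:solbd}, the integral-equation iteration, and the Stirling dichotomy are exactly the paper's steps, and your identification of why one must work with $\ln(a_n^2)$ rather than $a_n$ is correct. The gap is in the final matrix bookkeeping. You sum the dominating matrices into the asymmetric matrix $D(|t|)=\begin{pmatrix}0&4\\ 2\phi(|t|)&0\end{pmatrix}$ and assert that the eigenvector decomposition contributes only ``(bounded) eigenvector norms'' to a constant $C_0$. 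That step fails: since $D(|t|)^2=8\phi(|t|)\,\idty$, one has $\|D(|t|)^{2j+1}\begin{pmatrix}1\\0\end{pmatrix}\|_\infty=(8\phi(|t|))^j\,2\phi(|t|)=\bigl(\sqrt{8\phi(|t|)}\bigr)^{2j+1}\sqrt{\phi(|t|)/2}$, so for $z=a_m$ the decomposition cost is $\max\bigl(1,\sqrt{\phi(|t|)/2}\bigr)$, which is not a constant: it grows linearly in $|t|$, and already at $t=0$ it equals $\sqrt{\phi(0)/2}\ge\|L(0)\|_2$, in general $>1$. So what you actually derive is \eqref{eq:basest2} with prefactor $\max(1,2/a_*)\max\bigl(1,\sqrt{\phi(|t|)/2}\bigr)$, not the claimed $\max(1,2/a_*)$, and you cannot push the excess into $v^\w$ as written because the theorem demands $v^\w(0)=0$.

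The paper avoids this entirely by symmetrizing: after pulling the overall factor $2$ out of the integral equation, it bounds \emph{both} nonzero entries by the single scalar $g(s)=1+(\|L(0)\|_2+\|W'\|_\infty|s|)^2+\|W''\|_\infty/4$, so the dominating matrix is $g(s)$ times a $0$--$1$ matrix whose normalized sum is the exchange matrix $\begin{pmatrix}0&1\\1&0\end{pmatrix}$; since this squares to the identity, its powers act on the unit vectors with infinity norm exactly $1$, no eigenvector constants arise, the prefactor is exactly $\max(1,2/a_*)$, and $v^\w(t)=h(t)(e^{\mu+1}+\mu^{-1})$ with $h(t)=2\int_0^{|t|}g(s)\,ds$ cubic and vanishing at $0$. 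Your version can be salvaged without symmetrizing by absorbing the factor $M(t)=\max\bigl(1,\sqrt{\phi(|t|)/2}\bigr)\ge1$ into the velocity: it multiplies only terms with $k\ge1$, so $M(t)\le M(t)^k$ lets you replace $c(t)$ by $M(t)c(t)$, still dominated by a quadratic polynomial in $|t|$, and the $k=0$ term (the only one surviving at $t=0$ or $n=m$) carries no such factor; but some repair of this kind is needed, and as written the ``bounded eigenvector norms'' step is wrong. (Your bounding of $\phi(s)$ by $\phi(|t|)$ instead of integrating $g(s)$ exactly is harmless --- both give a cubic $v^\w$.)
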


\begin{proof}
Fix $ {\rm x} \in M_0$ with $a_*>0$. Global existence on $M$ guarantees that for each $n \in \Z$, the function $F_n^\w : \mathbb{R} \to \mathbb{R}^2$ given by
\be
F_n^\w(t)=\begin{pmatrix} \ln(a_n^\w(t)^2) \\ b_n^\w(t) \end{pmatrix}
\ee
is well-defined for all $t\in\R$. In fact, since $a_*>0$, the equations of motion ensure that
\be
a_n^{\w}(t) = a_n^{\w}(0) \exp\left(\int_0^t (b_{n+1}^{\w}(s) - b_n^{\w}(s)) ds \right)
\ee 
preventing a singularity in the logarithm.
It is then clear that \eqref{eq:pertab} implies
\be \label{eq:flowatn2}
F_n^\w(t)= F_n^\w(0)+2 \int_0^t \begin{pmatrix}  b_{n+1}^\w(s) - b_n^\w(s)  \\   a_n^\w(s)^2 -a_{n-1}^\w(s)^2 + \fr{1}{2} R_n(s) \end{pmatrix} \, ds \, .
\ee
For any $z \in \{ a_m, b_m \}$, a relation similar to \eqref{eq:dfnz} holds, i.e.
\be \label{eq:dfnz2}
\fr{\D}{ \D z} F_n^\w(t)  =  \fr{\D}{ \D z} F_n^\w(0) +2 \sum_{|e|\leq 1}\int_0^t D_{n,e}^\w(s) \fr{\D}{ \D z} F_{n+e}^\w(s)ds,
\ee
where
\be
D_{n,e}^\w(s) = \begin{pmatrix} 0 & \delta_{1}(e) -\delta_0(e) \\  \left(a_n^\w(s)^2+\fr{\left\|W''\right\|_\infty}{4}\right) \delta_0(e) - \left(a_{n-1}^\w(s)^2+\fr{\left\|W''\right\|_\infty}{4}\right) \delta_{-1}(e) & 0  \end{pmatrix} 
\ee
The bounds in \eqref{eq:Lbd} and \eqref{eq:solbd} show that
\be
a_n^{\w}(s)^2 \leq \left( \| L(0) \|_2 + \| W' \|_{\infty} |s| \right)^2 \, ,
\ee
and so we find that 
\be \label{eq:gendzfn}
\left| \fr{\D}{\D z}F_n^\w(t) \right|  \leq  \left| \fr{\D}{\D z}F_n^\w(0) \right| + \sum_{|e|\leq 1} \int_0^{|t|} g(s)D_e^\w \left| \fr{\D}{\D z} F_{n+e}^\w(s) \right|ds
\ee
holds for any $t \in \mathbb{R}$. Here we have denoted by
\be
g(s) = 1+\left( \| L(0) \|_2 + \| W' \|_{\infty} |s| \right)^2+\fr{\left\|W''\right\|_\infty}{4}  \quad \mbox{and} \quad D_e^{\w} = \begin{pmatrix} 0 & \delta_{1}(e) +\delta_0(e) \\   \delta_0(e) +  \delta_{-1}(e) & 0  \end{pmatrix}
\ee
Iteration (with $z=a_m$) yields 
\bea
\left| \fr{\D}{\D z}F_n^\w(t) \right|  &\leq&  \sum_{k=|n-m|}^\infty \sum_{|e_1|\leq 1}\cdots\sum_{|e_k|\leq 1}\int_0^{|t|}\int_0^{t_1}\cdots\int_0^{t_{k-1}}g(t_1)\cdots g(t_k)\\
& &\times\ D_{e_1}^\w\cdots D_{e_k}^\w\left| \fr{\D}{\D z}F_{n+e_1+\cdots+e_k}^\w(0) \right|dt_k\cdots dt_1\\
&\leq& \fr{2}{|a_m|}\sum_{k=|n-m|}^\infty\fr{h(t)^k}{k!}(D^\w)^k\begin{pmatrix} 1 \\ 0 \end{pmatrix},
\eea
where $h(t)=2\int_0^{|t|}g(s)ds$ and
\be
D^\w=\fr{1}{2}\sum_{|e|\leq 1}D_e^\w=\begin{pmatrix}0 & 1 \\ 1 & 0 \end{pmatrix} \, .
\ee
In the case that $z=b_m$, it is clear that
\be
\left| \fr{\D}{\D z}F_n^\w(t) \right| \leq \sum_{k=|n-m|}^\infty\fr{h(t)^k}{k!}(D^\w)^k\begin{pmatrix} 0 \\ 1 \end{pmatrix} \, .
\ee
Taking the infinity norm, we get for any $\mu>0$
\bea 
\left\| \fr{\D}{\D z}F^{\w}(t) \right\|_\infty  &\leq& \max\left(1,\fr{2}{|a_m|}\right)\sum_{k=|n-m|}^\infty\fr{h(t)^k}{k!}\\
&\leq &\max\left(1,\fr{2}{a_*}\right) e^{- \mu \left( |n-m| - v^\w(t) \right)}, \nonumber
\eea
where $v^\w(t)=h(t)\left( e^{\mu+1}+\mu^{-1}\right)$.
\end{proof}

Again, everything extends to the hierarchy. Since the statements are clear, we do not rewrite them for the sake of brevity.

{\bf Acknowledgements:} The authors would like to thank Bruno Nachtergaele for suggesting the problem
discussed in this article and for several productive discussions. Also, the authors would like to 
acknowledge the hospitality of the Erwin Schr\"odinger Institute in Vienna, Austria
where parts of this paper were discussed.

\appendix

\section{Existence of Bounded Solutions for Hamiltonian Systems} \label{sec:app}

In this appendix we want to look at a general Hamiltonian system with nearest neighbor interaction:
\be
H({\rm x})=\sum_{n\in \Z}\left( \frac{p_n^2}{2} + V(q_{n+1} -q_{n}) \right),
\ee
where ${\rm x}=\{(p_n,q_n)\}_{n\in\Z}$. Let $V\in C^2(\R)$ with $V(x)\ge 0$ and $V(0)=V'(0)=0$ such that $0$
is a fixed point of the system. Since we want to obtain bounded solutions we will assume that
our interaction potential is confining in the sense that $V(x)\to+\infty$ as $x\to\pm\infty$. Since the
uniform motion $q_n(t) = q^0 + p^0 t$ (with $q^0,p^0$ some real constants) of the system is unbounded
we switch to relative coordinates $r_n=q_{n+1}-q_n$ in which the equation of motions read
\be\label{ghs}
\dot{r}_n = p_{n+1} - p_n,\qquad
\dot{p}_n = V'(r_n) - V'(r_{n-1}).
\ee
We will consider this system in the Hilbert space $\mathcal{X}=\ell^2(\Z) \times \ell^2(\Z)$. 

\begin{thm}\label{Lemma:Gerald}
Suppose $V\in C^2(\R)$ such that $0$ is a unique global minimum with $V(0)=V'(0)=0$,
$V''(0)>0$ and $V(x)\to+\infty$ as $|x|\to\infty$.

Then the system \eqref{ghs} has a unique global solution in
$\mathcal{X}$ for which the energy
\be
H(p,r)= \sum_{n\in \Z}\left( \frac{p_n^2}{2} + V(r_n) \right)
\ee
is finite and conserved. This solution is $C^1$ with respect to the initial condition.
Moreover, $0$ is a stable fixed point and all solutions satisfy
$\|(p(t),r(t))\|_2\le C$ as well as $\|(p(t),r(t))\|_\infty\le C$, where the constant $C$ depends only
on the initial condition.
\end{thm}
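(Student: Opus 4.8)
The plan is to regard \eqref{ghs} as an autonomous ODE on the Hilbert space $\mathcal{X}=\ell^2(\Z)\times\ell^2(\Z)$ with right-hand side $X(p,r)=\big(\{p_{n+1}-p_n\}_n,\ \{V'(r_n)-V'(r_{n-1})\}_n\big)$, and to run the classical four-step program: local well-posedness, energy conservation, a priori bounds, and then global existence together with Lyapunov stability. First I would check $X\in C^1(\mathcal{X},\mathcal{X})$. The shift contributions $p\mapsto\{p_{n\pm1}-p_n\}$ are bounded linear operators on $\ell^2$, so only the Nemytskii map $N(r)=\{V'(r_n)\}_n$ needs care. Since $V\in C^2$ with $V'(0)=0$, on a ball $\|r\|_2\le\rho$ one has $\|r\|_\infty\le\rho$ and $|V'(r_n)|=|V'(r_n)-V'(0)|\le L_\rho|r_n|$ with $L_\rho=\sup_{|x|\le\rho}|V''(x)|$, so $\|N(r)\|_2\le L_\rho\|r\|_2<\infty$; the Fr\'echet derivative of $N$ at $r$ is the bounded multiplication operator $h\mapsto\{V''(r_n)h_n\}_n$, and $r\mapsto DN(r)$ is continuous because $V''$ is uniformly continuous on compacts. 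Thus Picard--Lindel\"of (e.g.\ \cite[Thm.~4.1.5]{Abraham}) yields, for every initial condition in $\mathcal{X}$, a unique local solution depending $C^1$ on the initial data (e.g.\ \cite[Lemma~4.1.9]{Abraham}, as already used in Section~\ref{subsec:solest}). The same quadratic estimate $0\le V(x)\le\tfrac12 L_\rho x^2$ shows in passing that $H$ is finite on all of $\mathcal{X}$.

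Next I would establish conservation of $H$ along the local solution. Formally $\tfrac{d}{dt}H=\sum_n\big(p_n\dot p_n+V'(r_n)\dot r_n\big)$; substituting \eqref{ghs} produces two lattice sums that are absolutely convergent, since $\sum_n|p_n|\,|V'(r_n)|\le\|p\|_2\,\|N(r)\|_2<\infty$ by Cauchy--Schwarz and likewise for the shifted sum, so the summation index may be shifted and the telescoping cancellation gives $\tfrac{d}{dt}H=0$. Hence $H(p(t),r(t))=H_0:=H(p(0),r(0))$ throughout the interval of existence.

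The a priori bound is where the hypotheses on $V$ really enter, and I expect it to be the crux of the argument. From $\sum_n V(r_n(t))\le H_0$ and $V\ge0$ we get $V(r_n(t))\le H_0$ for each $n$; coercivity of $V$ makes the sublevel set $\{x:V(x)\le H_0\}$ compact, say contained in $[-R_0,R_0]$, so $\|r(t)\|_\infty\le R_0$ for all $t$. On $[-R_0,R_0]$ the function $g(x)=V(x)/x^2$ extends continuously through $x=0$ with $g(0)=\tfrac12 V''(0)>0$ (Taylor's theorem with Lagrange remainder together with continuity of $V''$), and is strictly positive on the punctured interval because $0$ is the \emph{unique} global minimum with $V(0)=0$; as a positive continuous function on a compact set it has a positive minimum $c>0$, so $V(x)\ge c\,x^2$ on $[-R_0,R_0]$ and therefore $\|r(t)\|_2^2\le c^{-1}\sum_n V(r_n(t))\le c^{-1}H_0$. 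Together with $\|p(t)\|_2^2\le 2H_0$ this yields $\|(p(t),r(t))\|_2\le C(H_0)$, and combining it with $\|r(t)\|_\infty\le R_0$ and $\|p(t)\|_\infty\le\|p(t)\|_2$ also $\|(p(t),r(t))\|_\infty\le C(H_0)$. Since these bounds are uniform on the maximal existence interval, the solution stays in a fixed bounded set and cannot blow up, so the standard continuation criterion (e.g.\ \cite[Prop.~4.1.22]{Abraham}) makes the solution global, with the stated uniform-in-time estimates and $C$ depending only on the initial condition.

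Finally, $H$ is a Lyapunov function for the fixed point $0$: it is continuous and conserved, $H\ge0$, and $H(p,r)=0$ forces $p=0$ and $V(r_n)=0$ for all $n$, hence $r=0$, by uniqueness of the minimum. Given $\epsilon>0$, taking $H_0$ small enough confines $\{V\le H_0\}$ to a neighbourhood of $0$ on which $V(x)\ge\tfrac14 V''(0)x^2$ (using coercivity, uniqueness of the minimum, and $V''(0)>0$), so the estimate above gives $\|(p(t),r(t))\|_2\le C'\sqrt{H_0}$; since $H_0\to0$ as $\|(p(0),r(0))\|_2\to0$ (using $V(x)\le Cx^2$ near $0$), Lyapunov stability follows. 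For the application to Section~\ref{sec:pert_toda}, one runs the theorem with $V$ replaced by a suitable constant shift of $r\mapsto V(r)+W(-r)$ (which is confining and has a nondegenerate unique global minimum under appropriate assumptions on $W$, the Toda part $e^{-r}+r-1$ being already confining): in Flaschka's variables the resulting bound $|r_n(t)|\le R_0$ gives $\tfrac12 e^{-R_0/2}\le a_n^\w(t)\le\tfrac12 e^{R_0/2}$ and boundedness of $b_n^\w(t)$, so the set $M_b$ of \eqref{def:C1C2} contains all finite-energy initial conditions and Theorems~\ref{thm:pert_toda_solest} and~\ref{thm:pert_toda_interpol} are not vacuous.
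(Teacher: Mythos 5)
Your proposal is correct and follows essentially the same route as the paper's proof: local well-posedness from the $C^1$ (locally Lipschitz) vector field on $\ell^2(\Z)\times\ell^2(\Z)$ via \cite[Thm.~4.1.5, Lem.~4.1.9]{Abraham}, conservation of $H$, the two-sided quadratic comparison $c_R x^2\le V(x)\le C_R x^2$ on compacta combined with coercivity to get the uniform $\ell^2$ and $\ell^\infty$ bounds, global existence by the continuation criterion \cite[Prop.~4.1.22]{Abraham}, and stability of $0$ from $H$ as a Lyapunov function. The only cosmetic difference is that you prove stability by a direct quantitative Lyapunov estimate where the paper simply invokes \cite[Thm.~4.3.11]{Abraham}, and you spell out the Nemytskii-map and energy-conservation details a bit more explicitly; the substance is the same.
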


\begin{proof}
First of all note that by our assumption on $V$ we can find constants $c_R$ and
$C_R$ for every $R>0$ such that $|V(x)| \le C_R x^2$, $|V'(x)| \le C_R |x|$ and $|V(x)| \ge c_R x^2$ for $|x|\le R$.

In particular, for $r\in\ell^2(\Z)$ with $\|r\|_2 \le R$ we have $\|V'(r)\|_2 \le C_R \|r\|_2$ and it follows that
the map $r_n\mapsto V'(r_n)$ is $C^1$ on $\ell^2(\Z)$. Since the same is true for the shift operator $x_n \mapsto x_{n-1}$,
our vector field is $C^1$ and local existence and uniqueness follow
from standard results \cite[Thm.~4.1.5]{Abraham}. This also implies that the flow is $C^1$ with respect to the initial
condition \cite[Lem.~4.1.9]{Abraham}. Moreover, $|H(p,r)| \le C_R \|(p,r)\|_2$ implies that $H$
is finite on $\mathcal{X}$ and a straightforward calculation shows that it is conserved by the flow.

Moreover, $H(p,r)\ge c_1\, \epsilon$ for $\|(p,r)\|_2=\epsilon$ and $\epsilon\le 1$. Hence \cite[Thm.~4.3.11]{Abraham}
shows that $0$ is a stable fixed point.

Finally, if $V(x)\to+\infty$ there is a constant $M_E$ such that $|x|\le M_E$ whenever $|V(x)| \le E$. Hence 
setting $E=H(p(0),r(0))$ we have $H(p(t),r(t))=E$ implying $\|p(t)\|_2 \le \sqrt{2E}$ and $\|r(t)\|_\infty\le M_E$.
But this implies $\|r(t)\|_2 \le c_{M_E}^{-1} \|V(r)\|_2 \le c_{M_E}^{-1} \sqrt{E}$.
Hence our vector field remains bounded along integral curves on finite $t$ intervals and hence all
solutions are global in time by  \cite[Prop.~4.1.22]{Abraham}.
\end{proof}

Note that, using $q_n(t) =q_n(0) + \int_0^t p_n(s) ds$, for our original variables we get
\be
\|q(t)\|_2 \le \|q(0)\|_2 + C |t|, \quad \|q(t)\|_\infty \le \|q(0)\|_\infty + C |t|.
\ee
Moreover, clearly the Toda potential $V(r)=e^{-r}+r-1$ satisfies the above assumptions.

\begin{thm}
Let $\x=(p,r) \in \mathcal{X}$ and $\mu >0$. There exists a number $v=v(\mu,\x)$
for which given any $n,m \in \mathbb{Z}$, the bound 
\be
\max \left[ \abs{ \fr{\D}{ \D z} p_n(t)}, \abs{ \fr{\D}{ \D z} r_n(t)} \right] \leq 
C \, e^{- \mu \left( |n-m| - v |t| \right)} \, ,
\ee
holds for all $t\in\R$ and each $z\in\{p_m,r_m\}$, where
\be
C = C(\x) = \max\left( \sup_{(t,n)\in\R\times\Z} |V'(r_n(t))|^{1/2}, 1\right).
\ee
In fact, one may take
\be
v = 2 C \left(e^{\mu+1}+\fr{1}{\mu}\right).
\ee
\end{thm}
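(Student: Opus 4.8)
\textit{Proof strategy.} The plan is to reproduce, essentially line for line, the proof of Theorem~\ref{thm:toda_solest}, with the a-priori estimate \eqref{toda_unifest} replaced by the conclusions of Theorem~\ref{Lemma:Gerald}. That theorem supplies a unique global solution of \eqref{ghs} which depends $C^1$ on the initial condition and which remains in a fixed compact subset of $\mathcal{X}$; in particular there is a constant depending only on $\x$ that bounds $\sup_{t,n}|p_n(t)|$ and $\sup_{t,n}|r_n(t)|$, and hence, since $V\in C^2$, also bounds $\sup_{t,n}|V'(r_n(t))|$ and $\sup_{t,n}|V''(r_n(t))|$. These two uniform bounds are the only information about the nonlinearity that the estimate uses, and they are what produce the constant $C$ in the statement --- for the Toda potential $V(r)=e^{-r}+r-1$ one has $V''=1-V'$, so a bound on $V''$ along the orbit is already controlled by the corresponding bound on $|V'|$.

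First I would fix $m\in\Z$ and $z\in\{p_m,r_m\}$, set $F_n(t)=\binom{p_n(t)}{r_n(t)}$, and rewrite \eqref{ghs} as the Volterra equation $F_n(t)=F_n(0)+\int_0^t\binom{V'(r_n(s))-V'(r_{n-1}(s))}{p_{n+1}(s)-p_n(s)}\,ds$. Differentiating with respect to $z$, which is legitimate by the $C^1$-dependence on the initial data (cf.\ Lemma~4.1.9 in \cite{Abraham}), produces an identity of exactly the form \eqref{eq:dfnz}, namely $\frac{\D}{\D z}F_n(t)=\frac{\D}{\D z}F_n(0)+\sum_{|e|\le1}\int_0^t D_{n,e}(s)\,\frac{\D}{\D z}F_{n+e}(s)\,ds$, where the $2\times2$ matrix $D_{n,e}(s)$ has off-diagonal entries $\pm1$ coming from the linear first equation and $\pm V''(r_{n+e}(s))$ coming from the second. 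Dominating the entries of $D_{n,e}(s)$ by the uniform constants above, one obtains, just as in the step from \eqref{eq:dfnz} to \eqref{eq:1bd}, the inequality $\bigl|\frac{\D}{\D z}F_n(t)\bigr|\le\bigl|\frac{\D}{\D z}F_n(0)\bigr|+\sum_{|e|\le1}\int_0^{|t|}\bar D_e\,\bigl|\frac{\D}{\D z}F_{n+e}(s)\bigr|\,ds$ with entrywise nonnegative constant matrices $\bar D_e$ whose sum $\bar D=\sum_{|e|\le1}\bar D_e$ has zero diagonal, $(1,2)$-entry $2C^2$, and $(2,1)$-entry $2$, where $C$ is the constant in the statement.

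Next I would iterate this inequality exactly as in \eqref{eq:bd2}: the $k$-th summand carries the factor $|t|^k/k!$ and the matrix power $\bar D^k$ applied to $\bigl|\frac{\D}{\D z}F_m(0)\bigr|\in\{\binom{1}{0},\binom{0}{1}\}$, while the Kronecker deltas arising from the supports force the series to begin at $k=|n-m|$; the remainder of the finite iterations vanishes because $\frac{\D}{\D z}F_n(t)$ is continuous, hence bounded on compact time intervals. The matrix $\bar D$ has eigenvalues $\lambda_\pm=\pm2C$ with eigenvectors $\binom{\lambda_\pm}{2}$; expanding $\binom{1}{0}$ and $\binom{0}{1}$ in this basis and taking the $\ell^\infty$-norm on $\R^2$ yields $\bigl\|\frac{\D}{\D z}F_n(t)\bigr\|_\infty\le C\sum_{k\ge|n-m|}\frac{(2C|t|)^k}{k!}$. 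At this point the Stirling estimate \eqref{eq:stir} of Theorem~\ref{thm:toda_solest} applies verbatim, with $\|L(0)\|_2$ there replaced by $2C$ here: splitting according to whether or not $2C|t|\le|n-m|e^{-(\mu+1)}$ converts the exponential tail into $C\,e^{-\mu(|n-m|-v|t|)}$ with $v=2C\bigl(e^{\mu+1}+\frac{1}{\mu}\bigr)$, which is the asserted bound.

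The only genuinely new point compared with Theorem~\ref{thm:toda_solest} --- and the step I expect to need the most care --- is the matrix bound. There the products $a_n$ entering $D_{n,e}$ are absorbed by $\|L(0)\|_2$, whereas here one must first invoke Theorem~\ref{Lemma:Gerald} to confine the orbit to a compact set, then use $V\in C^2$ to bound $V''(r_n(t))$ uniformly in $(t,n)$, and finally relate this bound to $\sup_{t,n}|V'(r_n(t))|$ so that the advertised $C$ emerges. Everything after $\bar D$ is a transcription of the Toda argument.
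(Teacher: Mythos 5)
Your overall scheme is exactly the paper's: rewrite \eqref{ghs} as a Volterra equation, differentiate in the initial datum (justified via Lemma 4.1.9 of \cite{Abraham} and Theorem~\ref{Lemma:Gerald}), dominate the coefficient matrices $D_{n,e}(s)$ by constant matrices using the a priori boundedness of the orbit, iterate as in \eqref{eq:bd2}, do the $2\times 2$ eigenvalue analysis (eigenvalues $\pm 2C$, prefactor $C$), and finish with the Stirling step \eqref{eq:stir} from Theorem~\ref{thm:toda_solest}. Up to the reordering of the components of $F_n$ (and the slip of calling the $p$-equation the ``linear'' one), that is precisely how the paper argues, and your chain-rule computation, which puts $V''(r_{n+e}(s))$ into the coefficient matrix, is the correct one; the matrix displayed in the paper's proof, which carries $V'$ and thereby matches the advertised constant, is evidently a slip at that point.

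The genuine gap is the step you yourself flagged: to recover the stated $C=\max\bigl(\sup_{(t,n)}|V'(r_n(t))|^{1/2},1\bigr)$ you need $\sup_{(t,n)}|V''(r_n(t))|\le C^2$, and your justification does not give this. For the Toda potential the identity $V''=1-V'$ only yields $|V''|\le 1+\sup|V'|\le 2C^2$, and the inequality $\sup|V''|\le\max(\sup|V'|,1)$ can actually fail: at $r=-\ln 2$ one has $V''(r)=2$ while $|V'(r)|=1$, so an orbit confined to $[-\ln 2,\infty)$ that comes near $r=-\ln 2$ has $\sup|V''|$ close to $2$ but $\max(\sup|V'|,1)=1$. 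More importantly, the theorem is asserted for every confining $V\in C^2$ admitted by Theorem~\ref{Lemma:Gerald}, for which no identity linking $V''$ to $V'$ is available; boundedness of the orbit (plus $V\in C^2$) only tells you that $\sup_{(t,n)}|V''(r_n(t))|$ is finite. Consequently, as written your argument proves the estimate with $C$ replaced by $\max\bigl(\sup_{(t,n)}|V''(r_n(t))|^{1/2},1\bigr)$ and $v=2C\bigl(e^{\mu+1}+\tfrac{1}{\mu}\bigr)$ built from that constant --- which is what the paper's method actually delivers once the $V'$ versus $V''$ slip is corrected --- but it does not yield the literal constants in the statement, and the entry $2C^2$ you assign to $\bar D$ is unjustified. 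Everything after that point (the iteration, the eigenvector expansion, the tail-of-the-exponential estimate) is a faithful transcription of the paper's proof and is fine.
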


\begin{proof}
Fix $\x \in\mathcal{X}$. Our previous theorem guarantees that for each $\x \in M$ and $n \in \mathbb{Z}$, the
function $F_n : \mathbb{R} \to \mathbb{R}^2$ given by
\be
F_n(t; \x)=\begin{pmatrix} r_n(t) \\ p_n(t) \end{pmatrix} \,
\ee
is well-defined and differentiable with respect to each $z \in \{ p_m, r_m \}$.
When convenient, we will suppress the dependence of $F_n$ on $\x$.
Using the equations of motion, i.e.\ \eqref{ghs}, it is clear that
\be
F_n(t)= F_n(0)+\int_0^t \begin{pmatrix}  p_{n+1}(s) - p_n(s) \\  V'(r_n(s)) - V'(r_{n-1}(s)) \end{pmatrix} \, ds \, .
\ee
Differentiating with respect to $z$ we obtain
\be
\fr{\D}{ \D z} F_n(t) = \fr{\D}{ \D z} F_n(0) +\sum_{|e|\leq 1}\int_0^t D_{n,e}(s) \fr{\D}{ \D z} F_{n+e}(s) \, ds,
\ee
with 
\be
D_{n,e}(s) = \begin{pmatrix}0 & \delta_1(e)-\delta_0(e) \\ V'(r_n) \delta_0(e) - V'(r_{n-1}) \delta_{-1}(e) & 0  \end{pmatrix} \, .
\ee
Hence
\be
\left|\fr{\D}{ \D z} F_n(t)\right| \leq \left|\fr{\D}{ \D z} F_n(0)\right| + \sum_{|e|\leq 1}\int_0^{|t|} D_{e} \left|\fr{\D}{ \D z} F_{n+e}(s)\right| ds,
\ee
where 
\be
D_e = \begin{pmatrix} 0 & \delta_1(e)+\delta_0(e) \\ C^2 (\delta_0(e) + \delta_{-1}(e)) & 0  \end{pmatrix} \, .
\ee

Let us now consider the case that $z = r_m$, i.e.,
\be
\frac{ \partial}{\partial r_m}F_n(0) = \begin{pmatrix} 1 \\ 0 \end{pmatrix} \delta_m(n) \, .
\ee
In this case, iteration yields
\bea
\left|\fr{\D}{ \D r_m} F_n(t)\right| &\leq& \sum_{k=0}^\infty  \fr{|t|^k}{k!} \sum_{|e_1|\leq 1} \cdots \sum_{|e_k| \leq 1} \delta_{m+e_1+\cdots + e_k}(n) D_{e_1}\cdots D_{e_k} \begin{pmatrix} 1 \\ 0 \end{pmatrix} \\ \nonumber
&\le&\sum_{k=|n-m|}^\infty \fr{ |2 C t|^k}{k!} D^k \begin{pmatrix} 1 \\ 0 \end{pmatrix} \, ,
\eea
where we have set
\be
D = \frac{1}{2C} \sum_{|e| \leq 1} D_e =  \begin{pmatrix} 0 & C^{-1} \\  C & 0  \end{pmatrix}.
\ee
Again, convergence is guaranteed since $\fr{\D}{ \D z} F_n(t)$ is continuous and thus bounded on compact time intervals.
Using $D^2= \idty$, one obtains
\be
\left\| \fr{\D}{\D r_m}F_n(t) \right\|_{\infty} \leq C \sum_{k=|n-m|}^\infty\fr{|2C t|^k}{k!}
\ee
The rest follows as in Theorem~\ref{thm:toda_solest}.
\end{proof}

Note that in Flaschka variables \eqref{def:ab} the equations of motion read
\be \label{eq:toda_2}
\dot{a}_n(t) = a_n(t) \left( b_{n+1}(t) - b_n(t) \right) \quad \mbox{and} \quad 
\dot{b}_n(t) = -\frac{1}{2}\left(V'(- \ln(4a_n^2)) - V'(- \ln(4a_{n-1}^2)) \right)\, ,
\ee
and $(p,r)$ will be bounded if and only if $(a,a^{-1},b)$ are bounded. The fixed point in these new coordinates
is $(a_0,b_0)=(\frac{1}{2},0)$ and $(p,r)\in \mathcal{X}$ if and only if $(a,b)-(a_0,b_0)=(a-\frac{1}{2},b) \in \mathcal{X}$.

Finally, by
\be
a_n(t) = a_n(0) \exp\left(\int_0^t (b_{n+1}(s) - b_n(s)) ds \right)
\ee
the sign of $a_n$ is preserved under the flow.

\baselineskip=12pt


\begin{thebibliography}{99}

\bibitem{Abraham}
R. ~Abraham, J.E. ~Marsden, and T. ~Ratiu.
\newblock {\em Manifolds, Tensor Analysis, and Applications}.
\newblock 2nd edition, Springer, New York, 1983.

\bibitem{amour2009} 
L.~Amour, P.~Levy-Bruhl, and J.~Nourrigat.
\newblock {\em Dynamics and Lieb--Robinson Estimates for lattices of interacting anharmonic oscillators}.
\newblock Colloq. Math. {\bf 118} 609-648 (2010).

\bibitem{bachmann2011} 
S.~Bachmann, S.~Michalakis, B.~Nachtergaele, and R.~Sims.
\newblock {\em Automorphic equivalence within gapped phases of quantum lattice systems}.
\newblock Comm. Math. Phys. {\bf 309}, 835--871 (2012).

\bibitem{bonetto2000}
F.~Bonetto, J.L.~Lebowitz, and L.~Rey-Bellet.
\newblock {\em Fourier's law: a Challenge to theorists}.
\newblock in A. Fokas et al. (Eds), Mathematical Physics 2000,
Imperial College Press, London, 2000, pp 128--150.

\bibitem{borovyk}
V.~Borovyk and R.~Sims.
\newblock {\em Dispersive estimates for harmonic oscillator systems}.
\newblock J. Math. Phys. {\bf 53}, 013302 (2012).

\bibitem{bravyi:2010b}
S.~Bravyi and M.~Hastings. 
\newblock {\em A short proof of stability of topological order under local perturbations}.
\newblock Comm. Math. Phys. {\bf 307}, 609--627 (2011).

\bibitem{bravyi:2010a}
S.~Bravyi, M~Hastings, and S.~Michalakis.
\newblock {\em Topological quantum order: stability under local perturbations}.
\newblock J. Math. Phys. {\bf 51}, 093512 (2010).

\bibitem{Bulla}
W.~Bulla, F.~Gesztesy, H.~Holden, and G.~Teschl.
\newblock {\em Algebro-geometric quasi-periodic finite-gap solutions of the Toda and Kac-van Moerbeke hierarchies}.
\newblock Memoirs of the Amer. Math. Soc. {\bf 135/641}, 1998.

\bibitem{butta2007}
P.~Butt{\`a}, E.~Caglioti, S.~Di Ruzza, and C.~Marchioro.
\newblock {\em On the propagation of a perturbation in an anharmonic system}.
\newblock J. Stat. Phys. {\bf 127}, 313 (2007).

\bibitem{eisert2008}
M.~Cramer, A.~Serafini, and J.~Eisert.
\newblock {\em Locality of dynamics in general harmonic quantum systems}.
\newblock Quantum Information and Many Body Quantum Systems. 
Marie Ericsson and Simone Montangero (Eds), 
Edizioni della Normale Pisa, (2008).

\bibitem{fad} L.~Faddeev and L.~Takhtajan.
\newblock {\em Hamiltonian Methods in the Theory of Solitons}.
\newblock Springer, Berlin, 1987.

\bibitem{Flaschka1} 
H.~Flaschka.
\newblock {\em The Toda lattice. I}.
\newblock Phys. Rev. B {\bf 9}, 1924--1925 (1974).

\bibitem{Flaschka2} 
H.~Flaschka.
\newblock {\em The Toda lattice. II}.
\newblock Progr. Theoret. Phys. {\bf 51}, 703--716 (1974).

\bibitem{GHT}
F.~Gesztesy, H.~Holden, and G.~Teschl.
\newblock {\em The algebro-geometric Toda hierarchy initial value problem for complex-valued initial data}.
\newblock Rev. Mat. Iberoamericana {\bf 24}, 117--182 (2008).

\bibitem{GHMT}
F.~Gesztesy, H.~Holden, J.~Michor, and G.~Teschl.
\newblock {\em {S}oliton {E}quations and {T}heir
         {A}lgebro-{G}eometric {S}olutions. Volume {II}: $(1+1)$-Dimensional Discrete Models}.
\newblock Cambridge Studies in Advanced Mathematics, Vol.\ 114, Cambridge
University Press, Cambridge, 2008.

\bibitem{hamza}
E.~Hamza, R.~Sims, and G.~Stolz.
\newblock {\em Dynamical localization in disordered quantum spin systems}.
\newblock e-print \arxiv{1108.3811} (2011) to appear in Commun. Math. Phys.

\bibitem{hastings2004a}
M.~Hastings. 
\newblock {\em Lieb-Schultz-Mattis in higher dimensions}.
\newblock Phys. Rev. B {\bf 69}, 104431 (2004).

\bibitem{hastings2007}
M.~Hastings. 
\newblock {\em An area law for one dimensional quantum systems}.
\newblock JSTAT, P08024 (2007).

\bibitem{hastings2010}
M.~Hastings. 
\newblock {\em Locality in Quantum Systems}.
\newblock e-print \arxiv{1008.5137} (2010).

\bibitem{hast2006} 
M.~Hastings and T.~Koma.
\newblock {\em Spectral gap and exponential decay of correlations}.
\newblock Commun. Math. Phys. {\bf 265}, 781--804 (2006).

\bibitem{HS}
M.~Hastings and S.~Michalakis.
\newblock {\em Quantization of hall conductance for interacting electrons without averaging assumptions}.
\newblock e-print \arxiv{0911.4706}.

\bibitem{KruegerTeschl1}
H.~Kr\"uger and G.~Teschl.
\newblock {\em Long-time asymptotics for the Toda lattice for decaying initial data revisited}.
\newblock Rev. Math. Phys. {\bf 21}, 61--109 (2009).

\bibitem{KruegerTeschl2}
H.~Kr\"uger and G.~Teschl.
\newblock {\em Unique continuation for discrete nonlinear wave equations}.
\newblock Proc. Amer. Math. Soc. {\bf 140}, 1321--1330 (2012).

\bibitem{lieb1972} 
E.H.~Lieb and D.W.~Robinson.
\newblock {\em The finite group velocity of quantum spin systems}.
\newblock Comm. Math. Phys. {\bf 28}, 251--257 (1972).

\bibitem{manakov1975}
S.~.V.~Manakov.
\newblock {\em Complete integrability and stochasitzation of discrete dynamical systems}.
\newblock Soviet Phys. JETP {\bf 40},269--274 (1975).

\bibitem{marchioro1978}
C.~Marchioro, A.~Pellegrinotti, M.~Pulvirenti, and L.~Triolo.
\newblock {\em Velocity of a perturbation in infinite lattice systems}.
\newblock J. Statist. Phys. {\bf 19}, 499--510 (1978).

\bibitem{mite2009}
J.~Michor and G.~Teschl.
\newblock {\em On the equivalence of different Lax pairs for the Kac-van Moerbeke hierarchy}.
\newblock in Modern Analysis and Applications, V. Adamyan (ed.) et al., 445--453, Oper. Theory Adv. Appl. 191, Birkh\"auser, Basel, 2009.

\bibitem{nach22006} 
B.~Nachtergaele, Y.~Ogata, and R.~Sims.
\newblock {\em  Propagation of correlations in quantum lattice systems}.
\newblock J. Stat. Phys. \textbf{124}, 1--13 (2006).

\bibitem{nachtergaele2009} 
B.~Nachtergaele, H.~Raz, B.~Schlein, and R.~Sims.
\newblock {\em Lieb--Robinson bounds for harmonic and anharmonic lattice systems}.
\newblock Commun. Math. Phys. \textbf{286}, 1073--1098 (2009). 

\bibitem{nachtergaele2010} 
B.~Nachtergaele, B.~Schlein, R.~Sims, S.~Starr, and V.~Zagrebnov.
\newblock {\em On the existence of the dynamics for anharmonic quantum oscillator systems}.
\newblock Rev. Math. Phys. \textbf{22}, 207--231 (2010). 

\bibitem{nach12006} 
B.~Nachtergaele and R.~Sims.
\newblock {\em Lieb--Robinson bounds and the exponential clustering theorem}.
\newblock Commun. Math. Phys.  \textbf{265}, 119--130 (2006).   

\bibitem{NS2} 
B.~Nachtergaele and R.~Sims.
\newblock {\em A multi-dimensional Lieb--Schultz--Mattis theorem}.
\newblock Commun. Math. Phys. {\bf 276}, 437--472 (2007).

\bibitem{NS3}
B.~Nachtergaele and R.~Sims.
\newblock {\em Lieb--Robinson bounds in quantum many-body physics}.
\newblock Entropy and the Quantum (Tucson, AZ, 2009), 141--176, 
\newblock Contemp. Math. {\bf 529}, Amer. Math. Soc., Providence, RI, (2010).


\bibitem{nach2011}
B.~Nachtergaele, A.~Vershynina, and V.~Zagrebnov.
\newblock {\em Lieb--Robinson bounds and existence of the thermodynamic limit for a class of irreversible quantum dynamics}.
\newblock Arizona School of Analysis with Applications (Tucson, AZ, 2010), 161--175
\newblock Contemp. Math. {\bf 552}, Amer. Math. Soc., Providence, RI, (2011).

\bibitem{dlmf}
F.W.J.~Olver et al. (eds.).
\newblock {\em NIST Handbook of Mathematical Functions}.
\newblock Cambridge University Press, Cambridge, 2010.

\bibitem{Pou10}
D.~Poulin.
\newblock {\em Lieb--Robinson bound and locality for general Markovian quantum dynamics}.
\newblock Phys. Rev. Lett. {\bf 104}, 190401 (2010).

\bibitem{PS09}
I.~Pr\'emont-Schwarz, A.~Hamma, I.~Klich, and F.~Markopoulou-Kalamara. 
\newblock {\em Lieb--Robinson bounds for commutator-bounded operators}.
\newblock Phys. Rev. A.  {\bf 81}, (2010).

\bibitem{raz2009} 
H.~Raz and R.~Sims.
\newblock {\em Estimating the Lieb--Robinson velocity for classical anharmonic lattice systems}.
\newblock J. Stat. Phys. {\bf 137}, 79--108 (2009). 

\bibitem{schuch2010}
N.~Schuch, S.~Harrison, T.~Osborne, and J.~Eisert.
\newblock {\em Information propagation for interacting particle systems}.
\newblock Phys. Rev. A {\bf 84}, 032309 (2011).

\bibitem{Te}
G.~Teschl.
\newblock {\em Jacobi Operators and Completely Integrable Nonlinear Lattices}.
\newblock Mathematical Surveys and Monographs {\bf 72},
\newblock Amer. Math. Soc, Providence, 2000.

\bibitem{Teschl1}
G.~Teschl.
\newblock {\em Almost everything you always wanted to know about the Toda equation}.
\newblock Jahresber. Deutsch. Math.-Verein. {\bf 103}, 149--162 (2001).

\bibitem{Teschl2} 
G.~Teschl.
\newblock {\em On the spatial asymptotics of solutions of the Toda lattice}.
\newblock Discrete Contin. Dyn. Syst. {\bf 27}, 1233--1239 (2010).

\bibitem{ta} M.~Toda,
\newblock {\em Theory of Nonlinear Lattices}.
\newblock 2nd enl. ed., Springer, Berlin, 1989.

\bibitem{zakr} N.~J.~Zabusky and M.~D.~Kruskal.
\newblock  {\em Interaction of solitons in a collisionless plasma and the recurrence of initial states}.
\newblock Phys. Rev. Lett. {\bf 15}, 240--243 (1965).


\end{thebibliography}
\end{document}